\newcolumntype{Y}{>{\centering\arraybackslash}X}
\newtheorem{thm}{Theorem}
\newtheorem{lem}{Lemma}
\newtheorem{remark}{Remark}
\newtheorem{assu}{Assumption}
\numberwithin{equation}{section}
\def\ba{\begin{align*}}
\def\ea{\end{align*}}
\def\bao{\begin{align}}
\def\eao{\end{align}}
\def\begine{\begin{enumerate}}
\def\ende{\end{enumerate}}
\def\be{\begin{equation}}
\def\ee{\end{equation}}
\DeclareMathOperator*{\argmin}{argmin}
\newcommand{\bbY}{{\boldsymbol Y}}
\newcommand{\bby}{{\boldsymbol y}}
\newcommand{\bbA}{{\boldsymbol A}}
\newcommand{\bba}{{\boldsymbol a}}
\newcommand{\bbB}{{\boldsymbol B}}
\newcommand{\bbC}{{\boldsymbol C}}
\newcommand{\bbD}{{\boldsymbol D}}
\newcommand{\bbd}{{\boldsymbol d}}
\newcommand{\bbE}{{\boldsymbol E}}
\newcommand{\bbb}{{\boldsymbol b}}
\newcommand{\bbF}{{\boldsymbol F}}
\newcommand{\bbG}{{\boldsymbol G}}
\newcommand{\bbH}{{\boldsymbol H}}
\newcommand{\bbI}{{\boldsymbol I}}
\newcommand{\bbJ}{{\boldsymbol J}}
\newcommand{\bbk}{{\boldsymbol k}}
\newcommand{\bbK}{{\boldsymbol K}}
\newcommand{\bbM}{{\boldsymbol M}}
\newcommand{\bbm}{{\boldsymbol m}}
\newcommand{\bbL}{{\boldsymbol L}}
\newcommand{\bbQ}{{\boldsymbol Q}}
\newcommand{\bbP}{{\boldsymbol P}}
\newcommand{\bbS}{{\boldsymbol S}}
\newcommand{\bbu}{{\boldsymbol u}}
\newcommand{\Sig}{\boldsymbol{\Sigma}}
\newcommand{\Lam}{\boldsymbol{\Lambda}}
\newcommand{\bmu}{\boldsymbol{\mu}}
\newcommand{\bvar}{\boldsymbol{\varepsilon}}
\newcommand{\stT}{\sum_{t=1}^T}
\newcommand{\RN}[1]{%
  \textup{\uppercase\expandafter{\romannumeral#1}}%
}
\begin{document}

%textwidth in inches: \printinunitsof{in}\prntlen{\textwidth}

%\def\spacingset#1{\renewcommand{\baselinestretch}%
%{#1}\small\normalsize} \spacingset{1}

\title{A Forecast-driven Hierarchical Factor Model with Application to Mortality Data} 
\author[1]{Lingyu He}
\author[2]{Fei Huang}
\author[3]{Yanrong Yang\thanks{All coauthors have equal contribution to this paper. \\
Correspondence to: Yanrong Yang, College of Business and Economics, The Australian National University, Australia. Email: yanrong.yang@anu.edu.au}}
\affil[1]{Hunan University, China}
\affil[2]{UNSW Sydney, Australia}
\affil[3]{The Australian National University, Australia}
\maketitle

\begin{abstract}
	Mortality forecasting plays a pivotal role in insurance and financial risk management of life insurers, pension funds, and social securities. Mortality data is  usually high-dimensional in nature and favors factor model approaches to modelling and forecasting. This
	paper introduces a new forecast-driven hierarchical factor model (FHFM) customized for mortality forecasting.
	Compared to existing models, which only capture the cross-sectional variation or  time-serial dependence in the dimension reduction step, the new model captures both features  efficiently under
	a hierarchical structure, and provides insights into the  understanding of dynamic variation
	of mortality patterns over time. By comparing with static PCA utilized in \citet{lee1992modeling}, dynamic PCA introduced in \citet{LYB2011}, as well as other existing mortality modelling methods, we find that this approach
	provides both better estimation results and superior out-of-sample forecasting performance.
	Simulation studies further  illustrate the advantages of the proposed model
	based on different data structures. Finally, empirical studies using the US mortality data demonstrate the implications and significance of this new model in  life expectancy forecasting and life annuities pricing.

\end{abstract}

\noindent%
{\it Keywords:} Hierarchical factor model; high dimensional time series; life expectancy; mortality forecasting.

%\spacingset{1} % DON'T change the spacing!

\section{Introduction}
The age-specific human mortality data consists of observations on either the death numbers or the death rates of a population under each age, measured for each historical year. Accurate forecasting of mortality data plays a crucial role in insurance and financial risk management of life insurers, pension funds, and social securities. For instance, the life expectancy of policyholders and the present values of life annuities are highly related to the future mortality rates. According to the life table published by \citet{perlifetable}, from $2016$ to $2095$, the life expectancy, which is the average remaining years of life, for a male aged $66$ in the US will rise from $17.2$ to $21.7$ years. Meanwhile, the present value (premium) of the corresponding life annuity, which pays annuities beginning from the year of age $66$ until death, will change from $\$13.94$ to $\$16.70$ per $\$1$ annual payment. Even a small  change in mortality forecasting may have  dramatic impact financially  on insurance companies and social security.  Therefore, a better mortality forecasting method, which guarantees more accurate estimations of life expectancy and premiums        of life annuity, is crucial for risk management and financial planning.

This paper aims to model and forecast the age-specific mortality data of the US population from the Human Mortality Database (HMD) (\citenum{HMD}). After preprocessing, the annual age-specific death rates under study consist of a matrix data with $84$ yearly observations (1933-2018) for $91$ ages ($0-90+$). Modelling and forecasting mortality data pose a challenge for traditional statistical analysis and multivariate time series analysis, as the dimension $91$ is comparable to the sample size (or time length) $86$. This high dimensional setting incurs the curse of dimensionality. Dimension reduction is a remedy method that extracts representative features or patterns of available high dimensional data. Statistical analysis on extracted features and recovery of corresponding inference on original data are common techniques in high dimensional data analysis. However, forecast-driven feature selection and statistical inference are rarely studied in high dimensional data analysis. This paper contributes to seeking forecast-driven linear features of mortality data by proposing a hierarchical factor model. Roughly speaking, a linear feature is a linear combination of annual death rates over the total $91$ ages, which is a univariate time series that summarizes the $91$-dimensional time series linearly. 
Before introducing the formal statistical model, we first analyze the US mortality data and interpret the features in pursuit intuitively.  

\begin{table}[!htbp] \centering 
  \small
  \caption{The Log Central Death Rates of the US} 
  \label{logdr} 
\begin{tabularx}{\textwidth}{c *{8}{Y}} 
\toprule
& \multicolumn{5}{c}{Historical data} 
& \multicolumn{3}{c}{Forecasts}\\ 
\cmidrule(lr){2-6} \cmidrule(l){7-9}
& 1933 & 1934 & 1935 & \dots & 2018 & 2019 & 2020 & \dots \\ 
\midrule 
0 & $-2.792$ & $-2.681$ & $-2.789$ & \dots & \dots &  ?& ?& ?\\ 
1 & $-4.661$ & $-4.551$ & $-4.720$ & \dots & \dots &  ?& ?& ?\\ 
2 & $-5.437$ & $-5.328$ & $-5.486$ & \dots & \dots &  ?& ?& ?\\ 
3 & $-5.775$ & $-5.735$ & $-5.816$ & \dots & \dots &  ?& ?& ?\\ 
4 & $-6.038$ & $-6.011$ & $-6.031$ & \dots & \dots &  ?& ?& ?\\ 
5 & $-6.227$ & $-6.200$ & $-6.210$ & \dots & \dots &  ?& ?& ?\\
\dots & \dots  & \dots & \dots  & \dots & \dots &  ?& ?& ?\\ 
90+ & \dots  & \dots & \dots  & \dots & \dots & ?& ?& ?\\
\bottomrule 
\end{tabularx} 
\end{table} 

We consider the logarithms of the death rates, because this transformation makes the positive-valued original data spread over total real-value set without loss of generality \citep{booth2008mortality}. Table \ref{logdr} shows the structure of the historical log death rates as well as the purpose of forecasting. It demonstrates a classical problem: modelling and forecasting a high dimensional time series. Figure \ref{logdr:a} illustrates the time-serial trend for each age; while Figure \ref{logdr:b} exhibits the age structure of mortality data at different years. From these plots, we can see that the death rates are generally decreasing over the years for most ages; and mortality rates at different ages has strong relations, especially consecutive ages. 
These characteristics  motivate us designing customized dimension reduction methods for mortality forecasting. More specifically, we can extract linear features that capture both the time-serial trend and  cross-sectional variations, which represent the  two most important characteristics of mortality data. 
In fact, the cross-sectional features help improve model fitting; while the time-serial features can enhance forecasting accuracy.

\begin{figure}
\begin{minipage}[t]{0.49\textwidth}
\includegraphics[width=\linewidth]{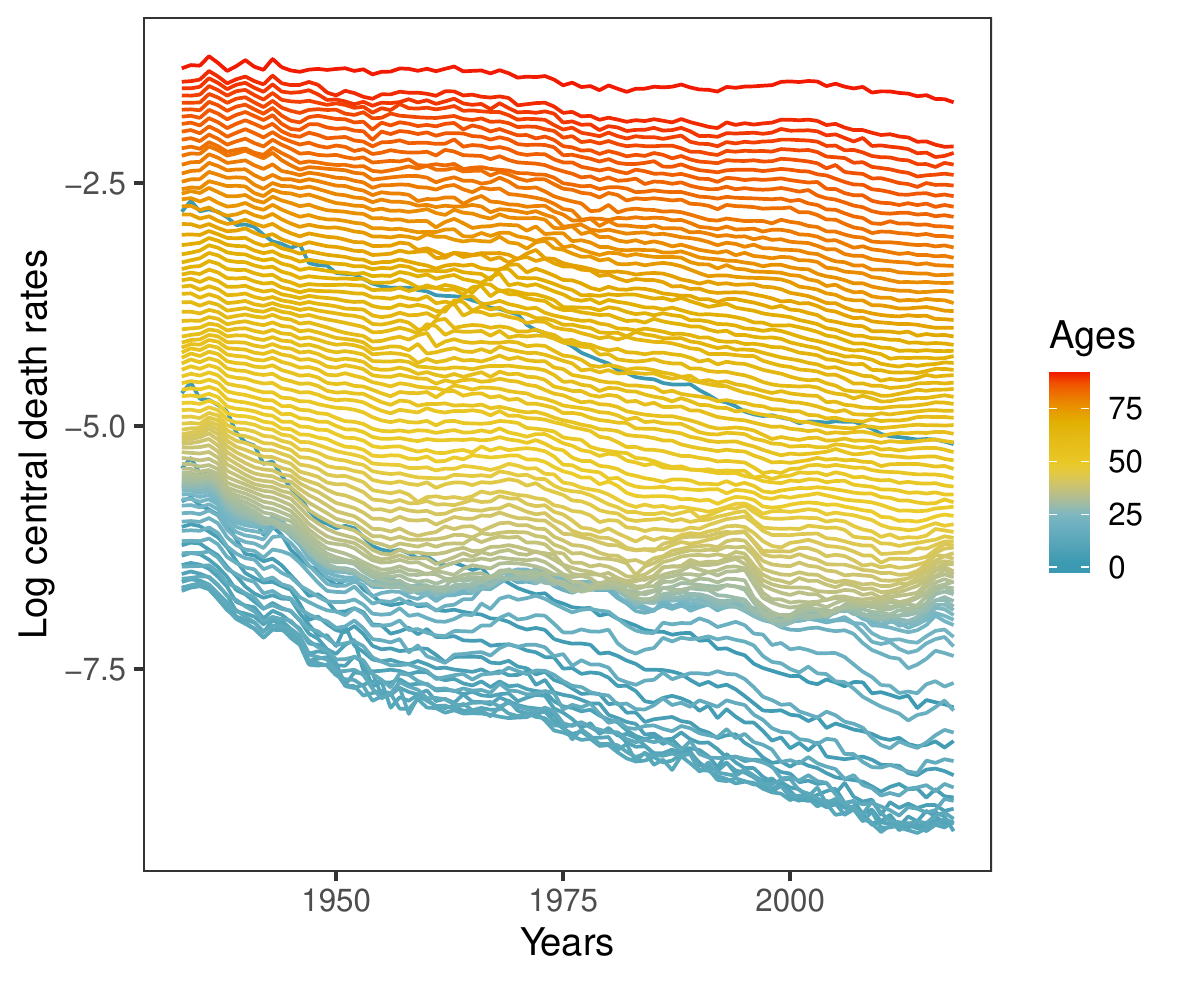}
\caption{The Log Central Death Rates, years $1933-2018$ for ages $0$ to $90+$.}
\label{logdr:a}
\end{minipage}
\hspace*{0.01cm}
\begin{minipage}[t]{0.49\textwidth}
\includegraphics[width=\linewidth]{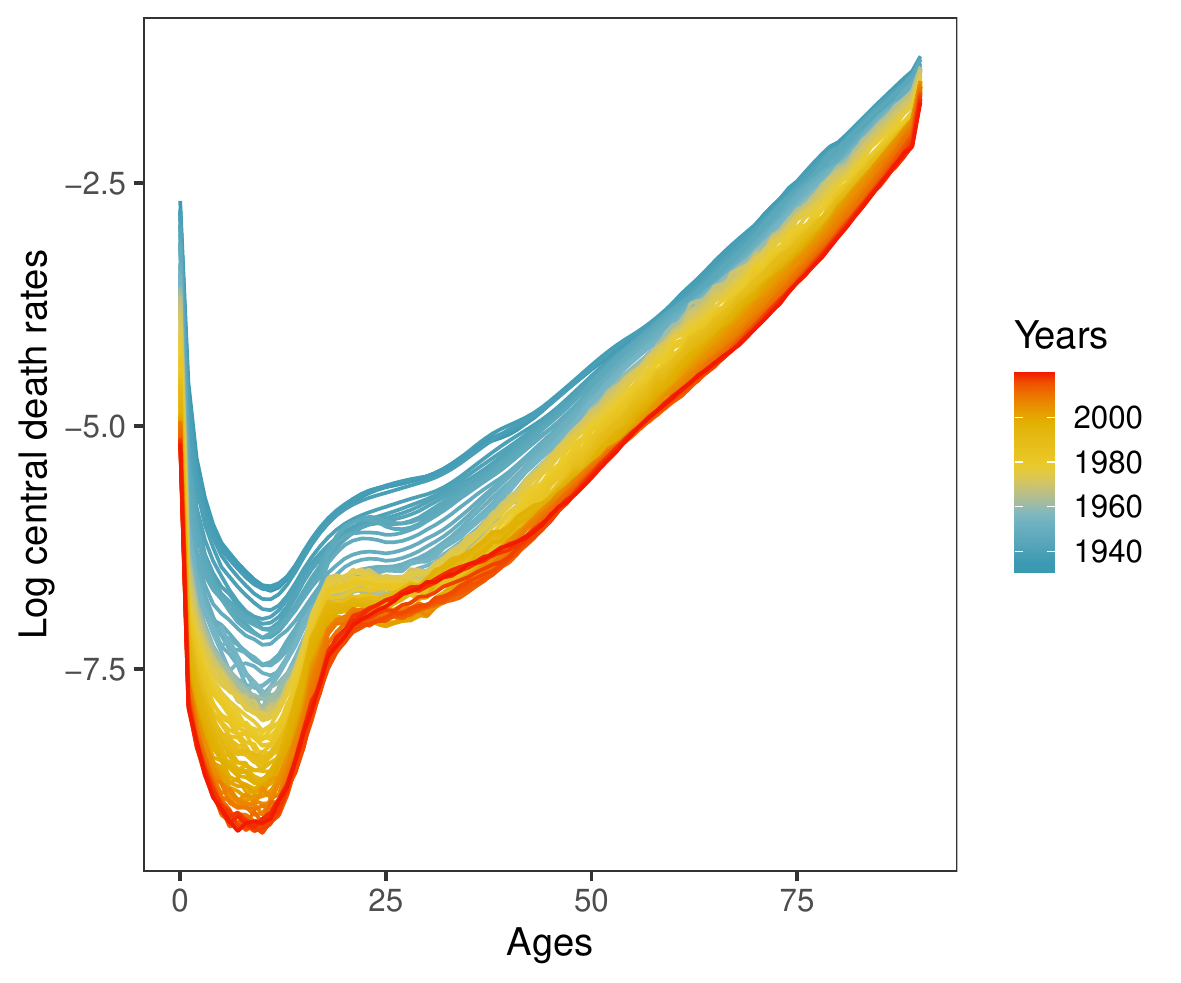}
\caption{The Log Central Death Rates, ages $0$ to $90+$ for years $1933-2018$.}
\label{logdr:b}
\end{minipage}
\end{figure}

There has been a large body of literature studying mortality modelling and forecasting, see a detailed review in \citet{booth2008mortality}. One seminal paper on the US mortality forecasting is \citet{lee1992modeling}. The Lee-Carter model is one of the most prominent methods for mortality forecasting, which has been used by the US Bureau of the Census as the benchmark model \citep{hollmann1999methodology}.  Later, many papers, such as \cite{booth2002applying}, \cite{renshaw2003lee}, \cite{hyndman2007robust}, \cite{yang2010modeling}, and \cite{HE202114} extended the Lee-Carter framework for mortality modeling in different countries. However, the feature extraction (dimension reduction) of existing models is usually not driven by forecast accuracy, which may not lead to optimal forecasting performance, consequently.  To this end,  this paper proposes a forecast-driven hierarchical factor model (FHFM) customized for mortality forecasting. It is noteworthy that hierarchical factor model appears in literature with various purposes (for example, see \cite{MNP2013}), which are different from the aim of optimal dimension reduction for forecasting in this paper.

%captures both the time-serial trend and  cross-sectional variations in the dimension reduction step. The time-serial trend is captured based on dynamic PCA \citep{LYB2011, chang2018principal}, while the cross-sectional variations are captured using the static PCA.  Estimation of this new model is carried out by a two-step eigenanalysis for an auto-covariance matrix and a covariance matrix, respectively. 

As an instance of high dimensional time series,  mortality data faces both the curse of dimensionality and challenges in time series forecasting. We apply factor analysis (or dimension reduction) to overcome the challenges, which can extract common features (including both time-serial trends and cross-sectional variations) among all cross-sections (ages). The Lee-Carter model can be regarded as applying the traditional dimension-reduction method (static PCA) \citep{Anderson2003, MR2036084} to pursue common features that retain the largest variations of mortality data.  As a popular dimension reduction technique, PCA can be traced to that of \citet{Anderson2003, MR2036084}. \cite{HJLTZ2021} proposed a new scaled PCA approach which scales each predictor with its predictive slope on the target to be forecast and then improves forecasting results.  
Different from the static PCA, several papers search for common features that drive the time-serial dependence of the original high-dimensional time series. \citet{BL1975} and \citet{HKH2015} extended the static PCA to dynamic PCA, which extract features from a Fourier transformation on covariance and auto-covariances with different time-lags. \citet{LYB2011} and \citet{chang2018principal} extracted dynamic features by assembling auto-covariances in another way while excluding the covariance. Different from the existing factor analysis which focuses on either the static features or dynamic features, the proposed FHFM combines the dynamic and static features together via a two-step procedure, with the first step extracting factors with the most predictability and the second step extracting factors capturing the largest variations. Theoretically, we show that these two types of factors are both indispensable in producing optimal forecasting for mortality data. 

The factor-based forecasting procedure and error analysis are illustrated in Figure \ref{illustration}.  The original mortality data $\{y_t\}$ ($t=1,2, \dots, T$) is modelled by two types of factors $\bbk_t^{(1)}$ and $\bbk_t^{(2)}$, where $\bbk_t^{(1)}$ represents the features with the maximum predictability and $\bbk_t^{(2)}$ represents the features that capture the largest remaining variations. $A$ and $B$ are coefficient matrices. Time series models are applied to the extracted factors to forecast $\hat{\bbk}_{T+1}^{(1)}$ and $\hat{\bbk}_{T+1}^{(2)}$, which are then plugged in the estimated factor model to obtain the predicted mortality rate at $T+1$, $\hat{y}_{T+1}$. In doing so, we decompose the mortality data into three parts: a strong dynamic part driven by low-dimensional factor time series $\{\bbk_t^{(1)}\}$; a weak dynamic but strong variation part represented by another lower-dimensional factor time series $\{\bbk_t^{(2)}\}$; and an error part that is a high dimensional time series with weak serial dependence as well as small variations. In Figure \ref{illustration}, we see that $\textcircled{1}$ indicates the factor modelling process, which contains the model approximation error; $\textcircled{2}$ is model estimation, which contains the model estimation error; $\textcircled{3}$ is factor forecasting, which contains forecasting error; and $\textcircled{4}$ represents  the total mortality data forecasting error, which consists of the other three parts: modelling error $\textcircled{1}$, model estimation error $\textcircled{2}$, and factor forecasting error $\textcircled{3}$. The factors $\bbk_t^{(1)}$ could help to decrease the factor forecasting error $\textcircled{3}$ while the factors $\bbk_t^{(2)}$ is able to minimize the factor modelling error $\textcircled{1}$. 
To maximize the prediction accuracy, our aim is to capture both the common dynamic features and common static features that are most helpful for forecasting. 
\begin{figure} 
\centering
\includegraphics[width=0.8\linewidth]{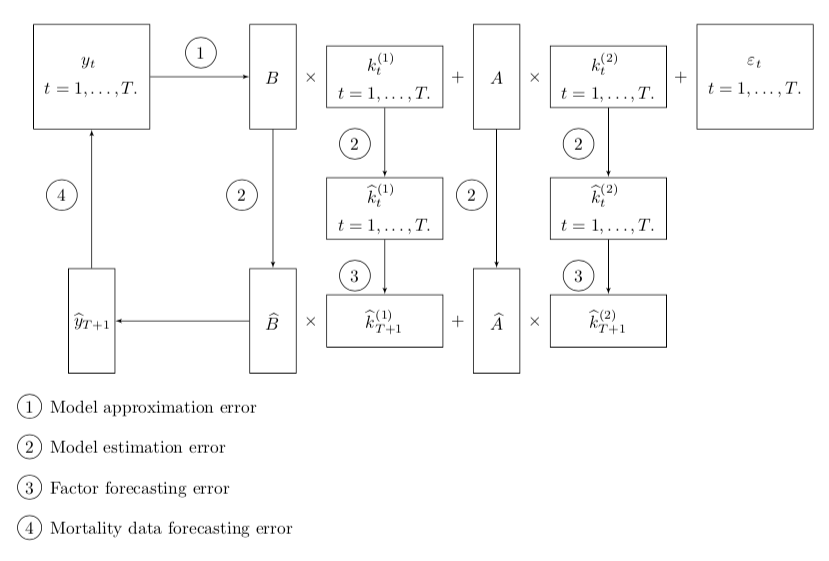}
\caption{Error Analysis for Factor-based Forecasting}
\label{illustration}
\end{figure}

How to obtain the two types of factors $\bbk_t^{(1)}$ and $\bbk_t^{(2)}$?  \cite{BT1977} applied canonical analysis to a stationary multivariate time series to identify and order linear features in terms of their predictability. Using a similar idea, we extract $\bbk_t^{(1)}$ as the  linear feature that explains the most predictability. However, to maximize the prediction accuracy, we also need to make sure the extracted features are a good low-rank approximation of the original mortality data, as $\bbk_t^{(1)}$ alone may not be sufficient to explain the data variation. To this end, we continue to extract the second type of factors ($\bbk_t^{(2)}$) that explains the most remaining variations. Estimation of this hierarchical factor model is carried out by a two-step eigenanalysis for an auto-covariance matrix and a covariance matrix, respectively. Through investigating asymptotic properties of the proposed method, we find that the estimation of the two types of factors have equal rates of convergence. It is noteworthy that the two steps to obtain the factors are both necessary to achieve optimal prediction accuracy. This model is superior compared to factor models based on the static PCA or dynamic PCA alone, which can only capture one of the two features considered in the FHFM and hence may ignore the other information to further reduce the forecasting error.

In this paper, we show that the proposed new method can improve forecasting accuracy compared to the benchmark methods. Moreover, empirical analysis using the US mortality data shows that the pricing error of a life annuity project using the Lee-Carter method is around \$$0.154$ per $\$1$ annual payment; however, the pricing error using the proposed FHFM  is only about $\$0.041$ per $\$1$. Although the difference looks very small, it will actually have significant financial impact on life insurers or social security. To illustrate the financial impact on the industry, we consider an annuity product as an example. Suppose the annual payment of the annuity is $\$20,000$ per person and there are $5,000$ individuals under cover, then the $\$0.113$ per dollar improvement in annuity pricing of the FHFM over Lee-Carter model will account for $\$1.13$ million overall improvement in annuity valuation of the whole group, which is a remarkable improvement. 

The rest of the paper is organized as follows. In Section \ref{dpca:model}, the details of the model are described, including model interpretations, the estimation and forecasting methods. A practical algorithm is provided as well. 
In Section \ref{dpca:relation}, we discuss the similarity and differences of our method compared with static PCA and dynamic PCA methods. 
The asymptotic properties of the proposed method are presented in Section \ref{dpca:asymptotic} and the corresponding proofs are in Appendix \ref{append:cha:stepPCA.2}.
Simulation studies demonstrating the advantages of the new method are presented in Section \ref{dpca:simu} and Appendix \ref{append:cha:stepPCA.1}. In Section \ref{dpca:empirical}, we compare the forecasting performance of the proposed new method with traditional methods using the US age-specific mortality data. The differences in   life expectancy calculations and life annuity prices are also assessed. Finally, the conclusion is presented in Section \ref{dpca:conclu}.

The notations in this paper are summarized here. For an $p \times n$ matrix $\bbC$, we denote its transpose as $\bbC^{\top}$, the square root of the maximum eigenvalue of $\bbC\bbC^{\top}$ as $\|\bbC\|$, and the square root of the smallest nonzero eigenvalue of $\bbC\bbC^{\top}$ as $\|\bbC\|_{\text{min}}$.
For a $k\times k$ matrix $\bbF$, $\lambda_i(\bbF)$ indicates the $i$-th largest eigenvalue of the matrix $\bbF$. For a non-symmetric matrix $\bbS$, we use $\sigma_j\left(\bbS\right)$ to denote the singular value of the matrix $\bbS$, which corresponds to the $j$-th largest eigenvalue of the matrix $\bbS\bbS^{\top}$. $\bbI_p$ represents $p-$dimensional identity matrix. All vectors are column vectors. 
The notation $a\asymp b$ means that $a=O(b)$ and $b=O(a)$.
$\stackrel{i.p.}{\longrightarrow}$ denotes convergence in probability. We use $P, T \rightarrow \infty$ to denote that $P$ and $T$ go to infinity jointly.

%%%%%%%%%%%%%%%%%%%%%%%%%%%%%%%%%%%%%
\section{Model and Estimation}
\label{dpca:model}
Let $\bbm_t = (m_{1,t}, m_{2,t}, \dots, m_{P,t})^{\top}$ be the US age-specific death rates in year $t$, where $m_{p,t}$ is the death rate for age $p$ in year $t$ with $p = 1,2, \dots, P$ and $t = 1,2, \dots, T$. The historical mortality data is available annually from the year $1933$ to the year $2018$ for ages from $0$ to $90+$. For high dimensional time series $\{\bbm_t, t=1, 2, \ldots, T\}$, the time-serial length and the dimension are $T = 86$ and $P = 91$, respectively. We will propose a two-step dimension reduction model on the log transformation of $\bbm_t$ which is denoted by $\bby_t = (\ln (m_{1,t}), \ln (m_{2,t}), \dots, \ln (m_{P,t}))^{\top}$. It is worth noting that building a model is much easier on $\bby_t$ than that on $\bbm_t$ because $\bbm_t$ takes non-negative values.   

In this section, we first introduce the FHFM, then propose a two-step estimation method for the model, and lastly provide the  forecasting procedure.

\subsection{The Forecast-driven Hierachical Factor Model}

As analyzed in last section, death rates for all the $91$ ages possess common features that drive common time-serial trend and common variations, respectively. This leads us to the following FHFM: for any $t=1, 2, \ldots, T \ (T=86)$, 
\begin{eqnarray}
&&\bby_t=\bbB\bbk_t^{(1)}+\bbu_t, \label{dpca:eq:y1}\\
&&\bbu_t=\bbA\bbk_t^{(2)}+\boldsymbol{\varepsilon}_t, \label{dpca:eq:y2} 
\end{eqnarray}
where $\bbk_t^{(1)}=\left(k_{1t}^{(1)}, k_{2t}^{(1)}, \ldots, k_{r_1t}^{(1)}\right)^{\top}$ is an $r_1\times 1$ latent process with $r_1<P$, which represents common temporal trends; $\bbB=(\bbb_1, \bbb_2, \ldots, \bbb_{r_1})$ is a $P\times r_1$ unknown deterministic coefficients matrix; similarly, $\bbk_t^{(2)}=\left(k_{1t}^{(2)}, k_{2t}^{(2)}, \ldots, k_{r_2t}^{(2)}\right)^{\top}$ is an $r_2\times 1$ latent process with $r_2<P$, which indicates common variations among all ages; $\bbA=\left(\bba_1, \bba_2, \ldots, \bba_{r_2}\right)$ is the corresponding $P\times r_2$ unknown deterministic coefficients matrix; and $\boldsymbol{\varepsilon}_t$ is an error component. 

Here we assume $r_1$ and $r_2$ are both unknown positive integers. Once $P$ is much larger than $(r_1+r_2)$, an effective dimension reduction is achieved because the original time series $\bby_t$ is driven by a much lower dimensional time series $\left(\bbk_t^{(1)}, \bbk_t^{(2)}\right)$. We also call model (\ref{dpca:eq:y1}) and (\ref{dpca:eq:y2}) the factor models with $\bbk^{(1)}_t$ and $\bbk^{(2)}_t$ being common factors, respectively. Thus $\bbB$ and $\bbA$ are the corresponding factor loadings, respectively. Factor model is a popular dimension reduction model in high dimensional statistics, which is investigated in huge amounts of literature, including \cite{B2002}, \cite{LYB2011}, and \cite{LY2012}. 

It is noted that FHFM  involves two kinds of common factors $\bbk^{(1)}_t$ and $\bbk^{(2)}_t$, which represent common temporal trends and common variations among all the $p$ ages, respectively. These two kinds of common factors are necessary in producing good forecasting results. 

Because all elements in the model are unknown, including factor loadings and common factors, we should impose identification conditions to make the model well-defined. 
First, we assume that the rank of factor loadings $\bbB$ and $\bbA$ are equal to $r_1$ and $r_2$, respectively. Otherwise, the two parts $\bbB\bbk_t^{(1)}$ and $\bbA\bbk_t^{(2)}$ can be represented in terms of factor models with even lower dimension. Moreover, as factors and factor loadings are all unknown, for any $r_1\times r_1$ invertible matrix $\bbH$, if we substitute the factor model part $\left(\bbB, \bbk_t^{(1)}\right)$ with $\left(\bbB\bbH, \bbH^{-1}\bbk_t^{(1)}\right)$, the term $\bbB\bbk_t^{(1)}$ is unchanged. It is also true for the term $\bbA\bbk_t^{(2)}$. To avoid such matters, we impose the following assumption.  
\begin{assu}\label{assu1}
{\color{blue}{Orthogonal Condition}}.  $\bbB^{\top}\bbB=\bbI_{r_1}$, 
$\bbA^{\top}\bbA=\bbI_{r_2}$, where $\bbI_{r_1}$ and $\bbI_{r_2}$ are $r_1\times r_1$ and $r_2\times r_2$ identity matrices, respectively. 
\end{assu}
Under Assumption \ref{assu1}, the factor loading $\bbB$ and the common factor $\bbk_t^{(1)}$ are determined up to an orthogonal matrix $\bbH$, and the same for the pair $\left(\bbA, \bbk_t^{(2)}\right)$. In this way, Assumption \ref{assu1} provides identification conditions between common factors and the corresponding factor loadings. It is also a common identification condition for factor models used in literature including \cite{B2002}, \cite{LY2012}.

Secondly, we consider the identification between the two kinds of common factors. As mentioned earlier, the two kinds of factor parts represent common temporal trends and common variations of the data, respectively. Intuitively, we think the first common factor part possesses stronger time serial dependence than the second factor part. Formally, we use the auto-covariance to distinguish the two parts, which is reasonable since auto-covariance can describe strength of time-serial dependence. Before introducing Assumption \ref{assu2}, we specify some notations. $\boldsymbol{\Sigma}_k^{(1)}(\ell):=cov\left(\bbk_t^{(1)}, \bbk_{t+\ell}^{(1)}\right)$ and $\boldsymbol{\Sigma}_k^{(2)}(\ell):=cov\left(\bbk_t^{(2)}, \bbk_{t+\ell}^{(2)}\right)$ are auto-covariance of $\bbk_t^{(1)}$ with lag $\ell$ and that of $\bbk_t^{(2)}$ with lag $\ell$, respectively. For any matrix $\bbC$, let $\left|\left|\bbC\right|\right|$ be the square root of the maximum eigenvalue of $\bbC\bbC^{\top}$ and $\left|\left|\bbC\right|\right|_{\min}$ be the square root of the smallest nonzero eigenvalue of $\bbC\bbC^{\top}$. 
\begin{assu}\label{assu2}
{\color{blue}{Identification between $\bbk_{t}^{(1)}$ and $\bbk_{t}^{(2)}$}}. $\left|\left|\boldsymbol{\Sigma}_k^{(1)}(\ell)\right|\right|\asymp P^{1-\delta_1}\asymp\left|\left|\boldsymbol{\Sigma}_k^{(1)}(\ell)\right|\right|_{\min}$, 
$\left|\left|\boldsymbol{\Sigma}_k^{(2)}(\ell)\right|\right|\asymp P^{1-\delta_2}\asymp\left|\left|\boldsymbol{\Sigma}_k^{(2)}(\ell)\right|\right|_{\min}$, where $0\leq\delta_1<\delta_2\leq 1$. 
\end{assu}
Assumption \ref{assu2} imposes different orders for eigenvalues of the auto-covariance matrices for the two kinds of common factors. The order $P^{1-\delta_1}$ for $\bbk_t^{(1)}$ is larger than the order $P^{1-\delta_2}$ for $\bbk_t^{(2)}$ as $\delta_1<\delta_2$, which ensures that the time-serial dependence of $\bbk_t^{(1)}$ is stronger than that of $\bbk_t^{(2)}$. In view of this, Assumption \ref{assu2} identifies the two kinds of factor parts via their time-serial dependence. In other words, the first factor model part extracts common factors with stronger time-serial dependence, which also takes higher priority in the forecasting. This kind of identification condition is utilized in \cite{LY2012}.

Thirdly, we distinguish the second kind of factor part from the error component of the model. After extracting the common temporal trends in the first part of the model, the aim on better forecasting stimulates us to pursue further necessary features in the data. Compared to the factor $\bbk_t^{(1)}$, the factor $\bbk_t^{(2)}$ has weaker time-serial dependence. It has little interest for forecasting improvement. However, it implies large amounts of common variations of the data. Neglecting it will result in unsatisfied model fitting of the original data. As better model fitting also plays an important role in improving forecasting, we would like to keep them in the dimension reduction as well.    
\begin{assu}\label{assu3}
{\color{blue}{Identification between $\bbk_{t}^{(2)}$ and $\boldsymbol{\varepsilon}_t$}}.\\ $\frac{1}{T}\sum^{T}_{t=1}\bbk_{t}^{(2)}\bbk_{t}^{(2)\top}\stackrel{i.p.}{\longrightarrow}\boldsymbol{\Sigma}_{k}^{(2)}(0)>0$, as $P, T\rightarrow\infty$. Here $\boldsymbol{\Sigma}_k^{(2)}(0)$ is a deterministic $r_2\times r_2$ positive definite matrix. 
\end{assu}
Assumption \ref{assu3} is a common condition on factor model in the sense that the factors represent most variations of the data. It is used in \cite{B2002}.

At last, we impose some conditions on the error component. 
\begin{assu}\label{assu4}
{\color{blue}{Error components}}. 
\begin{enumerate}
\item 
$\mathbb{E}\left(\varepsilon_{it}\right)=0$. 
$\{\boldsymbol{\varepsilon}_t: t\geq 1\}$ is strictly stationary. 
\item $\sum^{P}_{i=1}\sum^{P}_{j=1}\sum^{T}_{t=1}\sum^{T}_{s=1}\left|\mathbb{E}\left(\varepsilon_{it}\varepsilon_{js}\right)\right|=O(PT)$ and $\sum^{P}_{i=1}\sum_{j\neq i}\left|\sigma_{\varepsilon, ij}\right|=O(P)$, where $\sigma_{\varepsilon, ij}:=\mathbb{E}\left(\varepsilon_{it}\varepsilon_{jt}\right)$. 
\end{enumerate}
\end{assu}
Condition 2 of Assumption \ref{assu4} ensures that only weak cross-sectional dependence and time-serial dependence exist in the error component. This condition indicates that no obvious common variations and common temporal trends are involved in the error component. 

In summary, Assumptions \ref{assu1}-\ref{assu4} create a well-defined FHFM (\ref{dpca:eq:y1}) and (\ref{dpca:eq:y2}). Next, we will consider how to estimate the two kinds of common factors for further forecasting.

\subsection{Hierarchical Factors}
Models (\ref{dpca:eq:y1}) and (\ref{dpca:eq:y2}) propose hierarchical factors which capture the time-serial dependence and cross-sectional variations, respectively. The factors $\bbk_t^{(1)}$ could help to decrease the factor forecasting error, while the factors $\bbk_t^{(2)}$ is able to minimize the factor modelling error. Combining these two types of factors, we could achieve optimal forecasting performance.  In this section, we discuss how to obtain these two types of factors from the perspective of optimal factor analysis.  

First, we review the linear combinations with the maximum predictability defined in \cite{BT1977}. 
We consider the forecasting model for the high dimensional time series $\bby_t$
\begin{eqnarray}\label{y04(1)}
\bby_t=\widehat{\bby}_{t-1}(1)+\boldsymbol{\eta}_t, 
\end{eqnarray}
where 
\begin{eqnarray*}
\widehat{\bby}_{t-1}(1)=\mathbb{E}\left(\bby_t|\bby_{t-1}, \bby_{t-2}, \ldots\right)
\end{eqnarray*}
is the expectation of $\bby_t$ conditional on past history up to the time $t-1$. So it is also the one-step ahead prediction. $\boldsymbol{\eta}_t$ is the forecast error. 

Since $\bby_t$ is a high dimensional time series, we 
pursue the linear combination $k_t=\bbb^{\top}\bby_t$ which has the maximum predictability. In terms of (\ref{y04(1)}), the univariate time series $k_t$ satisfies
\begin{eqnarray*}
k_t=\widehat{k}_{t-1}(1)+\zeta_t, 
\end{eqnarray*}
where $\widehat{k}_{t-1}(1)=\bbb^{\top}\widehat{\bby}_t(1)$ and $\zeta_t=\bbb^{\top}\boldsymbol{\eta}_t$. 

Because $k_t$ is stationary, 
\begin{eqnarray}\label{y04(2)}
var\left(k_t^2\right)=var\left(\widehat{k}_{t-1}^2(1)\right)+var\left(\zeta_t^2\right). 
\end{eqnarray} 
\cite{BT1977} proposed the quantity $\lambda$ to measure the predictability of a univariate stationary time series as 
\begin{eqnarray}\label{y04(3)}
\lambda:=\frac{var\left(\widehat{k}_{t-1}(1)\right)}{var(k_t)}=\frac{\bbb^{\top}\boldsymbol{\Gamma}^{*}_y(0)\bbb}{\bbb^{\top}\boldsymbol{\Gamma}_{y}(0)\bbb}, 
\end{eqnarray}
where $\boldsymbol{\Gamma}_y=\mathbb{E}\left(\bby_t\bby_t^{\top}\right)$ and $\boldsymbol{\Gamma}_y^{*}=\mathbb{E}\left(\widehat{\bby}_{t-1}(1)\widehat{\bby}_{t-1}^{\top}(1)\right)$. 

It follows from (\ref{y04(3)}) that the maximal predictability $\lambda$ should be the largest eigenvalue of the matrix $\boldsymbol{\Gamma}_y^{-1}\boldsymbol{\Gamma}_y^{*}$ and the linear combination vector $\bbb$ is the eigenvector corresponding to the largest eigenvalue. In other words, the eigenvector corresponds to the smallest eigenvalue will yield the least predictable combination of $\bby_t$.  

To make the matrix $\boldsymbol{\Gamma}_y^{*}$ clearer, we consider the linear forecasting model 
\begin{eqnarray}
\bby_t=\bbG\bby_t^{*}+\boldsymbol{\delta}_t, 
\end{eqnarray}
where $\bby_t^{*}=\left(\bby_{t-1}^{\top}, \bby_{t-2}^{\top}, \ldots, \bby_{t-\tau}\right)$ is the past information; $\bbG$ is an unknown $P\times (P\tau)$ coefficient matrix; and $\boldsymbol{\delta}_t$ is the forecast error. 

As the least-square estimator for $\bbG$ is $\widehat{\bbG}=\boldsymbol{\Gamma}_{yy^{*}}\boldsymbol{\Gamma}_{y^{*}}^{-1}$ with $\boldsymbol{\Gamma}_{yy^{*}}=\mathbb{E}\left(\bby_t\bby_t^{*\top}\right)$ and $\boldsymbol{\Gamma}_{y^{*}}=\mathbb{E}\left(\bby^{*}\bby^{*\top}\right)$, we have 
\begin{eqnarray}\label{y1105(1)}
\boldsymbol{\Gamma}_{y}^{*}=\boldsymbol{\Gamma}_{yy^{*}}\boldsymbol{\Gamma}_{y^{*}}^{-1}\boldsymbol{\Gamma}_{yy^{*}}^{\top}. 
\end{eqnarray}
Based on (\ref{y1105(1)}), $\lambda$ should be the largest eigenvalue of the canonical correaltion matrix 
\begin{eqnarray}
\bbS_{yy^{*}}:=\boldsymbol{\Gamma}_y^{-1}\boldsymbol{\Gamma}_{yy^{*}}\boldsymbol{\Gamma}_{y^{*}}^{-1}\boldsymbol{\Gamma}_{yy^{*}}^{\top}. 
\end{eqnarray}
In view of this point, the linear combination that results in the maximal predictability should be the eigenvector of the matrix $\bbS_{yy^{*}}$ corresponding to the largest eigenvalue. This conclusion provides us with insights into recovering the factors with the maximum predictability, which is the first step of the hierarchical factor modelling. 

It is noteworthy that we estimate the maximum predictable factors based on the eigen-decomposition of the matrix $\boldsymbol{\Gamma}_{yy^{*}}\boldsymbol{\Gamma}^{\top}_{yy^{*}}$ instead of the canonical correlation matrix $\bbS_{yy^{*}}$, which will be shown in Section \ref{sec: est}. Actually, the matrix $\bbS_{yy^{*}}$ is a normalized version of the symmetric matrix $\boldsymbol{\Gamma}_{yy^{*}}\boldsymbol{\Gamma}^{\top}_{yy^{*}}$, by normalizing the different variances of mortality data across all ages. However, as our FHFM model also pursues factors with largest variations in the second step,  we utilize the non-scaled version $\boldsymbol{\Gamma}_{yy^{*}}\boldsymbol{\Gamma}^{\top}_{yy^{*}}$, for simplicity.  

After recovering the factors with the maximum predictability, we extract the second type of factors to explain the largest variations using PCA as the second step. The implementation of PCA is based on the eigen-decomposition of the covariance matrix $\boldsymbol{\Gamma}_y$ of the residuals from the first step.

%%%%%%%%%%%%%%%%%%%%%%%%%%%%%%%%%%%%%%
\subsection{Estimation Approach} \label{sec: est}
Based on the identification between $\bbk_t^{(1)}$ and $\bbk_t^{(2)}$, the factor $\bbk_t^{(1)}$ will play a leading role in the auto-covariance matrix $\boldsymbol{\Sigma}_y(\ell):=cov\left(\bby_t, \bby_{t+\ell}\right)$, with $\ell$ being a positive integer. To see this point clearly, we do some calculations under the case of $\bbk_t^{(1)}$ and $\bbk_t^{(2)}$ being independent. It follows from (\ref{dpca:eq:y1}) and (\ref{dpca:eq:y2}) that
\begin{eqnarray}
\boldsymbol{\Sigma}_y(\ell)=\bbB\boldsymbol{\Sigma}_k^{(1)}(\ell)\bbB^{\top}
+\bbA\boldsymbol{\Sigma}_k^{(2)}(\ell)\bbA^{\top}
+\boldsymbol{\Sigma}_{\varepsilon}(\ell), 
\end{eqnarray}
where $\boldsymbol{\Sigma}_{\varepsilon}(\ell)=cov\left(\boldsymbol{\varepsilon}_t, \boldsymbol{\varepsilon}_{t+\ell}\right)$. 

With Assumption \ref{assu2} and Assumption \ref{assu4}, $\bbB\boldsymbol{\Sigma}_k^{(1)}(\ell)\bbB^{\top}$ is the leading term of $\boldsymbol{\Sigma}_y(\ell)$ in the sense of spectral norm. 
As auto-covariance matrices are not symmetric, we consider the matrix 
\begin{eqnarray}
\bbL(\ell):=\boldsymbol{\Sigma}_y(\ell)\boldsymbol{\Sigma}_{y}(\ell)^{\top}. 
\end{eqnarray} 
It is easy to obtain that the columns of $\bbB$ are the eigenvectors of the matrix
$\bbB\boldsymbol{\Sigma}_k^{(1)}(\ell)\boldsymbol{\Sigma}_k^{(1)}(\ell)^{\top}\bbB^{\top}$ corresponding to its non-zero eigenvalues. In fact, if $\bbC=\left(\bbb_1, \ldots, \bbb_{P-r_1}\right)$ is a $P\times (P-r_1)$ matrix for which $\left(\bbB, \bbC\right)$ forms a $P\times P$ orthogonal matrix, that is $\bbC^{\top}\bbB=\textbf{0}$ and $\bbC^{\top}\bbC=\bbI_{P-r_1}$, then we have $\left(\bbB\boldsymbol{\Sigma}_k^{(1)}(\ell)\boldsymbol{\Sigma}_k^{(1)}(\ell)^{\top}\bbB^{\top}\right)\bbC=\textbf{0}$. That is, the columns of $\bbC$ are eigenvectors of $\bbB\boldsymbol{\Sigma}_k^{(1)}(\ell)\boldsymbol{\Sigma}_k^{(1)}(\ell)^{\top}\bbB^{\top}$ corresponding to zero eigenvalues. 

Furthermore, the matrix $\bbB\boldsymbol{\Sigma}_k^{(1)}(\ell)\boldsymbol{\Sigma}_k^{(1)}(\ell)^{\top}\bbB^{\top}$ is the leading term of the matrix $\bbL(\ell)$ in the sense of spectral norm. Hence the columns of $\bbB$ are close to the eigenvectors of the matrix $\bbL(\ell)$ corresponding to non-zero eigenvalues, approximately. 

In terms of the analysis above, the eigendecomposition of $\bbL(\ell)$ provides a recovery method of the factor loading matrix $\bbB$. Note that we use $\ell = 1$ in the estimation step, because the estimation of $\bbB$ is not sensitive to $\ell$ and the correlation is often at its strongest at the small time lag (\citet{LY2012}). Besides, after analyzing the US mortality data, we also find $\ell = 1$ is enough for the forecasting.

Back to the FHFM (\ref{dpca:eq:y1}) and (\ref{dpca:eq:y2}), given an estimator for the first factor part, the recovery of the second factor part is more straightforward. In fact, the model is reduced to a simpler form 
\begin{eqnarray}\label{dpca:eq:y3}
\bby_t-\bbB\bbk_t^{(1)}=\bbA\bbk_t^{(2)}+\boldsymbol{\varepsilon}_t. 
\end{eqnarray} 
Under Assumption \ref{assu3}, (\ref{dpca:eq:y3}) is a classical factor model which can be estimated by the standard (static) PCA. See \cite{FLM2013}.

%%%%%%%%%%%%%%%%%%%%%%%%%%%%%%%%%%%%%%%%%
In summary, the estimation of the FHFM has two dimension reduction steps. The first step is to extract features that can enhance forecasting accuracy via a dynamic PCA procedure. The second step is to extract features, that retain variations to the largest extent for each age by performing static PCA. Next let us discuss in details about the two steps.
%%%%%%%%%%%%%%%%%%%%%%%%%%%%%%%%%%%%%%%%%

%In summary, the proposed novel dimension reduction method has two steps. The first step is to extract useful features which has good forecasting behaviors by a dynamic PCA procedure. The second step is to extract features, that retain the variations for each age by performing static PCA. After the dimension reduction, we get two sets of features, with which we would like to recover $\bby_t$ as follows:
%\begin{align}
%\label{lrre}
%\widetilde{\bby}_t = \sum_{i = 1}^{r_1} \bbb_i k_{it}^{(1)} + \sum_{i = 1}^{r_2} \bba_i k_{it}^{(2)},
%\end{align}
%where $k_{it}^{(1)}$ is the $i^{th}$ feature extracted in the first step, which is a linear combination of the log scale age-specific death rates at time $t$, $k_{it}^{(2)}$ is that in the second step, $r_1 > 0$ and $r_2> 0$ are the number of features extracted in two steps respectively which satisfy $r_1 + r_2 < P$, and $\bbb_i: P \times 1$, $\bba_i: P \times 1$ are the corresponding coefficients of the two sets of linear combinations. Hence, $\widetilde{\bbY} = [\widetilde{\bby}_1, \widetilde{\bby}_2, \dots, \widetilde{\bby}_T]$ is a \textit{low-dimensional representation} of $\bbY = [\bby_1, \bby_2, \dots, \bby_T]$, as $\text{rank}\{\widetilde{\bbY}\} \le r_1+r_2 < P$. Next we describe how we can get this low-dimensional representation \ref{lrre} by the two-step dimension reduction in details.

\subsubsection*{The First Step}
Firstly, we assume that $\{\bby_t\}_{t = 1,2, \dots, T}$ is covariance stationary and consider the following matrix 
\begin{align*}
\bbL_1 =  \Sig_{\bby}(1) \Sig_{\bby}(1)^{\top},
\end{align*}
where $\Sig_{\bby}(1) = \text{cov}{(\bby_t, \bby_{t+1})}$. As $\bbL_1$ is a symmetric matrix, it can be decomposed as $\bbL_1 = \bbQ\Lam\bbQ^{\top}$. The $P \times P$ matrix $\bbQ$ consists of the orthogonal eigenvectors of $\bbL_1$ in the columns and the columns are arranged such that the corresponding eigenvalues are in descending order. $\Lam$ is a $P \times P$ diagonal matrix with eigenvalues of $\bbL_1$ as the diagonal elements in descending order. As $\bbQ$ is an orthogonal matrix, we have $\bbQ^{\top}\bbQ = \bbQ\bbQ^{\top} = \bbI$. 

Let $\bmu_{\bby} = \mathbb{E}(\bby_t)$, and $\bbb_i$ be the $i^{th}$ column of $\bbQ$, which is the eigenvector corresponding to the $i^{th}$ largest eigenvalue of $\bbL_1$. Without loss of generality, we assume $\bmu_{\bby} = \mathbf{0}$ for convenience. Then by some simple rearrangement, we have  
\begin{align}
\bby_t &= \sum_{i = 1}^{r_1} \bbb_i \bbb_i^{\top}\bby_t + \sum_{i = r_1 +1}^{P} \bbb_i \bbb_i^{\top}\bby_t.		\label{dpca:eq:ex1}			
\end{align}
Let $k_{it}^{(1)} = \bbb_i^{\top}\bby_t$, and $\bbu_t = \sum_{i = r_1 +1}^{P} \bbb_i \bbb_i^{\top}\bby_t$. Then we can rewrite equation~(\ref{dpca:eq:ex1})  as
\begin{align}
\bby_t &= \sum_{i = 1}^{r_1} \bbb_i k_{it}^{(1)} + \bbu_t. \label{dpca:eq:ex2}
\end{align}
Then the linear combination $k_{it}^{(1)}, i = 1, 2, \dots, r_1$, are the features representing the time-serial trends, which are supposed to have good forecasting behaviors.

%%%%%%%%%%%%%%%%%%%
\iffalse
Let $\bmu_{\bby} = \mathbb{E}(\bby_t)$, then we have 
\begin{align*}
\bby_t- \bmu_{\bby} = \bbQ\bbQ^{\top} (\bby_t- \bmu_{\bby}).
\end{align*}

By some simple rearrangement and let $\bbb_i$ be the $i^{th}$ column of of $\bbQ$, which is the eigenvector corresponding to the $i^{th}$ largest eigenvalue of $\bbL_1$, we have
\begin{align}
\bby_t  - \bmu_{\bby} &= \sum_{i = 1}^{r_1} \bbb_i \bbb_i^{\top}(\bby_t- \bmu_{\bby}) + \sum_{i = r_1 +1}^{P} \bbb_i \bbb_i^{\top}(\bby_t- \bmu_{\bby}).		\label{dpca:eq:ex1}									
\end{align}

Without loss of generality, we assume $\bmu_{\bby} = \mathbf{0}$ in the following analysis. Let $k_{it}^{(1)} = \bbb_i^{\top}(\bby_t- \bmu_{\bby}) = \bbb_i^{\top}\bby_t$, and $\bbu_t = \sum_{i = r_1 +1}^{P} \bbb_i \bbb_i^{\top}(\bby_t- \bmu_{\bby}) = \sum_{i = r_1 +1}^{P} \bbb_i k_{it}^{(1)}$. Then we can rewrite equation~\ref{dpca:eq:ex1}  as
\begin{align}
\bby_t &= \sum_{i = 1}^{r_1} \bbb_i k_{it}^{(1)} + \bbu_t. \label{dpca:eq:ex2}
\end{align}
Then the linear combination $k_{it}^{(1)}, i = 1, 2, \dots, r_1$, are the features representing the time-serial trend, which have good forecasting behaviors.
\fi  
%%%%%%%%%%%%%%%%%%%%%

\subsubsection*{The Second Step}
The second step is equivalent to do a static PCA on $\bbu_t$ in equation~(\ref{dpca:eq:ex2}). Let $\Sig_{\bbu}(0) = var{(\bbu_t)}$, then the desired matrix for the second step is:
\begin{align*}
\bbL_2 = \Sig_{\bbu}(0) \Sig_{\bbu}(0)^{\top}.
\end{align*}
We conducting eigendecomposition on $\bbL_2$ and let $\bba_i$ be the eigenvector corresponding to the $i^{th}$ largest eigenvalue of $\bbL_2$. Then similar to that in the first step,
$\bbu_t$ can be expressed as:
\begin{align}
\bbu_t = \sum_{i = 1}^{r_2} \bba_i k_{it}^{(2)} +\bvar_t \label{dpca:eq:ex3},
\end{align}
where $k_{it}^{(2)} = \bba_i^{\top}\bbu_t$ and $\bvar_t = \sum_{i = r_2 +1}^{P} \bba_i \bba_i^{\top}\bbu_t$. $k_{it}^{(2)} = \bba_i^{\top}\bbu_t, i = 1, 2, \dots, r_2$ is the features extracted from the second step, which capture most of the common variations.

%Then  We can choose a $r_2 < (P-r_1)$ such that $\mathbb{E}(\|\bvar_t^{\top}\bvar_t\|)$ is small enough. Then $\sum_{i = 1}^{r_2} \bba_i k_{it}^{(2)}, t = 1,2, \dots, T$ is a low-dimensional representation of $\bbu_t, t = 1,2, \dots, T$. 

Finally combining equation (\ref{dpca:eq:ex2}) and (\ref{dpca:eq:ex3}), we have:
\begin{align*}
\bby_t = \sum_{i = 1}^{r_1} \bbb_i k_{it}^{(1)} + \sum_{i = 1}^{r_2} \bba_i k_{it}^{(2)} + \bvar_t.
\end{align*}
We can choose $r_1$ and $r_2$ such that $r_1 + r_2 < P$ and $\mathbb{E}(\|\bvar_t^{\top}\bvar_t\|)$ is small enough.

Replace the matrices with their sample version, we get the sample estimation of the model
\begin{align}
\widetilde{\bby}_t = \sum_{i = 1}^{\widehat{r}_1} \widehat{\bbb}_i \widehat{k}_{it}^{(1)} + \sum_{i = 1}^{\widehat{r}_2} \widehat{\bba}_i \widehat{k}_{it}^{(2)}, \quad t = 1, \dots, T \label{dpca:eq:ex4}
\end{align}
which is a low-dimensional representation of the original data via two-step dimension reduction.
 
\subsection{Forecasting}
Recall that after the two-step dimension reduction, we get the estimation (\ref{dpca:eq:ex4}). Following \citet{lee1992modeling}, we can forecast $\bby_{T+h}$ by forecasting the features $k_{i,T+h}^{(1)}$ and $k_{i,T+h}^{(2)}$first. In order to get the forecasts $\widehat{k}_{i,T+h}^{(1)}$ and $\widehat{k}_{i,T+h}^{(2)}$, we model $\{\widehat{k}_{it}^{(1)}: i  = 1,2,\dots, r_1\}_{t = 1,2,\dots, T}$ and $\{\widehat{k}_{it}^{(2)}:i = 1,2,\dots,r_2\}_{t=1,2,\dots,T}$ with standard time series models and conduct $h$-steps ahead forecasting with them.
Then together with (\ref{dpca:eq:ex4}), the $h$-steps ahead forecasting for $\bby_{T+h}$ is 
\begin{align*}
\widetilde{\bby}_{T+h} = \sum_{i = 1}^{r_1} \widehat{\bbb}_i \widehat{k}_{i,T+h}^{(1)} + \sum_{i = 1}^{r_2} \widehat{\bba}_i \widehat{k}_{i,T+h}^{(2)},
\end{align*}
where $\widehat{k}_{i,T+h}^{(1)}$ and $\widehat{k}_{i,T+h}^{(2)}$ are predicted values of the features in $h$ years after time $T$, $h = 1,2, \dots$. 

Consequently, instead of conducting $P$ forecasting models, we only need $\widehat{r}_1 + \widehat{r}_2 < P$ forecasting models.
In our simulations and application on the US mortality data, we choose $ARIMA(p, d, q)$ models to forecast the time series, and we use BIC to choose the parameters $p, d, q$ for each model. 

\subsection{Practical Algorithm}

The practical procedure for the two-steps dimension reduction and forecasting is summarized in Algorithm \ref{Algorithm}.
\begin{algorithm}
\SetAlgoLined
\KwIn{Data $\bbY = [\bby_1, \dots, \bby_T] \in \mathbb{R}^{P \times T}$; Desired rank $\le P$.}
\KwOut{Low-dimensional representation of $\bbY$; $h-$steps ahead forecasts of $\bby_T$, h = 1,2, \dots.}
\caption{{\bf FHFM  for Mortality Forecasting} \label{Algorithm}}
\textbf{Dimension Reduction Step 1}:\\
\nl Compute the sample mean $\overline{\bby} = T^{-1}\stT \bby_t$\;
\nl Compute the sample auto-covariance matrix $\widehat{\Sig}_{\bby}(1) = \frac{1}{T-1} \sum_{t=1}^{T-1} (\bby_{t+1} - \overline{\bby})(\bby_{t} - \overline{\bby})^{\top}$\;
\nl Compute sample matrix for the first step $\widehat{\bbL}_1 =  \widehat{\Sig}_{\bby}(1) \widehat{\Sig}_{\bby}(1)^{\top}$\;
\nl Conduct eigendecomposition on $\widehat{\bbL}_1$ and get $\widehat{\bbb}_1, \dots, \widehat{\bbb}_{\widehat{r_1}}$, the eigenvectors corresponding to the largest $\widehat{r_1}$ eigenvalues of $\widehat{\bbL}_1$\;
\nl Compute the first sets of features $\widehat{k}_{it}^{(1)} = \widehat{\bbb}_i^{\top}(\bby_t- \overline{\bby}), i = 1, \dots, \widehat{r_1}, t = 1,\dots, T$\;
\textbf{Dimension Reduction Step 2}:\\
\nl Compute $\widehat{\bbu}_t = (\bby_t - \overline{\bby}) - \sum_{i = 1}^{\widehat{r}_1} \widehat{\bbb}_i \widehat{k}_{it}^{(1)}$\;
\nl Compute sample the covariance matrix of $\widehat{\bbu}_t$, $\widehat{\Sig}_{\bbu}(0) = \frac{1}{T} \sum_{t=1}^{T} \widehat{\bbu}_t\widehat{\bbu}_t^{\top}$\;
\nl Compute sample matrix for second step $\widehat{\bbL}_2 = \widehat{\Sig}_{\bbu}(0)\widehat{\Sig}_{\bbu}(0)^{\top}$\;
\nl Conduct eigendecomposition on $\widehat{\bbL}_2$ and get $\widehat{\bba}_1, \dots, \widehat{\bba}_{\widehat{r_2}}$, the eigenvectors corresponding to the largest $\widehat{r_2}$ eigenvalues of $\widehat{\bbL}_2$\;
\nl Compute the second sets of features $\widehat{k}_{it}^{(2)} = \widehat{\bba}_i^{\top}\widehat{\bbu}_t, i = 1, \dots, \widehat{r_2}, t = 1,\dots, T$\;
\textbf{Estimation result}:\\
\nl Compute $\widehat{\bby}_t =  \overline{\bby} + \sum_{i = 1}^{\widehat{r}_1} \widehat{\bbb}_i \widehat{k}_{it}^{(1)} + \sum_{i = 1}^{\widehat{r}_2} \widehat{\bba}_i \widehat{k}_{it}^{(2)}, t = 1, \dots, T,$ and the estimated low-dimensional representation of $\bbY$ is $\widehat{\bbY} = [\widehat{\bby}_1, \dots, \widehat{\bby}_T]$\;
\textbf{Forecasting Step}:\\
\nl Fit $\widehat{k}_{it}^{(1)}, i = 1, \dots, \widehat{r_1}, t = 1,\dots, T$ and $\widehat{k}_{it}^{(2)}, i = 1, \dots, \widehat{r_2}, t = 1,\dots, T$ with standard ARIMA(p,d,q) models respectively\;
\nl Compute $\widehat{k}_{i,T+h}^{(1)}, \widehat{k}_{j,T+h}^{(2)}$, the $h-$step ahead forecasts of the features, with the fitted ARIMA models; $i = 1, \dots, \widehat{r}_1$; $j  = 1, \dots, \widehat{r}_2$\; 
\nl Compute the $h-$steps ahead forecasts after $\bby_T$ by $\widehat{\bby}_{T+h} =  \overline{\bby} + \sum_{i = 1}^{\widehat{r}_1} \widehat{\bbb}_i \widehat{k}_{i,T+h}^{(1)} + \sum_{i = 1}^{\widehat{r}_2} \widehat{\bba}_i \widehat{k}_{i,T+h}^{(2)}$.
\end{algorithm}

In our simulations and analysis of the US mortality data, we estimate the values of $r_1$ and $r
_2$ by the criterion,
\begin{equation}\label{rts}
\widehat{r} = \argmin_{1\le i \le R} \frac{\widehat{\lambda}_{i+1}}{\widehat{\lambda}_i},
\end{equation} 
where $\widehat{\lambda}_i, i = 1, 2, \dots, R$ are the eigenvalues of $\widehat{\bbL}_1$ or $\widehat{\bbL}_2$ in descending order, and $\max(r_1, r_2)<R<P$. This ratio criterion is commonly used in literature, like \cite{LY2012} and \cite{AH2013} for auto-covariance matrices and covariance matrices on high dimensional data, respectively. 
As mentioned in \cite{LY2012}, in practice, the parameter $R$ is chosen as $\frac{1}{2}\min (P, T)$. It is worthy being mentioned that the number of nonzero eigenvalues of the sample matrices $\widehat{\bbL}_1$ and $\widehat{\bbL}_2$ is no larger than $\min(P, T)$.

\section{Relationship with Existing Methods}
\label{dpca:relation}
The methods which forecast mortality via features' forecasting date back to the Lee-Carter model (\citenum{lee1992modeling}). For general comparison, consider the following one factor model
\begin{align*}
y_{x,t} = \ln(m_{x,t}) = a_x + b_x k_t + u_{x,t},
\end{align*}
where $a_x$ is a constant for each $x$, $k_t$ is an unobserved time series (the feature summarizing the original high-dimensional time series), $b_x$ is the loading of the feature $k_t$ to each age $x$, and $u_{x,t}$ is the error term. One can recover the $h-$steps ahead forecasting of $y_{x,t}$ via the forecasting of $k_t$. Therefore, how we extract the feature $k_t$ by dimension reduction is the main difference.
\subsection{Static PCA Method}

%Comparing to the \citet{LC1992} model, our method will work better as the time series $\{\widehat{k}_{i,t}^{(1)}\}_{i = 1,2,\dots, T}$, which is used to conducting the forecasting, capture the most time serial dependence by utilize the information of correlations among the original time series.
%As our target is conduct a dimension reduction which has a powerful forecasting ability, we should take the correlation between log death rates of different time into consideration.

The most popular method for mortality modelling, the Lee-Carter model, utilizes static PCA to estimate $k_t$. The extracted feature $k_t$, which performs the most important role in the forecasting, is the first principal component that explains the most of the variance of the original data. 
%It, however, doesn't take the serial dependence along time for the original data into consideration. 
Mathematically, the static PCA solves the following objective
\begin{align*}
\max_{\bbb} \text{var}{(\bbb^{\top}\bby_t)}
\end{align*}
to get $\widetilde{k}_t = \widetilde{\bbb}^{\top}(\bby_t - \bar{\bby}),\ t = 1, \dots, T$. This solution has the smallest average squared reconstruction error $\mathbb{E}(\|\bbu_t^{\top}\bbu\|^2)$ \citep{BL1975}. It, however, does not seem to have enough forecasting ability. For example, suppose $\bby_t = (y_{1t}, y_{2t}, \dots, y_{pt})^{\top},\ t = 1,2 \dots, T$, where $y_{1t}$ and $y_{2t}$ have huge variances and very weak time serial dependence, while the rest have small variances but strong time serial dependence. Performing static PCA on $\bby_t$ will get a feature which puts most of the loading on $y_{1t}$ and $y_{2t}$. The forecasting based on this feature will heavily depend on the pattern of $y_{1t}$ and $y_{2t}$ and not make use of the strong serial dependence information contained in the rest, which may lead to a misleading forecasting.

On the other hand, the first step of our proposed method extracts $k_t$ from the auto-covariance matrix, which contains most of the time-serial dependence information. Thus it is expected to have stronger forecasting ability comparing to the feature extracted by the static PCA.
  
\subsection{Dynamic PCA Method}

There are several dynamic PCA methods, which usually involve the auto-covariance matrices and also utilize the time-serial dependence information, including \citet{BL1975}, \citet{LYB2011}, \citet{HKH2015}, and \citet{chang2018principal}. Those methods can be used to extract feature $k_t$ as well.
For comparison purpose with our method, we consider one of them described in the following. 

Define $\Sig_{\bby}(\ell) = \text{cov}(\bby_{t}, \bby_{t+\ell}),\ \ell = 0, 1, 2, \dots$, and consider the nonnegative definite matrix
\begin{align}
\bbL = \sum_{\ell = 0}^{\ell_0} \Sig_{\bby}(\ell) \Sig_{\bby}(\ell)^{\top}. 
\label{dpca:eq:ll}
\end{align}
With matrix (\ref{dpca:eq:ll}), the coefficients of the feature $k_t$ can be estimated by the eigenvector of the sample matrix $\widehat{\bbL}$ corresponding to its largest eigenvalue.
This is similar to \citet{BL1975} and \citet{HKH2015} by assigning different weights on those covariances, while in \citet{LYB2011}, they exclude $\Sig_{\bby}(0)$.
If $\ell_0 = 0$, it is the same with the static PCA. If $\ell_0 = 1$, $\bbL$ can be seen as the mixing of the two steps in our method. If $\ell_0 > 1$, $\bbL$ aggregates more lagged covariances than our method. 

There are similarities and advantages of our method compared to the aforementioned dynamic PCA methods. On one hand, the first step of our method is motivated by the dynamic PCA that auto-covariance matrices are used to obtain forecasting ability for the features. While from the empirical and simulating studies, we find the lag 1 auto-covariance is enough for the mortality data and data with similar structure. To make the method simple and easy to apply, our method only involves the most useful auto-covariance. If the data structure changes, one can include more lagged covariances which adapts to the data. On the other hand, we intend to maximize the forecasting ability instead of balancing several characteristics of the features. The dynamic PCA provides only one set of features which mix the information of the temporal trend and the variation, while our proposed method extracts two types of features separately via a two-steps procedure. The first type of features represent the temporal trend, which benefits the forecasting, and the second type ones capture variations which are good for model fitting as well as the forecasting. 
The features are linear combinations of the original data and the coefficients of them represent the directions which the original data are projected to. Therefore, we can visualize the features by the directions. 
Figure \ref{dpca:fg:fs} shows a simple example of the estimated directions for our method and the dynamic PCA. Data is generated the same as in \textit{Example} 4 described in Appendix \ref{append:cha:stepPCA.1} with $P = 3$ and $T = 20$ . The red arrow is the direction of the feature for the dynamic PCA (DPCA), and the blue ones are those for the first step and the second step of our method (FHFM), respectively. It is clear that they project the data into different directions and the red one can be seen as a direction which mixes the other two.
\begin{figure} 
\centering
\includegraphics[width=0.5\linewidth]{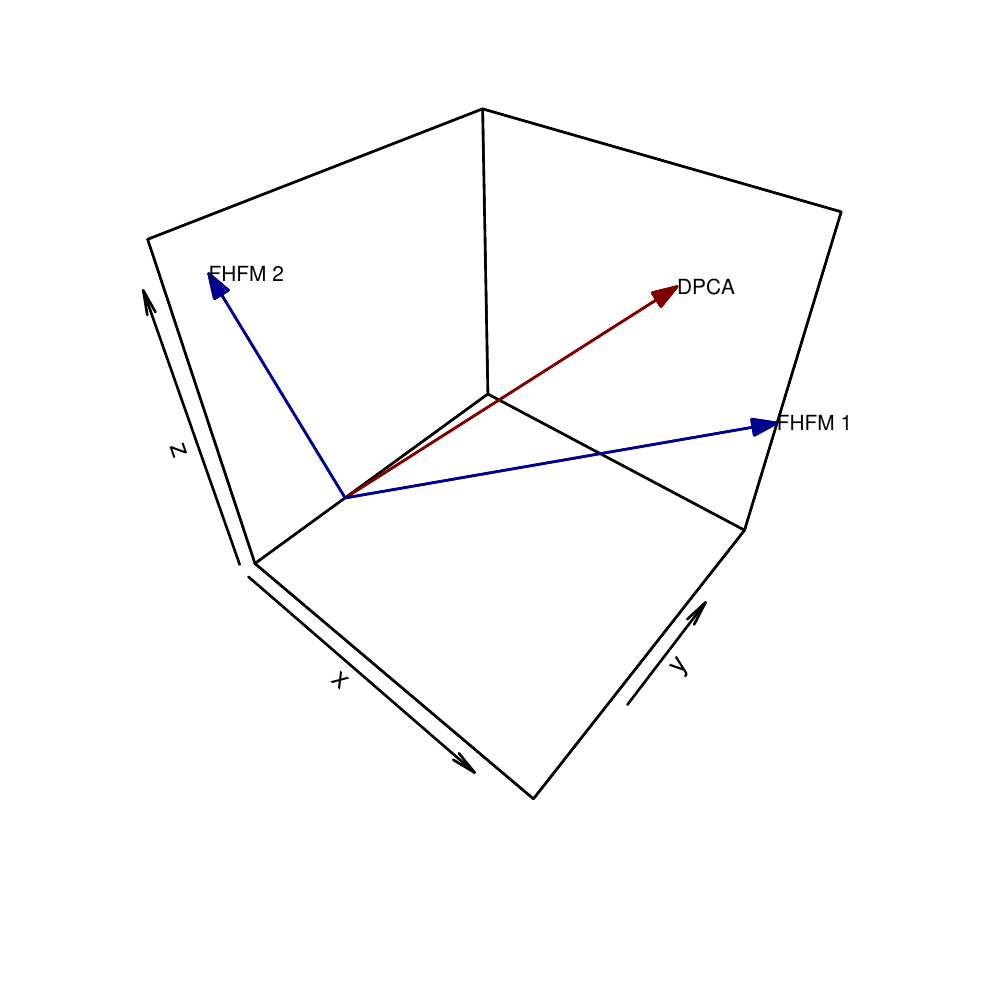}
\caption{The directions of the features; A comparison between the dynamic PCA and FHFM.}
\label{dpca:fg:fs}
\end{figure}
In addition, if we take a detailed look at the matrix $\bbL$, we find that it is actually hard to tell and explain what information is contained in this matrix, as it is a mix of several auto-covariances and the variance matrix. Our method, on the other hand, is stepwise, thus it has a clear goal for each step. 
%Elements in $\Vyl \Vyl^T$ are not guaranteed to be positive, which may make the information from different lags of auto-covariances cancel out each other. Our method, on the other hand, is stepwise, thus it will not cancel out any information and the goal for each step is clear. 

\section{Asymptotic Properties}
\label{dpca:asymptotic}
In this section, we establish the rates of convergence for the two-steps estimators of the factor loadings. Additional to the Assumptions \ref{assu1}-\ref{assu4}, we impose the following assumptions for the asymptotic theory. 
\begin{assu}\label{assu5}
{\color{blue}{Relation between the error and the factors}}. $\boldsymbol{\varepsilon}_t$ independent of $\bbk_{t}^{(1)}$ and $\bbk_{t}^{(2)}$. 
\end{assu}
\begin{remark}
For simplicity of techniques and without loss of generality, the Assumption \ref{assu5} assumes independent relationship between the error component and the two kinds of factors. 
\end{remark}
\begin{assu}\label{assu6}
{\color{blue}{Relation between $\bbk_{t}^{(1)}$ and $\bbk_{t}^{(2)}$}}. Suppose that $\left|\left|\boldsymbol{\Sigma}_{k}^{(21)}(\ell)\right|\right|\asymp\left|\left|\boldsymbol{\Sigma}_{k}^{(12)}(\ell)\right|\right|_{\min}$, $\left|\left|\boldsymbol{\Sigma}_{k}^{(12)}(\ell)\right|\right|=O\left(P^{1-\frac{\delta_2}{2}}\right)$, where $\boldsymbol{\Sigma}_k^{(21)}(\ell)=cov\left(\bbk_{t+\ell}^{(2)}, \bbk_{t}^{(1)}\right)$, $\boldsymbol{\Sigma}_k^{(12)}(\ell)=cov\left(\bbk_{t+\ell}^{(1)}, \bbk_{t}^{(2)}\right)$ and $\delta_2$ is defined in Assumption \ref{assu2}. 
\end{assu}
\begin{remark}
The order of the eigenvalues of $\boldsymbol{\Sigma}_k^{(21)}(\ell)$ is not specified in Assumption \ref{assu6}. The reason is that the information involved in $\boldsymbol{\Sigma}_k^{(21)}$ participate in the recovery of the factor $\bbk^{(1)}_t$. 
The order of $\left|\left|\boldsymbol{\Sigma}_{k}^{12}(\ell)\right|\right|$ is restricted in order to make it not involved in the leading term when recovering $\bbk_t^{(1)}$. 
\end{remark}
\begin{assu} \label{assu7}
{\color{blue}{Dimension Condition}}. $\frac{P}{T}\rightarrow c\in (0, \infty)$. 
\end{assu}
\begin{remark}
The setting of the dimension $P$ and the sample size $T$ being comparable is under consideration because the number of ages is comparable to the length of time series for the US mortality data. Note that when $P$ and $T$ are on the same order, the estimators for the eigenvalues and the eigenvectors may be no longer consistent. See \cite{LY2012}, \cite{AH2013}. However, the ratio based estimators for $r_1$ and $r_2$ can still work well.   
\end{remark}
% do we need to remove the last sentence if there is no theorm 2?
\begin{assu}\label{assu8}
$\left\{\left(\bbk_t^{(1)}, \bbk_t^{(2)}, \boldsymbol{\varepsilon}_t\right): t\geq 1\right\}$ is strictly stationary with finite fourth moments.   
\end{assu}

\begin{thm}\label{thm1}
In addition to Assumptions \ref{assu1} - \ref{assu8}, we assume that 
\begin{eqnarray}
\frac{P^{1-\delta_1}}{T}=o(1), \ \ as \ \ P, T\rightarrow\infty. 
\end{eqnarray}
Then we have the following convergent rates 
\begin{eqnarray}
\left|\left|\widehat{\bbB}-\bbB\right|\right|=O_p\left(\frac{1}{T^{1/2}}\right), \ \ 
\left|\left|\widehat{\bbA}-\bbA\right|\right|=O_p\left(\frac{1}{T^{1/2}}\right). 
\end{eqnarray}
\end{thm}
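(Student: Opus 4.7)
The plan is to prove both bounds by a perturbation / Davis--Kahan argument applied to the eigenvectors of the two sample matrices $\widehat{\bbL}_1$ and $\widehat{\bbL}_2$, exploiting the eigen-gap that Assumption~\ref{assu2} forces between the two factor parts. For the first bound I would start from the decomposition $\boldsymbol{\Sigma}_{\bby}(1)=\bbB\boldsymbol{\Sigma}_k^{(1)}(1)\bbB^{\top}+\bbB\boldsymbol{\Sigma}_k^{(12)}(1)\bbA^{\top}+\bbA\boldsymbol{\Sigma}_k^{(21)}(1)\bbB^{\top}+\bbA\boldsymbol{\Sigma}_k^{(2)}(1)\bbA^{\top}+\boldsymbol{\Sigma}_{\boldsymbol{\varepsilon}}(1)$ (the factor--error cross terms drop out by Assumption~\ref{assu5}). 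Assumptions~\ref{assu2}, \ref{assu4}, and~\ref{assu6} then say that $\bbB\boldsymbol{\Sigma}_k^{(1)}(1)\boldsymbol{\Sigma}_k^{(1)}(1)^{\top}\bbB^{\top}$ is the leading piece of $\bbL_1$, with $r_1$ non-zero eigenvalues of order $P^{2(1-\delta_1)}$, while the contribution of the remaining terms is at most of order $P^{2(1-\delta_2)}$. This produces an eigen-gap of order $P^{2(1-\delta_1)}$ and identifies the column space of $\bbB$ with the leading $r_1$-dimensional eigenspace of $\bbL_1$.

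Next I would bound the perturbation via the identity $\widehat{\bbL}_1-\bbL_1=(\widehat{\boldsymbol{\Sigma}}_{\bby}(1)-\boldsymbol{\Sigma}_{\bby}(1))\widehat{\boldsymbol{\Sigma}}_{\bby}(1)^{\top}+\boldsymbol{\Sigma}_{\bby}(1)(\widehat{\boldsymbol{\Sigma}}_{\bby}(1)-\boldsymbol{\Sigma}_{\bby}(1))^{\top}$. Under strict stationarity with finite fourth moments (Assumption~\ref{assu8}), each entry of $\widehat{\boldsymbol{\Sigma}}_{\bby}(1)-\boldsymbol{\Sigma}_{\bby}(1)$ is $O_p(T^{-1/2})$; combined with $\|\boldsymbol{\Sigma}_{\bby}(1)\|\asymp P^{1-\delta_1}$ and refined projected bounds on $\bbB^{\top}(\widehat{\boldsymbol{\Sigma}}_{\bby}(1)-\boldsymbol{\Sigma}_{\bby}(1))$ obtained by directly applying the law of large numbers to the pervasive factor terms, this yields $\|\widehat{\bbL}_1-\bbL_1\|=O_p(P^{2-\delta_1}T^{-1/2})$ on the relevant subspace. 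Dividing by the gap and invoking the Davis--Kahan $\sin\Theta$ theorem (after the rotation/identification fix given by Assumption~\ref{assu1}) gives $\|\widehat{\bbB}-\bbB\|=O_p(P^{\delta_1}T^{-1/2})$; under Assumption~\ref{assu7} ($P/T\to c$) together with $P^{1-\delta_1}/T=o(1)$, this collapses to $O_p(T^{-1/2})$.

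For the second bound, I would repeat the argument on $\widehat{\bbL}_2=\widehat{\boldsymbol{\Sigma}}_{\widehat{\bbu}}(0)\widehat{\boldsymbol{\Sigma}}_{\widehat{\bbu}}(0)^{\top}$ with $\widehat{\bbu}_t=(\bbI-\widehat{\bbB}\widehat{\bbB}^{\top})\bby_t$. Expanding $\widehat{\bbu}_t-\bbu_t=(\bbI-\widehat{\bbB}\widehat{\bbB}^{\top})\bbB\bbk_t^{(1)}-\widehat{\bbB}\widehat{\bbB}^{\top}(\bbA\bbk_t^{(2)}+\boldsymbol{\varepsilon}_t)$, the first summand is controlled by $\|(\bbI-\widehat{\bbB}\widehat{\bbB}^{\top})\bbB\|\leq\|\widehat{\bbB}-\bbB\|=O_p(T^{-1/2})$ from step~1, while the second is controlled by projecting through $\widehat{\bbB}^{\top}\bbA$ and $\widehat{\bbB}^{\top}\boldsymbol{\varepsilon}_t$ and invoking Assumptions~\ref{assu3}, \ref{assu4}, \ref{assu6}. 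Hence $\widehat{\boldsymbol{\Sigma}}_{\widehat{\bbu}}(0)-\boldsymbol{\Sigma}_{\bbu}(0)$ is a suitably small perturbation, and since the leading $r_2$ eigenvalues of $\bbL_2=\boldsymbol{\Sigma}_{\bbu}(0)\boldsymbol{\Sigma}_{\bbu}(0)^{\top}$ are of order $P^{2(1-\delta_2)}$ with a same-order gap, a second application of Davis--Kahan yields $\|\widehat{\bbA}-\bbA\|=O_p(T^{-1/2})$.

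The main obstacle, as I see it, is obtaining the sharp projected rate $O_p(T^{-1/2})$ rather than the naive $O_p(P/\sqrt{T})$ spectral-norm bound for the sample auto-covariance, and then propagating the step-1 error into step~2 without it being amplified by the smaller gap $P^{2(1-\delta_2)}$ of the second eigen-decomposition. This is where the finite fourth-moment condition of Assumption~\ref{assu8}, the explicit control of $\boldsymbol{\Sigma}_k^{(12)}$ and $\boldsymbol{\Sigma}_k^{(21)}$ from Assumption~\ref{assu6}, and the near-orthogonality between the range of $\widehat{\bbB}$ and the $\bbA$-subspace must be combined carefully.
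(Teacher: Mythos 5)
Your overall architecture (eigen-gap plus subspace-perturbation on $\widehat{\bbL}_1$, then on $\widehat{\bbL}_2$ for the residuals) matches the paper, which uses Lemma 3 of \cite{LYB2011} in place of Davis--Kahan; but there are two genuine gaps in how you quantify the perturbations and the gaps, and both matter for reaching the stated rate. First, in step 1 your bound on $\widehat{\boldsymbol{\Sigma}}_y(1)-\boldsymbol{\Sigma}_y(1)$ via ``each entry is $O_p(T^{-1/2})$'' gives only $O_p(P/\sqrt{T})$ in spectral norm, hence $\|\widehat{\bbB}-\bbB\|=O_p(P^{\delta_1}/\sqrt{T})$ after dividing by the gap $P^{2-2\delta_1}$; this does \emph{not} collapse to $O_p(T^{-1/2})$, because under Assumption \ref{assu7} the extra condition $P^{1-\delta_1}/T=o(1)$ forces $\delta_1>0$, so $P^{\delta_1}/\sqrt{T}\gg T^{-1/2}$. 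The ``refined projected bounds'' you invoke are precisely the missing ingredient. The paper closes this by exploiting the factor structure of $\boldsymbol{\Sigma}_y(1)$ itself: the sampling error decomposes into factor-autocovariance estimation errors whose norms inherit the factor strength, $\|\widehat{\boldsymbol{\Sigma}}_k^{(1)}(1)-\boldsymbol{\Sigma}_k^{(1)}(1)\|=O_p(P^{1-\delta_1}/\sqrt{T})$ etc., plus an idiosyncratic term $O_p(P/T)$ (borrowed from (A8) of \cite{LYB2011}), so that $\|\widehat{\bbL}_1-\bbL_1\|=O_p(P^{2-2\delta_1}/\sqrt{T})$ and the gap cancels the $P$-dependence exactly.

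Second, your step 2 rests on the claim that the leading $r_2$ eigenvalues and the eigen-gap of $\bbL_2$ are of order $P^{2(1-\delta_2)}$. In this model $\delta_2$ indexes only the \emph{serial-dependence} strength of $\bbk_t^{(2)}$ (Assumption \ref{assu2}); its \emph{variance} strength is governed by Assumption \ref{assu3}, which makes the second factors pervasive, and the paper's Lemma \ref{lem3} (via Proposition 2.1 of \cite{FLM2013}) gives $\lambda_i(\bbL_2)\asymp P^2$ and $sep\asymp P^2$. That large gap is exactly what absorbs the unavoidable $O_p(P/\sqrt{T})$-scale error in $\widehat{\boldsymbol{\Sigma}}_{\widehat{u}}(0)$ (idiosyncratic covariance estimation plus the propagated step-1 error, which contributes $O_p(P^{1-\delta_1/2}/\sqrt{T})$-type terms), yielding $\|\bbE_L^{(2)}\|=O_p(P^2/\sqrt{T})$ and hence $\|\widehat{\bbA}-\bbA\|=O_p(T^{-1/2})$. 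With your smaller gap $P^{2(1-\delta_2)}$ the same perturbation budget gives a rate inflated by a positive power of $P$ (since $\delta_2>\delta_1\geq 0$), so the difficulty you flag about amplification by the second gap is real under your reading and is resolved only by using the correct, $\delta_2$-free order $P^2$ for the step-2 eigenvalues.
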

\begin{remark}
Two kinds of factors are strong factors in the sense of auto-correlation and variance, respectively. It is reasonable to obtain fast rates of convergence for both of them. In view of this, 
our proposed two-steps estimators have good statistical performance, which is an advantage for forecasting improvement. Based on identification condition between factors and factor loadings, $\bbB$ is determined up to an orthogonal matrix. 
Due to technical proofs (some techniques in Lemma \ref{lem2}), the estimator $\widehat{\bbB}$ here is the estimator up to an identity matrix.  
\end{remark}

\begin{thm}\label{thm2}
In addition to assumptions in Theorem \ref{thm1}, we denote the one-step-ahead forecasting error of the common factors as
\begin{eqnarray*}
\left|\left|\widehat{\bbk}^{(j)}_{t-1}(1)-\bbk_t^{(j)}\right|\right|=O_p\left(\gamma_{PT}^{(j)}\right), \ \ \ j=1, 2, 
\end{eqnarray*}
where $\widehat{\bbk}^{(j)}_{t-1}$ is the forecasting for the common factor $\bbk^{(j)}_t$ at time $t$. 

Then we have
\begin{eqnarray}\label{y1105(3)}
\frac{1}{P}\left|\left|\widehat{\bby}_{t-1}(1)-\bby_t\right|\right|
=O_p\left(\max\left\{\frac{1}{\sqrt{T}}, \gamma_{PT}^{(1)}, \gamma_{PT}^{(2)}, \theta_{PT}\right\}\right), 
\end{eqnarray}
where $\theta_{PT}=\frac{\sum^{P}_{i=r_2+1}\lambda_i(\boldsymbol{\Sigma}_u(0))}{\sum^{P}_{i=1}\lambda_i(\boldsymbol{\Sigma}_u(0))}$. 
\end{thm}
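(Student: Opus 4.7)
The plan is to decompose the one-step-ahead forecast error of $\bby_t$ into three contributions that match the three sources of error discussed earlier: the factor-forecasting error $\gamma_{PT}^{(j)}$, the factor-loading estimation error supplied by Theorem~\ref{thm1}, and the unmodelled residual $\bvar_t$ whose normalised size is captured by $\theta_{PT}$. Starting from
\[\widehat{\bby}_{t-1}(1) - \bby_t = \bigl(\widehat{\bbB}\widehat{\bbk}^{(1)}_{t-1}(1) - \bbB\bbk^{(1)}_t\bigr) + \bigl(\widehat{\bbA}\widehat{\bbk}^{(2)}_{t-1}(1) - \bbA\bbk^{(2)}_t\bigr) - \bvar_t,\]
I further split each of the first two brackets using the add-and-subtract identity
\[\widehat{\bbB}\widehat{\bbk}^{(1)}_{t-1}(1) - \bbB\bbk^{(1)}_t = \bbB\bigl(\widehat{\bbk}^{(1)}_{t-1}(1) - \bbk^{(1)}_t\bigr) + (\widehat{\bbB}-\bbB)\bbk^{(1)}_t + (\widehat{\bbB}-\bbB)\bigl(\widehat{\bbk}^{(1)}_{t-1}(1) - \bbk^{(1)}_t\bigr),\]
together with the analogous splitting for $\bbA$. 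The triangle inequality then reduces the proof to bounding six deterministic bilinear pieces plus the single residual term $\bvar_t$.

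Next I bound each piece. Assumption~\ref{assu1} yields $\|\bbB\|=\|\bbA\|=1$, so the two summands $\bbB\bigl(\widehat{\bbk}^{(1)}_{t-1}(1)-\bbk^{(1)}_t\bigr)$ and $\bbA\bigl(\widehat{\bbk}^{(2)}_{t-1}(1)-\bbk^{(2)}_t\bigr)$ are controlled directly by the hypothesis, contributing $\gamma_{PT}^{(1)}$ and $\gamma_{PT}^{(2)}$ respectively after normalising by $P$. The loading-estimation summands $(\widehat{\bbB}-\bbB)\bbk^{(1)}_t$ and $(\widehat{\bbA}-\bbA)\bbk^{(2)}_t$ are handled by combining the rates of Theorem~\ref{thm1} with the second-moment bound $\|\bbk^{(j)}_t\|=O_p\bigl(P^{(1-\delta_j)/2}\bigr)$, which follows from Assumption~\ref{assu2} at lag $\ell=0$ together with stationarity; after dividing by $P$, each produces the $T^{-1/2}$ summand. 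The two cross-interaction terms are of strictly lower order and are absorbed into the dominant bounds.

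Third, for the residual I start from $\mathbb{E}\|\bvar_t\|^2 = \Tr(\Sig_\varepsilon(0))$, so Markov's inequality supplies the relevant stochastic bound. To re-express this quantity through $\theta_{PT}$, I appeal to the decomposition $\Sig_\bbu(0) = \bbA\Sig_k^{(2)}(0)\bbA^\top + \Sig_\varepsilon(0)$, in which the first summand has rank $r_2$ and, by Assumptions~\ref{assu2} and~\ref{assu3}, eigenvalues of order $P^{1-\delta_2}$. A Weyl-type eigenvalue comparison then identifies $\sum_{i=r_2+1}^{P}\lambda_i(\Sig_\bbu(0))$ with $\Tr(\Sig_\varepsilon(0))$ up to a $(1+o(1))$ factor, so that the normalised residual size is governed by $\theta_{PT}$ once the denominator $\sum_{i=1}^{P}\lambda_i(\Sig_\bbu(0))$ is taken into account.

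Finally, combining the three parts through the triangle inequality, dividing by $P$, and taking the maximum of the resulting contributions yields the bound~(\ref{y1105(3)}). The main technical obstacle is the Weyl-type comparison in the third step: identifying $\Tr(\Sig_\varepsilon(0))$ with $\sum_{i=r_2+1}^{P}\lambda_i(\Sig_\bbu(0))$ uniformly in $P$ requires care because both matrices in the decomposition of $\Sig_\bbu(0)$ can contribute eigenvalues of comparable size when $\delta_2$ is close to $1$. The remaining pieces are routine bookkeeping built on Theorem~\ref{thm1} and the given factor-forecast hypothesis.
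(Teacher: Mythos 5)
Your decomposition and bounds mirror the paper's own proof: it likewise splits $\widehat{\bby}_{t-1}(1)-\bby_t$ (componentwise, in squared form) into loading-estimation terms controlled by Theorem~\ref{thm1} at the $T^{-1/2}$ rate, factor-forecast terms controlled by the $\gamma^{(j)}_{PT}$ hypothesis, and the residual $\boldsymbol{\varepsilon}_t$, so your proposal is correct and takes essentially the same route. If anything, your third step is more explicit than the paper, which simply attributes the last term to the error component without spelling out the Weyl-type comparison linking $\Tr\left(\boldsymbol{\Sigma}_{\varepsilon}(0)\right)$ to $\theta_{PT}$.
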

\begin{remark}
As the one-step-ahead forecasting for $\bby_t$, the forecasting error of $\widehat{\bby}_{t-1}$ is influenced by three aspects: the estimation error $\frac{1}{\sqrt{T}}$, the forecasting error from the forecast factors $\gamma_{PT}^{(j)}$, $j=1,2$, and the model error $\theta_{PT}$ which comes from the error component $\boldsymbol{\varepsilon}_t$ in (\ref{dpca:eq:y2}). 
\end{remark}

\section{Simulations}
\label{dpca:simu}
In this section, we use simulated data to illustrate the advantages of our method. For descriptive convenience, we use ``FHFM'' to represent our method, ``CPCA'' to represent the static PCA method which was described in Section \ref{dpca:relation}, and ``DPCA'' to represent the dynamic PCA method described in Section \ref{dpca:relation} with $\ell_0 = 1$.

For all the three examples, we first examine the variance and serial dependence (lag $1$ auto-covariance) of the first estimated factor by the three methods, respectively. Secondly, we evaluate the serial dependence and the variations remained in the error terms. Finally, we compare the forecasting performance for the $1$ step and $5$ steps ahead forecasting with the root mean squared forecasting error (FRMSE). 

We show that our method extracts the feature with the largest auto-covariance and leaves the least information in the error terms. As a result, our method provides the best forecasting results for all the three examples. The details of the simulations are described in the rest of this section. More simulation studies which are special cases can be find in Appendix \ref{append:cha:stepPCA.1}.

\subsection{Data Generating Processes}

We generate three examples according to the following two-factors model:
\begin{eqnarray*}
\bby_t = \bbb k_{t} + \bba w_{t} + \bvar_t,
\end{eqnarray*}
where $\bba$ and $\bbb$ are two independent $P \times 1$ vectors with elements generated from a uniform distribution $U(0, 1)$ and $\bvar_t$ is a $P \times 1$ error term with elements independently generated from a normal distribution $N(0, 0.2^2)$. For all the three examples, $\{k_t\}_{t =1,2, \dots, T}$ is generated from $AR(1)$ model with coefficient $0.8$, while $\{w_t\}_{t =1,2, \dots, T}$ are different for each example. 
\begin{itemize}[noitemsep]
	\item For \textit{Example 1}, elements in $\{w_t\}_{t =1,2, \dots, T}$ are independently generated from standard normal distribution $N(0, 1)$, which indicates the series of $w_t$ are independent; 

	\item For \textit{Example 2}, we add time-serial dependence to the feature $w_t$, hence $\{w_t\}_{t =1,2, \dots, T}$ is generated from $AR(1)$ model with coefficient $0.05$; 

	\item At last in \textit{Example 3}, we increase the dependence in the series of $w_t$ and generate it from $AR(1)$ model with coefficient $0.2$.
\end{itemize}

\subsection{Performance Evaluation Criterion}

Firstly, we show the variance and serial dependence (lag $1$ auto-covariance) of the first estimated factor of the three methods, respectively. The variance and serial dependence of the first estimated factor are computed as follows:
\begin{eqnarray*}
\text{Time variance}\left(\widehat{k}_t\right) &= \frac{1}{T-1} \sum_{t =1} ^{T} \left(\widehat{k}_t - \frac{1}{T}\sum_{j = 1}^T \widehat{k}_j\right)^2 ,
\end{eqnarray*}
\begin{eqnarray*}
\text{Time dependence}\left(\widehat{k}_t\right) &=  \frac{1}{T-2} \sum_{t =1} ^{T-1} \left(\widehat{k}_t - \frac{1}{T}\sum_{j = 1}^T \widehat{k}_j\right)\left(\widehat{k}_{t+1} - \frac{1}{T}\sum_{j = 1}^T \widehat{k}_j\right),
\end{eqnarray*}
where $\widehat{k_t}$ is the estimated first feature at time $t$. Especially, for our method, we compare the estimated first feature from the first step as it is the feature which intends to improve the forecasting power. 
Besides, we also report the sum of the aforementioned quantities:
\begin{eqnarray*}
\text{Mix}(\widehat{k}_t) =  \text{Time variance}(\widehat{k}_t) + \text{Time dependence}(\widehat{k}_t).
\end{eqnarray*}

Secondly, we investigate the dependence and the variation remained in the error terms as follows:
\begin{eqnarray*}
\text{Time variance}\left(\widehat{\bvar}_{\cdot t}\right) &= \frac{1}{P} \sum_{p =1} ^{P} \left(\frac{1}{T-1} \sum_{t =1} ^{T} \left(\widehat{\varepsilon}_{pt} - \frac{1}{T}\sum_{j = 1}^T \widehat{\varepsilon}_{pj}\right)^2\right),
\end{eqnarray*}
\begin{eqnarray*}
\text{Time dependence}\left(\widehat{\bvar}_{\cdot t}\right) &=  \frac{1}{T(T-1)} \sum_{t_1 =1} ^{T} \sum_{t_2 = 1, t_2 \ne t_1}^{T} \left|cov\left(\widehat{\bvar}_{\cdot t_1}, \widehat{\bvar}_{\cdot t_2}\right)\right|,
\end{eqnarray*}
\begin{eqnarray*}
\text{Cross-sectional variance}\left(\widehat{\bvar}_{p\cdot}\right) &= \frac{1}{T} \sum_{t =1} ^{T} \left(\frac{1}{P-1} \sum_{p =1} ^{P} \left(\widehat{\varepsilon}_{pt} - \frac{1}{P}\sum_{j = 1}^P \widehat{\varepsilon}_{jt}\right)^2\right),
\end{eqnarray*}
\begin{eqnarray*}
\text{Cross-sectional dependence}\left(\widehat{\bvar}_{p\cdot}\right) &=  \frac{1}{P(P-1)} \sum_{p_1 =1} ^{P} \sum_{p_2 = 1, p_2 \ne p_1}^{P} \left|cov\left(\widehat{\bvar}_{p_1 \cdot}, \widehat{\bvar}_{p_2 \cdot}\right)\right|,
\end{eqnarray*} 
where $\widehat{\varepsilon}_{pt}$ is the the error term for age $p$ at time $t$, $\widehat{\bvar}_{\cdot t}$ is error terms for all ages at time $t$, $\widehat{\bvar}_{p\cdot}$ is the error terms across all time for age $p$, and 
 \begin{eqnarray*}
 cov\left(\widehat{\bvar}_{\cdot t_1}, \widehat{\bvar}_{\cdot t_2}\right) &= \frac{1}{P} \sum_{p =1} ^{P} \left(\widehat{\varepsilon}_{pt_1} - \frac{1}{P}\sum_{j = 1}^P \widehat{\varepsilon}_{jt_1}\right) \left(\widehat{\varepsilon}_{pt_2} - \frac{1}{P}\sum_{j = 1}^P \widehat{\varepsilon}_{jt_2}\right) ,
 \end{eqnarray*}
 \begin{eqnarray*}
 cov\left(\widehat{\bvar}_{p_1\cdot}, \widehat{\bvar}_{p_2\cdot}\right) &= \frac{1}{T} \sum_{t =1} ^{T} \left(\widehat{\varepsilon}_{p_1t} - \frac{1}{T}\sum_{j = 1}^T \widehat{\varepsilon}_{p_1j}\right) \left(\widehat{\varepsilon}_{p_2t} - \frac{1}{T}\sum_{j = 1}^T \widehat{\varepsilon}_{p_2j}\right).
 \end{eqnarray*}
To evaluate the forecasting performance, we show the the $1$ step and $5$ steps ahead root mean squared error, which is computed by 
\begin{eqnarray*}
\text{FRMSE}(h) = \left(\frac{\sum_{i = 0}^{h-1}\|\widehat{\bby}_{T-i} - \bby_{T-i}\|_2^2}{hP}\right)^{1/2}
\end{eqnarray*}
where $h = 1,5$ (the forecasting length), $\widehat{\bby}_{T-i}$ is obtained by forecasting with $\{\bby_1, \bby_2, \dots, \bby_{T-h}\}$, and $\bby_{T-i}$ is the true value in the forecasting horizon.

\subsection{Simulation Results}
We try different sets of $(P, T)$: $(50, 50),\ (50, 100),\ (100, 100),\ (100, 200),\ (200,200)$, as we would like to evaluate the performance under the situations that $P$ and $T$ are comparable. The results are shown in Table \ref{dpca:ts.kt} to Table \ref{dpca:ts.forecast}.

\begin{table}[!htbp] 
  \centering 
  \footnotesize
  \caption{Variance and Dependence of $\widehat{k}_t$} 
  \label{dpca:ts.kt} 
\begin{tabularx}{\textwidth}{c *{9}{Y}}
\toprule
 & \multicolumn{3}{c}{Time variance ($\widehat{k_t}$)} 
 & \multicolumn{3}{c}{Time dependence ($\widehat{k_t}$)}
 & \multicolumn{3}{c}{Mix ($\widehat{k_t}$)}\\
\cmidrule(lr){2-4} \cmidrule(lr){5-7} \cmidrule(l){8-10}
 $(P, T)$ & CPCA & DPCA & FHFM & CPCA & DPCA & FHFM & CPCA & DPCA & FHFM \\
\midrule
& \multicolumn{9}{c}{Example 1 (AR(1) 0.8 + N(0,1))}\\
\midrule
$(50, 50)$ & $\bf{51.102}$ & $51.008$ & $48.750$ & $29.015$ & $29.437$ & $\bf{30.174}$ & $80.117$ & $\bf{80.445}$ & $78.923$ \\ 
$(50, 100)$ & $\bf{53.436}$ & $53.330$ & $51.577$ & $31.843$ & $32.304$ & $\bf{32.938}$ & $85.279$ & $\bf{85.635}$ & $84.515$ \\ 
$(100, 100)$ & $\bf{107.483}$ & $107.263$ & $103.799$ & $64.119$ & $65.070$ & $\bf{66.341}$ & $171.601$ & $\bf{172.333}$ & $170.139$ \\ 
$(100, 200)$ & $\bf{110.269}$ & $110.037$ & $107.091$ & $67.532$ & $68.517$ & $\bf{69.651}$ & $177.801$ & $\bf{178.554}$ & $176.742$ \\  
$(200, 200)$ & $\bf{221.091}$ & $220.619$ & $214.682$ & $135.760$ & $137.762$ & $\bf{140.053}$ & $356.851$ & $\bf{358.381}$ & $354.735$ \\ 
\midrule
& \multicolumn{9}{c}{Example 2 (AR(1) 0.8 + AR(1) 0.05)}\\
\midrule
$(50, 50)$ & $\bf{51.000}$ & $50.909$ & $48.941$ & $29.409$ & $29.806$ & $\bf{30.429}$ & $80.409$ & $\bf{80.715}$ & $79.371$ \\ 
$(50, 100)$ & $\bf{53.085}$ & $52.986$ & $51.430$ & $31.958$ & $32.382$ & $\bf{32.940}$ & $85.043$ & $\bf{85.368}$ & $84.371$ \\ 
$(100, 100)$ & $\bf{107.666}$ & $107.466$ & $104.384$ & $65.591$ & $66.440$ & $\bf{67.541}$ & $173.257$ & $\bf{173.906}$ & $171.925$ \\  
$(100, 200)$ & $\bf{110.047}$ & $109.838$ & $107.249$ & $68.630$ & $69.497$ & $\bf{70.471}$ & $178.677$ & $\bf{179.335}$ & $177.719$ \\  
$(200, 200)$ & $\bf{221.705}$ & $221.278$ & $216.268$ & $139.463$ & $141.213$ & $\bf{143.102}$ & $361.168$ & $\bf{362.492}$ & $359.369$ \\
\midrule
& \multicolumn{9}{c}{Example 3 (AR(1) 0.8 + AR(1) 0.2))}\\
\midrule
$(50, 50)$ & $\bf{51.572}$ & $51.498$ & $50.033$ & $31.392$ & $31.685$ & $\bf{32.093}$ & $82.964$ & $\bf{83.183}$ & $82.126$ \\ 
$(50, 100)$ & $\bf{54.963}$ & $54.889$ & $53.942$ & $35.265$ & $35.556$ & $\bf{35.873}$ & $90.228$ & $\bf{90.445}$ & $89.815$ \\ 
$(100, 100)$ & $\bf{108.933}$ & $108.778$ & $106.789$ & $69.386$ & $69.996$ & $\bf{70.666}$ & $178.319$ & $\bf{178.774}$ & $177.455$ \\ 
$(100, 200)$ & $\bf{110.179}$ & $110.024$ & $108.342$ & $71.475$ & $72.086$ & $\bf{72.697}$ & $181.655$ & $\bf{182.110}$ & $181.039$ \\ 
$(200, 200)$ & $\bf{224.522}$ & $224.200$ & $220.766$ & $147.018$ & $148.273$ & $\bf{149.494}$ & $371.541$ & $\bf{372.472}$ & $370.260$ \\ 
\bottomrule
\end{tabularx}
\end{table}
From Table \ref{dpca:ts.kt}, we can see that the CPCA method provides feature with the largest variance, while the first step of our method (FHFM) captures the feature with the largest lag $1$ auto-covariance, which summarizes the most of the time serial dependence of the original data. These simulated results corroborate the analysis in Section \ref{dpca:relation}.

On the other hand, we can see that the DPCA method provides a feature with the largest sum of variance and lag $1$ auto-covariance. Our method utilizes the same information with the DPCA, while we have two steps. When we compare the feature from our first step with that of the DPCA method, it is not surprising that DPCA one has larger Mix$(\widehat{k}_t)$. However, our second step provides features that capture the remaining variance, which is a necessary supplement for the first step to ensure that the final sets of features provide good fitting to the original data.

From Table \ref{dpca:ts.error} and \ref{dpca:ts.error.depend}, we can see that our method always provides the error terms with the smallest time and cross-sectional variance and dependence. It shows that FHFM can capture most of the time-serial dependence and variation information of all the ages among the three methods. The better model fitting performance of our method is supported by these results. 
\begin{table}[!htbp] \centering 
  \footnotesize
  \caption{Variance across Time and Sections of the error terms} 
  \label{dpca:ts.error} 
\begin{tabularx}{\textwidth}{c *{6}{Y}}
\toprule
 & \multicolumn{3}{c}{Time Variance ($\widehat{\bvar}_{\cdot t}$)} 
 & \multicolumn{3}{c}{Cross-sectional Variance ($\widehat{\bvar}_{p\cdot}$)}\\
\cmidrule(lr){2-4} \cmidrule(l){5-7}
 $(P, T)$ & CPCA & DPCA & FHFM & CPCA & DPCA & FHFM \\
\midrule
& \multicolumn{6}{c}{Example 1 (AR(1) 0.8 + N(0,1))}\\
\midrule
$(50, 50)$ & $0.141$ & $0.142$ & $\bf{0.038}$ & $0.145$ & $0.146$ & $\bf{0.038}$ \\ 
$(50, 100)$ & $0.146$ & $0.146$ & $\bf{0.038}$ & $0.148$ & $0.151$ & $\bf{0.038}$   \\ 
$(100, 100)$ & $0.147$ & $0.147$ & $\bf{0.039}$ & $0.151$ & $0.153$ & $\bf{0.039}$  \\ 
$(100, 200)$ & $0.151$ & $0.151$ & $\bf{0.039}$ & $0.154$ & $0.156$ & $\bf{0.039}$ \\ 
$(200, 200)$ & $0.152$ & $0.152$ & $\bf{0.039}$ & $0.156$ & $0.158$ & $\bf{0.039}$ \\
\midrule
& \multicolumn{6}{c}{Example 2 (AR(1) 0.8 + AR(1) 0.05)}\\
\midrule
$(50, 50)$ & $0.142$ & $0.142$ & $\bf{0.038}$ & $0.145$ & $0.147$ & $\bf{0.038}$ \\ 
$(50, 100)$ & $0.148$ & $0.148$ & $\bf{0.038}$ & $0.150$ & $0.152$ & $\bf{0.038}$  \\ 
$(100, 100)$ & $0.147$ & $0.148$ & $\bf{0.039}$ & $0.151$ & $0.153$ & $\bf{0.039}$  \\ 
$(100, 200)$ & $0.150$ & $0.150$ & $\bf{0.039}$ & $0.153$ & $0.155$ & $\bf{0.039}$ \\ 
$(200, 200)$ & $0.152$ & $0.152$ & $\bf{0.039}$ & $0.155$ & $0.158$ & $\bf{0.039}$ \\
\midrule
& \multicolumn{6}{c}{Example 3 (AR(1) 0.8 + AR(1) 0.2))}\\
\midrule
$(50, 50)$ & $0.144$ & $0.144$ & $\bf{0.038}$ & $0.148$ & $0.149$ & $\bf{0.038}$ \\ 
$(50, 100)$ & $0.150$ & $0.150$ & $\bf{0.038}$ & $0.152$ & $0.153$ & $\bf{0.038}$  \\ 
$(100, 100)$ & $0.151$ & $0.151$ & $\bf{0.039}$ & $0.154$ & $0.156$ & $\bf{0.039}$  \\ 
$(100, 200)$ & $0.154$ & $0.154$ & $\bf{0.039}$ & $0.157$ & $0.159$ & $\bf{0.039}$ \\ 
$(200, 200)$ & $0.155$ & $0.155$ & $\bf{0.039}$ & $0.159$ & $0.161$ & $\bf{0.039}$ \\ 
\bottomrule
\end{tabularx}
\end{table}

\begin{table}[!htbp] \centering 
  \footnotesize
  \caption{Covariance across Time and Sections of error terms} 
  \label{dpca:ts.error.depend} 
\begin{tabularx}{\textwidth}{c *{6}{Y}}
\toprule
 & \multicolumn{3}{c}{Time dependence ($\widehat{\bvar}_{\cdot t}$)} 
 & \multicolumn{3}{c}{Cross-sectional dependence ($\widehat{\bvar}_{p\cdot}$)}\\
\cmidrule(lr){2-4} \cmidrule(l){5-7}
 $(P, T)$ & CPCA & DPCA & FHFM & CPCA & DPCA & FHFM \\
\midrule
& \multicolumn{6}{c}{Example 1 (AR(1) 0.8 + N(0,1))}\\
\midrule
$(50, 50)$ & $0.067$ & $0.067$ & $\bf{0.005}$ & $0.072$ & $0.073$ & $\bf{0.005}$  \\ 
$(50, 100)$ & $0.069$ & $0.069$ & $\bf{0.004}$ & $0.074$ & $0.075$ & $\bf{0.003}$ \\ 
$(100, 100)$ & $0.069$ & $0.069$ & $\bf{0.003}$ & $0.075$ & $0.076$ & $\bf{0.003}$ \\ 
$(100, 200)$ & $0.071$ & $0.071$ & $\bf{0.003}$ & $0.076$ & $0.078$ & $\bf{0.002}$ \\ 
$(200, 200)$ & $0.072$ & $0.072$ & $\bf{0.002}$ & $0.078$ & $0.079$ & $\bf{0.002}$ \\ 
\midrule
& \multicolumn{6}{c}{Example 2 (AR(1) 0.8 + AR(1) 0.05)}\\
\midrule
$(50, 50)$ & $0.067$ & $0.067$ & $\bf{0.005}$ & $0.072$ & $0.073$ & $\bf{0.005}$  \\ 
$(50, 100)$ & $0.070$ & $0.070$ & $\bf{0.004}$ & $0.075$ & $0.076$ & $\bf{0.003}$ \\ 
$(100, 100)$ & $0.069$ & $0.069$ & $\bf{0.003}$ & $0.075$ & $0.076$ & $\bf{0.003}$ \\ 
$(100, 200)$ & $0.071$ & $0.071$ & $\bf{0.003}$ & $0.076$ & $0.078$ & $\bf{0.002}$ \\ 
$(200, 200)$ & $0.072$ & $0.072$ & $\bf{0.002}$ & $0.078$ & $0.079$ & $\bf{0.002}$ \\ 
\midrule
& \multicolumn{6}{c}{Example 3 (AR(1) 0.8 + AR(1) 0.2))}\\
\midrule
$(50, 50)$ & $0.069$ & $0.068$ & $\bf{0.005}$ & $0.074$ & $0.075$ & $\bf{0.005}$  \\ 
$(50, 100)$ & $0.071$ & $0.071$ & $\bf{0.004}$ & $0.076$ & $0.077$ & $\bf{0.003}$ \\ 
$(100, 100)$ & $0.072$ & $0.072$ & $\bf{0.003}$ & $0.077$ & $0.078$ & $\bf{0.003}$ \\ 
$(100, 200)$ & $0.074$ & $0.074$ & $\bf{0.003}$ & $0.079$ & $0.080$ & $\bf{0.002}$ \\ 
$(200, 200)$ & $0.074$ & $0.074$ & $\bf{0.002}$ & $0.080$ & $0.081$ & $\bf{0.002}$ \\ 
\bottomrule
\end{tabularx}
\end{table}

Finally in Table \ref{dpca:ts.forecast}, we show the the $1$ step and $5$ steps ahead root mean square errors for the three examples. Overall, our method (FHFM) has the smallest FRMSE for all the examples while the CPCA method performs the worst on the forecasting. The phenomenon tells us that the features extracted via the auto-covariance matrix are better than the ones from the covariance matrix, in the view of the forecasting accuracy. Moreover, the FHFM has smaller forecasting error than the DPCA, which indicates that, extracting the different types of features sequentiality can benefit the forecasting more than mixing them together.
\begin{table}[!htbp] \centering 
  \footnotesize
  \caption{1 Step and 5 Steps Ahead Forecasting RMSE} 
  \label{dpca:ts.forecast} 
\begin{tabularx}{\textwidth}{c *{6}{Y}}
\toprule
 & \multicolumn{3}{c}{1 Step Ahead} 
 & \multicolumn{3}{c}{5 Steps Ahead}\\
\cmidrule(lr){2-4} \cmidrule(l){5-7}
 $(P, T)$ & FHFM & CPCA & DPCA & FHFM & CPCA & DPCA \\
\midrule
& \multicolumn{6}{c}{Example 1 (AR(1) 0.8 + N(0,1))}\\
\midrule
$(50, 50)$ & $\bf{0.808}$ & $0.829$ & $0.822$ & $\bf{1.046}$ & $1.053$ & $1.051$ \\  
$(50, 100)$ & $\bf{0.774}$ & $0.793$ & $0.787$ & $\bf{1.000}$ & $1.004$ & $1.004$ \\   
$(100, 100)$ & $\bf{0.789}$ & $0.812$ & $0.804$ & $\bf{1.046}$ & $1.054$ & $1.053$ \\ 
$(100, 200)$ & $\bf{0.790}$ & $0.814$ & $0.807$ & $\bf{1.029}$ & $1.035$ & $1.034$ \\ 
$(200, 200)$ & $\bf{0.800}$ & $0.819$ & $0.812$ & $\bf{0.986}$ & $0.996$ & $0.993$ \\ 
\midrule
& \multicolumn{6}{c}{Example 2 (AR(1) 0.8 + AR(1) 0.05)}\\
\midrule
$(50, 50)$ & $\bf{0.827}$ & $0.850$ & $0.844$ & $\bf{1.039}$ & $1.049$ & $1.047$ \\ 
$(50, 100)$ & $\bf{0.802}$ & $0.818$ & $0.813$ & $\bf{1.041}$ & $1.049$ & $1.046$ \\   
$(100, 100)$ & $\bf{0.804}$ & $0.826$ & $0.820$ & $\bf{1.025}$ & $1.028$ & $1.028$ \\  
$(100, 200)$ & $\bf{0.790}$ & $0.810$ & $0.802$ & $\bf{0.993}$ & $0.998$ & $0.995$ \\ 
$(200, 200)$ & $\bf{0.787}$ & $0.807$ & $0.800$ & $\bf{0.986}$ & $0.993$ & $0.991$ \\ 
\midrule
& \multicolumn{6}{c}{Example 3 (AR(1) 0.8 + AR(1) 0.2))}\\
\midrule
$(50, 50)$ & $\bf{0.791}$ & $0.809$ & $0.805$ & $\bf{1.039}$ & $1.045$ & $1.043$  \\ 
$(50, 100)$ & $\bf{0.799}$ & $0.812$ & $0.808$ & $\bf{1.034}$ & $1.037$ & $1.035$ \\  
$(100, 100)$ & $\bf{0.756}$ & $0.771$ & $0.766$ & $\bf{1.035}$ & $1.039$ & $1.040$ \\ 
$(100, 200)$ & $\bf{0.813}$ & $0.825$ & $0.822$ & $\bf{1.011}$ & $1.018$ & $1.015$ \\ 
$(200, 200)$ & $\bf{0.787}$ & $0.803$ & $0.799$ & $\bf{1.008}$ & $1.017$ & $1.015$ \\ 
\bottomrule
\end{tabularx}
\end{table}

In summary, our method (FHFM) provides more accurate forecasts for the high dimensional time series data simulated in this section. In the Appendix \ref{append:cha:stepPCA.1}, more special simulated cases are presented.

%%%%%%%%%%%%%%%%%%%%%%%%%%%%%%%

\section{Analysis of the US Mortality Data}
\label{dpca:empirical}
In this section, we apply the FHFM on age-specific mortality data of the US. The data is the mortality data of the US population in the Human Mortality Database (HMD) (\citenum{HMD}) obtained in December 2020. HMD contains original calculations of death rates and life tables for the populations in 40 countries and areas, as well as the input data used in constructing those tables. The data we originally obtained from HMD includes the annual age-sex-specific information of the number of exposures to risk, the number of deaths, and the central death rate, for ages from 0 to 110+ (age 100 and above) during the period from $1933$ to $2018$. We focus our analysis on the age-specific central death rates of the total sex population. As the mortality data for advanced ages are measured sparsely which is mentioned in \citet{lee1992modeling}, death rates for the older age groups (from age 91 to 110+) are summarized and incorporated into a modified death rate denoted as age $90+$. In view of this, the annual age-specific death rates under study consist of a matrix data with $86$ yearly observations (1933-2018) for $91$ ages (0-90$+$).
Following \citet{lee1992modeling}, we consider the log transformed central death rates $[\ln(m_{p,t})]_{P\times T}$, where $P = 91, T = 86$, for modeling purposes. By doing so, we can guarantee that the estimated and predicted central death rates are non-negative.
We show the better model fitting and forecasting performance of the FHFM compared to several other models in this section. Moreover, we explain by examining the factor loadings of the features that the FHFM and its two-steps dimension reduction estimation are necessary on the mortality data. At last, we illustrate with two applications that improving the accuracy of the predicted death rates is crucial.

\subsection{Stationarity}
\label{dpca:subsec:station}
As the log central death rates are not stationary time series, we modify the first estimation step in our method (FHFM) to deal with the non-stationary issue, which is summarized in Algorithm \ref{sta}.
\begin{algorithm}
\SetAlgoLined
\KwIn{Data $\bbY = [\bby_1, \dots, \bby_T] \in \mathbb{R}^{P \times T}$; $\bby_t = \left(\ln(m_{1,t}), \ln(m_{2,t}), \dots, \ln(m_{P,t})\right)^{\top}$.}
\KwOut{$\widehat{k}_{it}^{(1)}$, which is described in the first step of our method (see \ref{dpca:eq:ex2}).}
\caption{{\bf Modified Dimension Reduction Step 1 of FHFM} \label{sta}}
\textbf{Dimension Reduction Step 1}:\\
\nl Compute $\bbd_t = \bby_t - \bby_{t-1}$ and $\overline{\bbd} = T^{-1}\sum_{t=2}^{T-1} \bbd_t$, $t = 2, 3, \dots, T$\;
\nl Compute \begin{align*}
	&\widehat{\Sig}_{\bbd}(1) = \frac{1}{T-1} \sum_{t=2}^{T-1} (\bbd_{t+1} - \overline{\bbd})(\bbd_{t} - \overline{\bbd})^{\top}
	\end{align*} and $\widehat{\Sig}_{\bbd}(1) \widehat{\Sig}_{\bbd}(1)^{\top}$\;
\nl Compute $\widehat{\bbb}_i: P \times 1$ by the eigenvector corresponding to the $i^{th}$ largest eigenvalue of $\widehat{\Sig}_{\bbd}(1) \widehat{\Sig}_{\bbd}(1)^{\top}$,where $i =1,2, \dots, \widehat{r}_1$. $\widehat{r}_1$ is chosen by the method introduced in Section \ref{dpca:model}\;
\nl Compute $\overline{\bby} = T^{-1}\stT \bby_t$ and $\widehat{k}_{it}^{(1)} = \widehat{\bbb}_i^{\top}(\bby_t- \overline{\bby})$.
\end{algorithm}

The difference of Algorithm \ref{sta} and the first step in Algorithm \ref{Algorithm} is, instead of obtaining $\widehat{\bbb}_i$ by the eigenvectors of $\widehat{\Sig}_{\bby}(1) \widehat{\Sig}_{\bby}(1)^{\top}$, we get it from $\widehat{\Sig}_{\bbd}(1) \widehat{\Sig}_{\bbd}(1)^{\top}$. 
Because $\bby_t$ is not stationary, 
$\widehat{\Sig}_{\bby}(1) \widehat{\Sig}_{\bby}(1)^{\top}$ is not a good estimator for the population lag $1$ auto-covariance of $\bby_t$.

We now explain the reason for using $\widehat{\Sig}_{\bbd}(1) \widehat{\Sig}_{\bbd}(1)^{\top}$. From $\bbd_t = \bby_t - \bby_{t-1}$, we have $\bby_t = \bby_{t-1} + \bbd_t = \bby_{t-2} + \bbd_{t-1} + \bbd_{t} = \dots =\sum_{i = -\infty}^{t} \bbd_i$. With the coefficients $\bbb_i$, $\bbd_i$ can be expressed as $\bbd_i = \sum_{j = 1}^P \bbb_j \varphi_{ij}$ where $\varphi_{ij} = \bbb_j^{\top}\bbd_i$, thus 
\begin{align*}
\bby_t = \sum_{i = -\infty}^{t} \bbd_i = \sum_{i = -\infty}^{t}\left(\sum_{j = 1}^P \bbb_j \varphi_{ij}\right) = \sum_{j = 1}^P \left(\bbb_j \sum_{i = -\infty}^{t}\varphi_{ij}\right) = \sum_{j = 1}^P \bbb_j \psi_{tj},
\end{align*}
where $\psi_{tj} = \sum_{i = -\infty}^{t}\varphi_{ij}$. Thus when performing the dimension reduction, the coefficients to form a low-dimensional representation of $\bbu_t$ should be the same as those of $\bby_t$. If $\bbd_t$ is stationary, $\widehat{\Sig}_{\bbd}(1) \widehat{\Sig}_{\bbd}(1)^{\top}$ is a good estimator for $\Sig_{\bbd}(1) \Sig_{\bbd}(1)^{\top}$. Then eigenvectors of $\widehat{\Sig}_{\bbd}(1) \widehat{\Sig}_{\bbd}(1)^{\top}$ are better estimators of the factor loadings than those of $\widehat{\Sig}_{\bby}(1) \widehat{\Sig}_{\bby}(1)^{\top}$. We did stationary tests on the lag $1$ differenced series of each age separately, and more than $72\%$ of the ages have stationary results under significant level $0.1$. This might not be enough to say $\bbu_t$ is stationary, but for this dataset, it is better than the original log central death rates. 

Due to the same reason, we make the same modification for the static PCA and the dynamic PCA methods when comparing on the US mortality data. Thus from now on, ``CPCA'' and ``DPCA'' refers to the static PCA and dynamic PCA which deal with the non-stationary issue, respectively. In addition, for comparison purpose, we also apply the static PCA method without considering the non-stationary issue, which is exactly the same as the method in \citet{lee1992modeling} and we call it ``Lee-Carter'' in the following sections. 

\subsection{Revisit the Structure of the US Mortality Data}
Let us have a further discussion about the suitability of the FHFM for the US mortality data. 
%Recall that in the simulation part, we demonstrated that FHFM performed better than CPCA and DPCA, especially when the data consisted of two parts, with one part had relatively larger variance but smaller time serial dependence compared to the other part and we would like to show this is similar to the structure of the US mortality rates. 
%To show the structure more clearly, 
We examine the variance and time serial dependence of the central death rates of the US. Because we modified the first step of the FHFM according to Section \ref{dpca:subsec:station}, instead of examining the original data, we check the first difference of log central death rates for each age. That is, for each age $p$, we compute the variance and lag $1$ autocorrelation (representing the time serial dependence) of $d_{p,2}, d_{p,3}, \dots, d_{p,T}$, where $d_{p,t} = \log{(m_{p,t})} - \log{(m_{p, t-1})}$, $p = 0,1, \dots, 90+$, and $T = 86$. The results are shown in Figure \ref{dpca:fg:f13}.

In Figure \ref{dpca:fg:f13}, the top red plot shows the variances of $d_{p,\cdot}$ for age $p = 0, 1, \dots, 90+$, and the bottom blue line shows the lag $1$ autocorrelation of $d_{p,\cdot}$. From the plot, we see that the variances of ages from $5$ to $13$ are larger than those of ages from $25$ to $40$, while the lag $1$ autocorrelations of ages from $5$ to $13$ are smaller than those of ages from $25$ to $40$. This is the same structure with the Example 5 in the simulation (in Appendix \ref{append:cha:stepPCA.1}). 
In addition, we have seen previously that the death rates of all ages have similar patterns, which indicates that information from part of the ages can be borrowed to help with the forecasting of other ages. 
Thus the first step of the proposed method would like to use information from the ages with powerful forecasting ability, such as ages $25$ to $40$, to help with the forecasting of other ages with weak correlations, ages $5$ to $13$ for instance. 
On the other hand, the parts with powerful forecasting ability do not contain sufficient variations. For example, most of the variation is contained in younger ages while they do not all have large correlations. Therefore, the second step of our method utilizes static PCA to help retain sufficient variation of the original data, which is necessary for the final recovery for forecasting. 
As a result, FHFM is particularly suitable for the US mortality data. In the next section, we examine the model fitting performance of our method and illustrate the necessity of both steps using the estimated factor loadings in the following section.

\begin{figure} 
\centering
\includegraphics[width=0.7\linewidth]{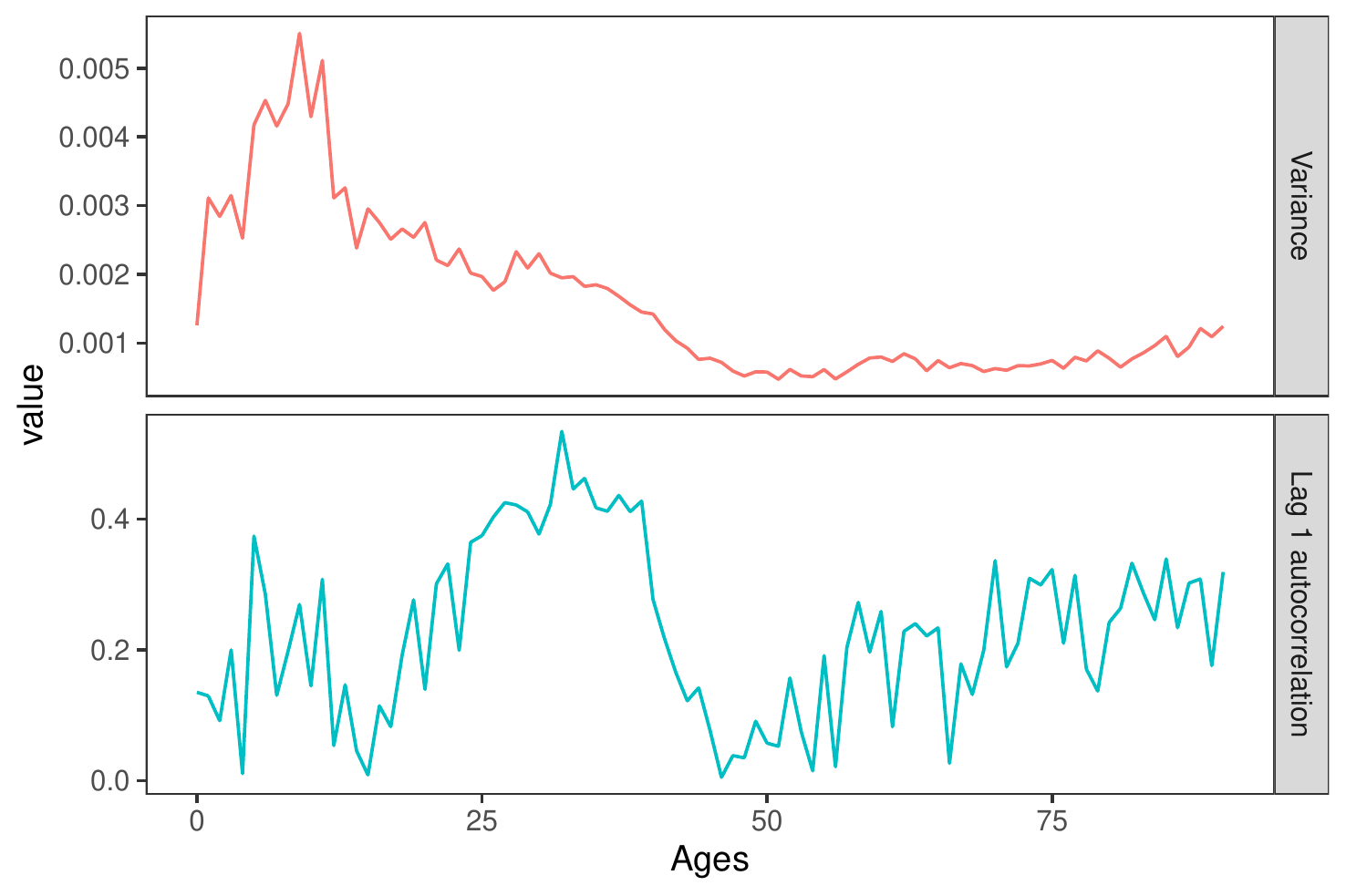}
\caption{Variance and Time serial dependence of ages}
\label{dpca:fg:f13}
\end{figure}

\subsection{Model Fitting Performance Comparison}

In this section, we check the performance of the FHFM on fitting the original data.
We apply FHFM, Lee-Carter, CPCA, and DPCA on the logarithm of central rates of death and compare the fitting performances using the root mean square error (RMSE). Based on the criteria in Section \ref{dpca:model}, all the methods choose only one factor. For FHFM, we have $\widehat{r}_1 = 1$ and $\widehat{r}_2 = 1$. Table \ref{dpca:tb:t1.1} and Table \ref{dpca:tb:t1.2} show the RMSEs of the four methods for selected ages ($5$, $25$, $50$, $65$ and $85$) and years ($1933$, $1953$, $1993$, and $2018$) respectively, along with the overall RMSE of the whole data. From the tables, the RMSEs of the FHFM are the lowest among the other three methods, which shows that FHFM fits the data the best. As we described before, the two steps of the FHFM guarantee that it captures sufficient variations of the original data and results in a good model fitting. In addition, we see that the RMSEs of Lee-Carter are smaller than those of CPCA. It implies that the fitting performance on the log central death rates is worse for static PCA if we revise the method to deal with non-stationarity. This phenomenon may be caused by special characteristics of the mortality data, which is interesting to explore further.

\begin{table}[!htbp] \centering 
  \footnotesize
  \caption{RMSE, for some specific ages} 
  \label{dpca:tb:t1.1} 
\begin{tabularx}{\textwidth}{c *{4}{Y}}
\toprule
 Age & FHFM & Lee-Carter & CPCA & DPCA \\
\midrule
5 & $\bf{0.049}$ & $0.062$ & $0.304$ & $0.274$ \\ 
25 & $\bf{0.061}$ & $0.126$ & $0.191$ & $0.189$ \\ 
50 & $\bf{0.051}$ & $0.063$ & $0.108$ & $0.119$ \\ 
65 & $\bf{0.038}$ & $0.086$ & $0.126$ & $0.152$ \\ 
85 & $\bf{0.046}$ & $0.067$ & $0.078$ & $0.119$ \\ 
\midrule 
RMSE & $\bf{0.055}$ & $0.083$ & $0.151$ & $0.155$ \\   
\bottomrule
\end{tabularx}
\end{table}

\begin{table}[!htbp] \centering 
  \footnotesize
  \caption{RMSE, for some specific years} 
  \label{dpca:tb:t1.2} 
\begin{tabularx}{\textwidth}{c *{4}{Y}}
\toprule
 Year & FHFM  & Lee-Carter & CPCA & DPCA \\
\midrule
1933 & $\bf{0.076}$ & $0.153$ & $0.186$ & $0.171$ \\ 
1953 & $\bf{0.047}$ & $0.092$ & $0.160$ & $0.165$ \\ 
1993 & $\bf{0.063}$ & $0.080$ & $0.142$ & $0.145$ \\ 
2018 & $\bf{0.083}$ & $0.140$ & $0.275$ & $0.289$ \\ 
\midrule
RMSE & $\bf{0.055}$ & $0.083$ & $0.151$ & $0.155$ \\ 
\bottomrule
\end{tabularx}
\end{table}

We can also visualize the fitting performances of the four methods via plots. Figure \ref{dpca:fig:side:a} shows the actual and fitted log central rates of death for selected ages ($5$, $25$, $50$, $65$ and $85$) over all historical years from $1933$ to $2018$, while Figure \ref{dpca:fig:side:b} shows the actual and fitted log central rates of death for selected years ($1933$, $1953$, $1993$ and $2018$) over all ages from $0$ to $90+$. The black lines represent the actual log central rates of death; the red, light blue, green and blue dashed lines show the fitted log central rates of death using FHFM, Lee-Carter, CPCA, and DPCA respectively. From Figure \ref{dpca:fig:side:a}, we see that the FHFM captures the time-serial patterns well for all selected ages even when there is curvature, such as the paths of ages $25$, $65$ and $85$. However, the Lee-Carter, CPCA and DPCA failed to recover the time-serial dependence appropriately and hence provide worse fitting results than the FHFM. From Figure \ref{dpca:fig:side:b}, we see that the four methods provide similar fittings for ages $0$ to $20$. For ages $20$ to $40$, the mortality patterns changed and FHFM shows a better fitting performance than the other three methods. The four models perform similarly again for ages $40$ and above with the FHFM's fitting performance slightly better than the other three, especially for the year $2018$. Hence, it shows that the FHFM captures both time-serial (time dimension) dependence and cross-sectional (age dimension) variation well and exhibits advantages over the other three methods especially when the mortality trends change.

\begin{figure}
\centering
\begin{minipage}[t]{0.48\textwidth}
\includegraphics[width=\linewidth]{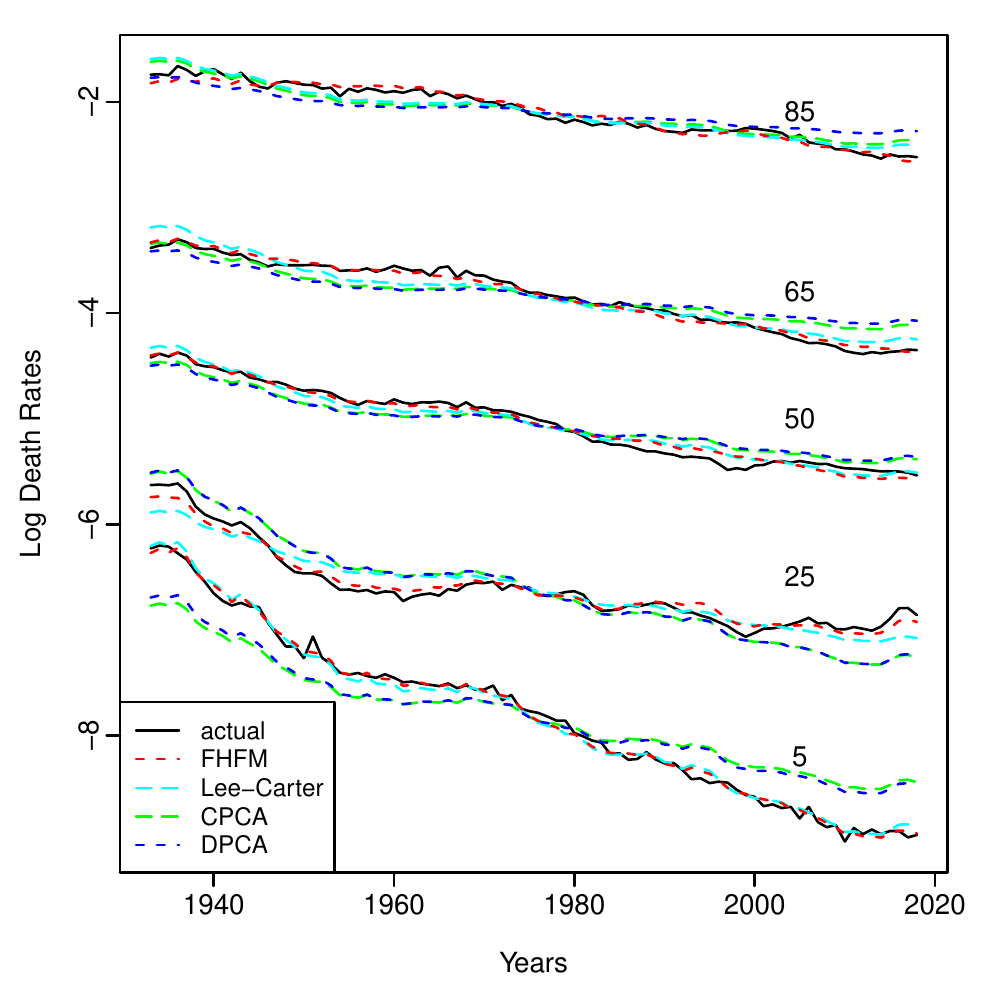}
\caption{Log Death Rates, $1933-2018$ for ages $5$, $25$, $50$, $65$, $85$; Actual and Fitted. }
\label{dpca:fig:side:a}
\end{minipage}
\hspace*{0.01cm}
\begin{minipage}[t]{0.48\textwidth}
\includegraphics[width=\linewidth]{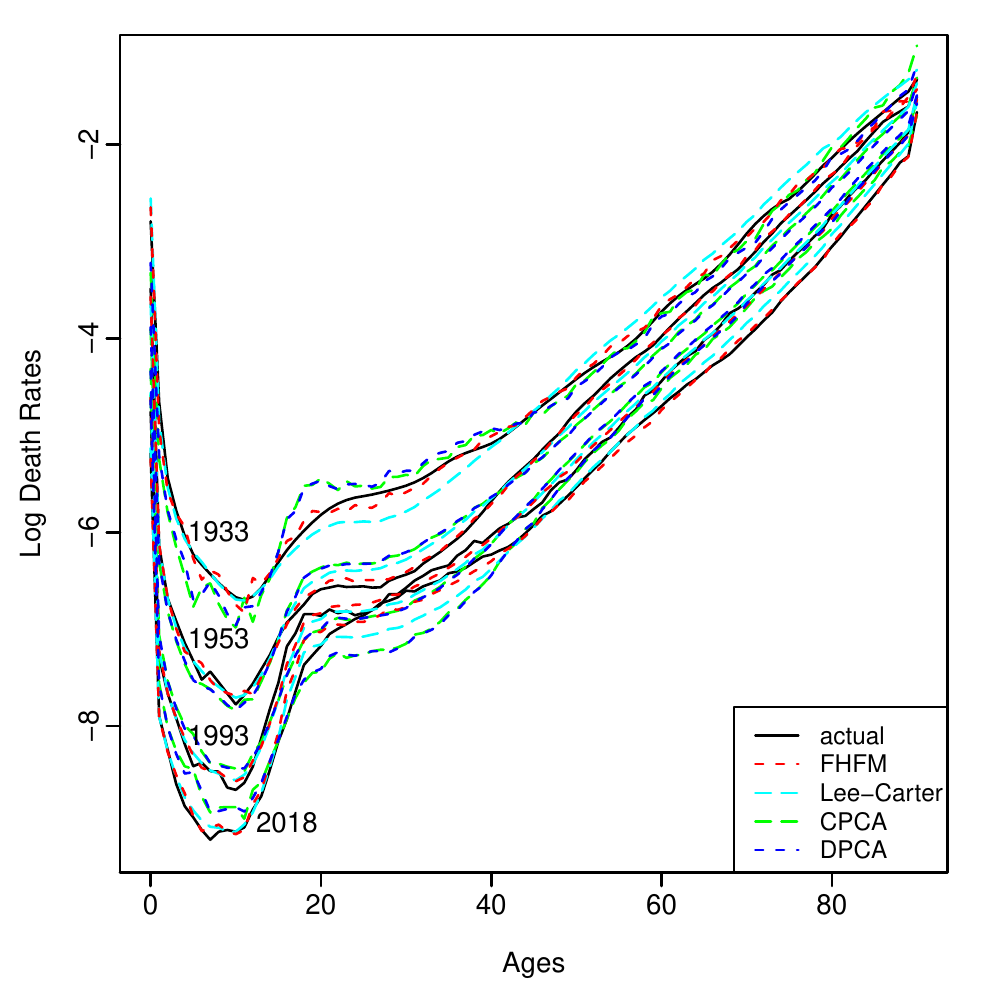}
\caption{Log Death Rates, ages $0$ to $90+$ for years $1933$, $1953$, $1993$, $2018$; Actual and Fitted.}
\label{dpca:fig:side:b}
\end{minipage}
\end{figure}

\subsection{Forecasting Performance Comparison}
In this section, we show that FHFM can forecast the central death rates of the US more accurately than the other methods. We compare the forecasting performance of FHFM with the Lee-Carter model in \citet{lee1992modeling} (`Lee-Carter'), the static PCA described in Section 3 (`CPCA'), the dynamic PCA described in Section 3 (`DPCA'), the Age-Period-Cohort model consided in \citet{currie2004smoothing} (`APC'), the two factor Cairns-Blake-Dowd model introduced by \citet{cairns2006two} (`CBD'), the quadratic CBD models with cohort effects considered in \citet{Cairns2009} (`M6', `M7', and `M8'), the functional mortality model of \citet{hyndman2007robust} (`FDM'), and the univariate ARIMA model (`Individual'). We use rolling window out-of-sample forecasting to evaluate the performance. The univariate ARIMA model is fitted with time series of each age independently and the model structure is selected based on Bayesian information criterion (BIC); we refer to this as ``individual'' model. The individual model is included for comparison, as we would like to show that conducting dimension reduction before the forecasting is necessary, especially in the long term. 

We use the data covering years from $2009$ to $2018$ as the test set and the historical data of previous years as the training set for modeling and testing purposes. Table \ref{dpca:tb:t2} shows the forecasting root mean square errors (FRMSEs) of $1$ to $25$ steps ahead forecasts using the five methods. For each forecast, we have $10$ rolling window sub-training sets for the $10$ test years and the values presented in the table are the averages of the $10$ rolling window FRMSEs. Figure \ref{dpca:fg:f9} plots the results shown in Table \ref{dpca:tb:t2} except for those of M6, M7, and M8, as the performance of M6, M7, and M8 are much worse than the others. We see that as the length of prediction steps increases, the performance of all methods gets worse. This is because the longer-term forecasting is always harder and contains more uncertainty. 

The individual model and FDM have better forecasting accuracy than the other models when $h \le 13$, while  perform worse in the long-term compared with the FHFM and the Lee-Carter. The individual model focuses on capturing the mortality pattern of each age vector, which ignores the dependence among different ages, overlooks the cross-sectional common information, and contains many more parameters than the FHFM. The FDM performs similarly to the individual model, which indicates its forecasts might put more weights on the individual patterns than the common information.  In the short term, individual factors dominate the forecasting performance, so the individual model and FDM perform better than others.  However, the long-term mortality forecasting is usually more important in practice as it provides important assumptions for various actuarial practices and government policymaking, such as life insurance and annuities pricing and reserving, asset liability management of pension funds, and the solvency analysis of social securities. In the long term, different ages share similar drivers of the mortality variation, such as technology innovation, health improvement, wars, and epidemics. 
So common factors dominate the forecasting performance in the long term, and dimension reduction plays a crucial role in recovering the common information from the high dimensional mortality data. 

FHFM performs the best overall (with the smallest average FRMSE, 0.181, over all forecasting horizons) and especially for long-term forecasting. Comparing to other methods, FHFM has the smallest FRMSEs for all $h>13$, and the smallest FRMSEs for all $h$ when excluding the individual model and FDM. The empirical analysis shows that FHFM successfully extracts features with powerful forecasting ability and provides a good low-rank representation.

\begin{table}[!htbp] \centering 
  \footnotesize
  \caption{Comparison of the forecasting performance on the US data: Rolling Window FRMSE. `FHFM' for our proposed model; `Lee-Carter' for the Lee-Carter model in \citet{lee1992modeling}; `CPCA' for the static PCA described in Section 3; `DPCA' for the dynamic PCA described in Section 3; `APC' for the Age-Period-Cohort model consided in \citet{currie2004smoothing}; `CBD' for the two factor Cairns-Blake-Dowd model introduced by \citet{cairns2006two}; `M6', `M7', and `M8' for the quadratic CBD models with cohort effects considered in \citet{Cairns2009}; `FDM' for the functional mortality model of \citet{hyndman2007robust}; `Individual' for fitting univariate ARIMA model with time series of each age independently.}
  \label{dpca:tb:t2} 
\begin{tabularx}{\textwidth}{c *{11}{Y}}
\toprule
 $h$ & FHFM & Lee-Carter & CPCA & DPCA & APC & CBD&  M6 & M7 & M8 & FDM & Indiv-idual\\
\midrule
1 & $0.092$ & $0.130$ & $0.245$ & $0.261$ & $0.282$ & $0.614$ & $0.848$ & $0.796$ & $0.840$ & $0.040$ & $\bf{0.039}$ \\ 
2 & $0.104$ & $0.139$ & $0.249$ & $0.265$ & $0.301$ & $0.632$ & $0.923$ & $0.929$ & $0.866$ & $\bf{0.059}$ & $0.060$ \\ 
3 & $0.118$ & $0.149$ & $0.255$ & $0.271$ & $0.321$ & $0.650$ & $1.005$ & $1.078$ & $0.890$ & $\bf{0.080}$ & $0.081$ \\ 
4 & $0.132$ & $0.161$ & $0.262$ & $0.277$ & $0.338$ & $0.669$ & $1.095$ & $1.242$ & $0.912$ & $0.099$ & $\bf{0.097}$ \\ 
5 & $0.146$ & $0.172$ & $0.270$ & $0.286$ & $0.354$ & $0.685$ & $1.193$ & $1.418$ & $0.932$ & $0.118$ & $\bf{0.115}$ \\ 
6 & $0.156$ & $0.180$ & $0.276$ & $0.293$ & $0.367$ & $0.701$ & $1.293$ & $1.600$ & $0.950$ & $0.131$ & $\bf{0.128}$ \\ 
7 & $0.164$ & $0.187$ & $0.282$ & $0.299$ & $0.380$ & $0.714$ & $1.405$ & $1.796$ & $0.967$ & $0.142$ & $\bf{0.137}$ \\ 
8 & $0.169$ & $0.191$ & $0.286$ & $0.304$ & $0.390$ & $0.727$ & $1.519$ & $1.991$ & $0.980$ & $0.150$ & $\bf{0.142}$ \\ 
9 & $0.174$ & $0.195$ & $0.290$ & $0.308$ & $0.400$ & $0.741$ & $1.644$ & $2.197$ & $0.991$ & $0.155$ & $\bf{0.145}$ \\ 
10 & $0.173$ & $0.196$ & $0.291$ & $0.310$ & $0.409$ & $0.755$ & $1.757$ & $2.397$ & $0.997$ & $0.158$ & $\bf{0.145}$ \\ 
11 & $0.177$ & $0.199$ & $0.295$ & $0.315$ & $0.417$ & $0.773$ & $1.868$ & $2.604$ & $1.000$ & $0.164$ & $\bf{0.152}$ \\ 
12 & $0.182$ & $0.203$ & $0.299$ & $0.319$ & $0.428$ & $0.788$ & $1.961$ & $2.801$ & $0.999$ & $0.170$ & $\bf{0.164}$ \\ 
13 & $0.190$ & $0.209$ & $0.304$ & $0.325$ & $0.437$ & $0.804$ & $2.065$ & $3.008$ & $0.998$ & $0.204$ & $\bf{0.188}$ \\ 
14 & $\bf{0.197}$ & $0.215$ & $0.309$ & $0.330$ & $0.446$ & $0.816$ & $2.157$ & $3.220$ & $0.996$ & $0.215$ & $0.202$ \\ 
15 & $\bf{0.205}$ & $0.221$ & $0.314$ & $0.335$ & $0.456$ & $0.823$ & $2.232$ & $3.432$ & $0.995$ & $0.233$ & $0.223$ \\ 
16 & $\bf{0.210}$ & $0.226$ & $0.317$ & $0.338$ & $0.464$ & $0.828$ & $2.323$ & $3.674$ & $0.996$ & $0.251$ & $0.244$ \\ 
17 & $\bf{0.217}$ & $0.233$ & $0.321$ & $0.342$ & $0.475$ & $0.833$ & $2.389$ & $3.910$ & $0.996$ & $0.272$ & $0.268$ \\ 
18 & $\bf{0.222}$ & $0.238$ & $0.324$ & $0.346$ & $0.487$ & $0.837$ & $2.465$ & $4.175$ & $0.997$ & $0.284$ & $0.283$ \\ 
19 & $\bf{0.226}$ & $0.243$ & $0.326$ & $0.348$ & $0.500$ & $0.839$ & $2.499$ & $4.417$ & $0.996$ & $0.303$ & $0.302$ \\ 
20 & $\bf{0.223}$ & $0.245$ & $0.325$ & $0.346$ & $0.512$ & $0.837$ & $2.538$ & $4.661$ & $0.994$ & $0.306$ & $0.310$ \\ 
21 & $\bf{0.217}$ & $0.246$ & $0.322$ & $0.344$ & $0.524$ & $0.835$ & $2.553$ & $4.848$ & $0.992$ & $0.311$ & $0.313$ \\ 
22 & $\bf{0.208}$ & $0.245$ & $0.318$ & $0.339$ & $0.532$ & $0.835$ & $2.605$ & $5.084$ & $0.993$ & $0.315$ & $0.307$ \\ 
23 & $\bf{0.201}$ & $0.247$ & $0.315$ & $0.336$ & $0.543$ & $0.841$ & $2.636$ & $5.308$ & $0.992$ & $0.247$ & $0.277$ \\ 
24 & $\bf{0.202}$ & $0.254$ & $0.317$ & $0.339$ & $0.556$ & $0.854$ & $2.642$ & $5.498$ & $0.988$ & $0.242$ & $0.276$ \\ 
25 & $\bf{0.216}$ & $0.268$ & $0.327$ & $0.348$ & $0.570$ & $0.878$ & $2.665$ & $5.691$ & $0.983$ & $0.249$ & $0.275$ \\ 
\midrule
Mean & $\bf{0.181}$ & $0.208$ & $0.298$ & $0.317$ & $0.435$ & $0.772$ & $1.931$ & $3.111$ & $0.970$ & $0.196$ & $0.195$ \\ 
%Median & $0.190$ & $0.209$ & $0.304$ & $0.325$ & $0.437$ & $0.804$ & $2.065$ & $3.008$ & $0.992$ & $0.204$ & $\bf{0.188}$ \\ 
\bottomrule
\end{tabularx}
\end{table}

\begin{figure} 
\centering
\includegraphics[width=0.7\linewidth]{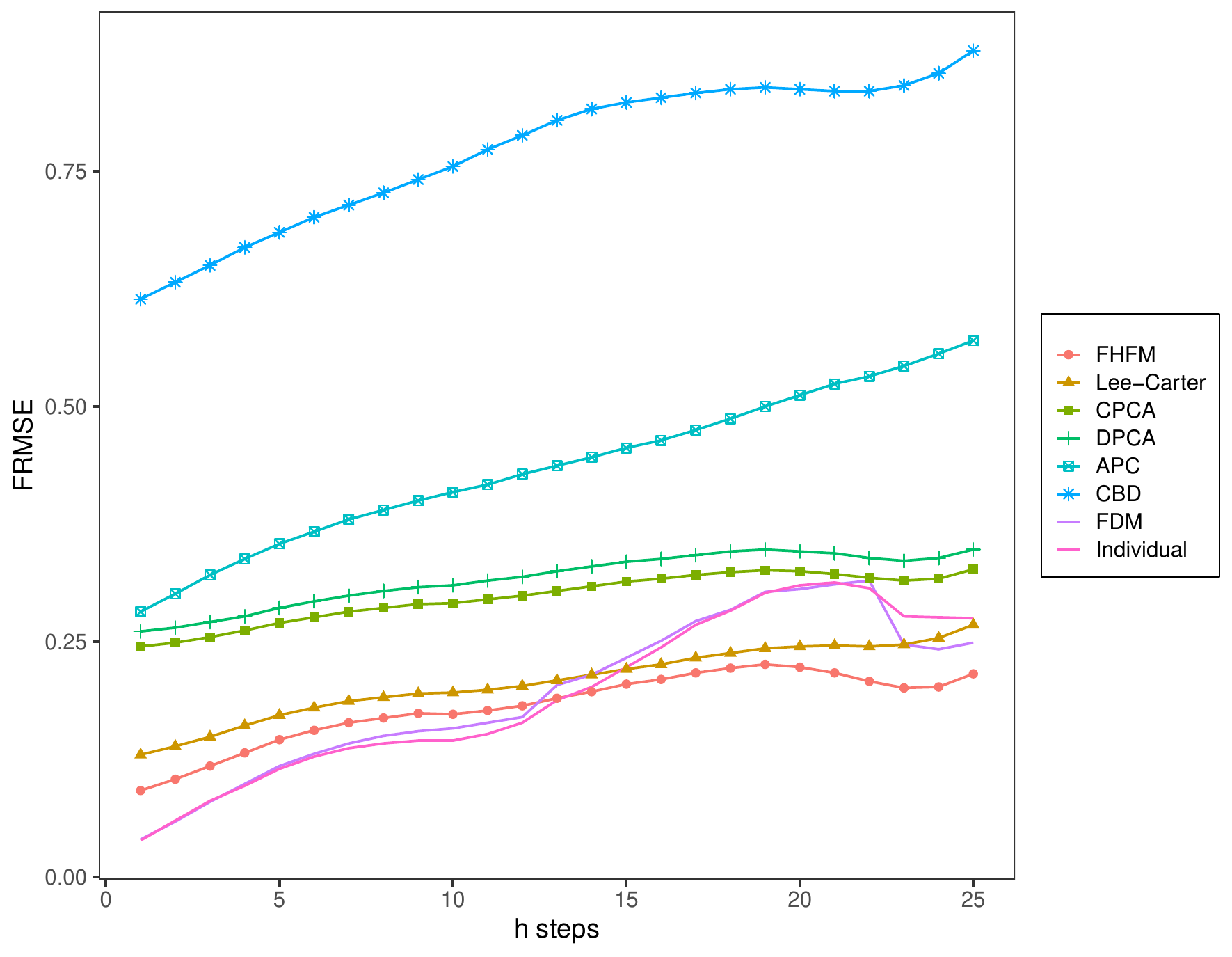}
\caption{Comparison of the forecasting performance on the US data: Rolling Window FRMSE}
\label{dpca:fg:f9}
\end{figure}

%%%%%%%%%%%%%%%%%%%%%%%%%%%%%%%%%%%%%
%%%%%%%%%%%%%%%%%%%%%%%%%%%%%%%%%%%%
\subsection{Analysis of the FHFM on Mortality Data}
Recall that the FHFM intends to capture two kinds of common features for mortality data among all ages: common temporal trends and common variations. Now we would like to analysis the necessity and the behavior of the proposed model on the mortality data under study. 

Figure \ref{dpca:fg:new1} provides the estimation of $\{u_{p,t}: t\geq 1\}$ for all $p=1, \ldots, P$, which is the residual of the first step. After extracting the common temporal trends in the first step, it is expected that there is relatively weak common temporal trend existed in the residual $u_{p,t}$. 
Compared with Figure \ref{logdr:a} that illustrates the time-trends in original mortality data, Figure \ref{dpca:fg:new1} indeed demonstrates weak common time-trends for all ages, in view of different time-tendency for the young ages from that of the old ages. 
\begin{figure} 
\centering
\includegraphics[width=0.7\linewidth]{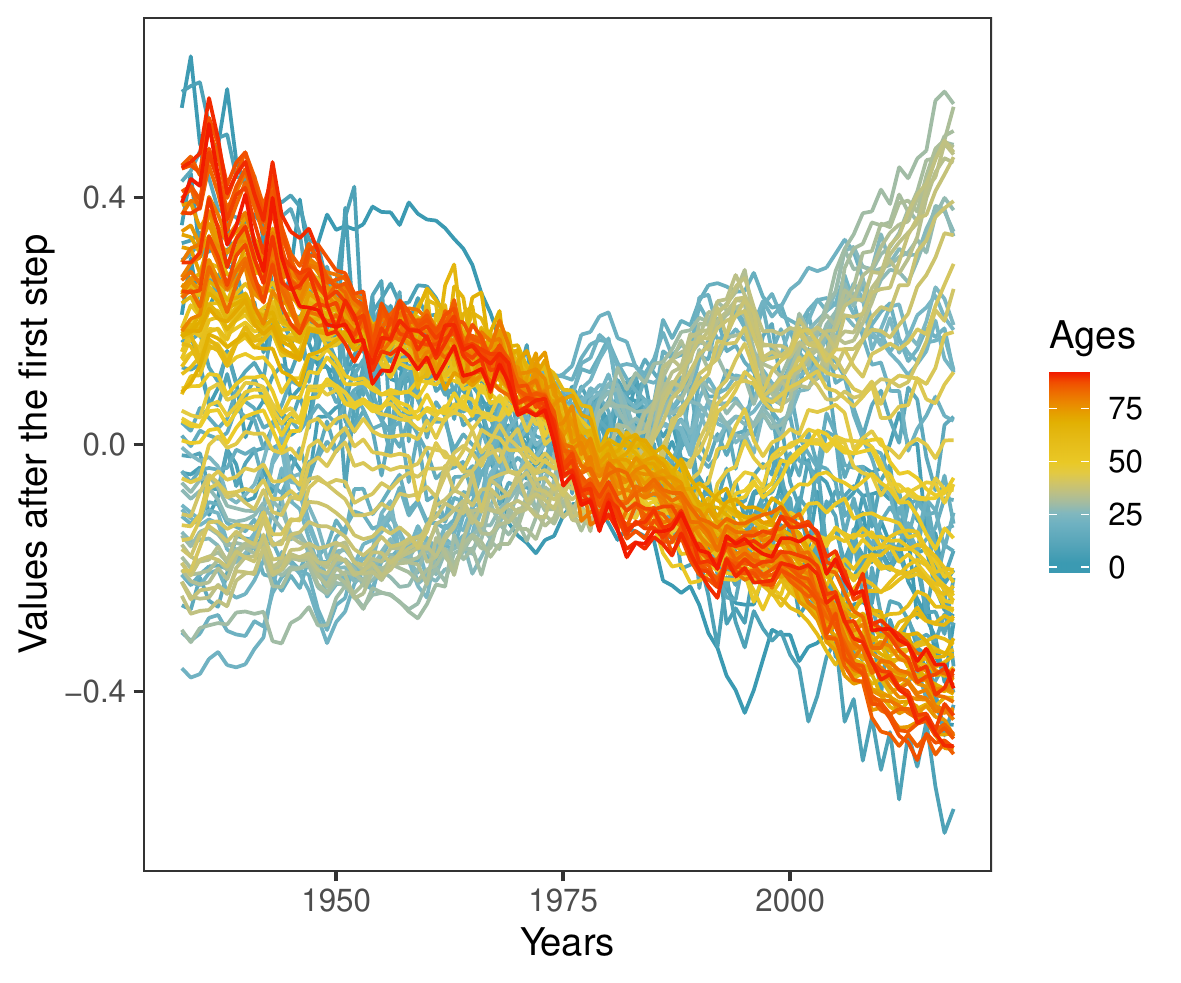}
\caption{Estimation of $u_t$ on the US mortality Data}
\label{dpca:fg:new1}
\end{figure}

Next, we investigate the extracted features from the two kinds of factor models, respectively. As analyzed earlier, the estimation for the two parts is based on the eigendecomposition of the two matrices $\widehat{\bbL}_1$ and $\widehat{\bbL}_2$, respectively. The first row of Figure \ref{dpca:fg:new3} shows all the eigenvalues of the two matrices. The spikeness is obvious and the ratio-based statistic will estimate $r_1$ and $r_2$ as $1$ intuitively. Then the bottom row of Figure \ref{dpca:fg:new3} provides the eigenvectors of $\widehat{\bbL}_1$ and $\widehat{\bbL}_2$ corresponding to their largest eigenvalues respectively. By comparing them, factor loadings from the two parts of factor models are quite different from each other. 
\begin{figure} 
\centering
\includegraphics[width=0.8\linewidth]{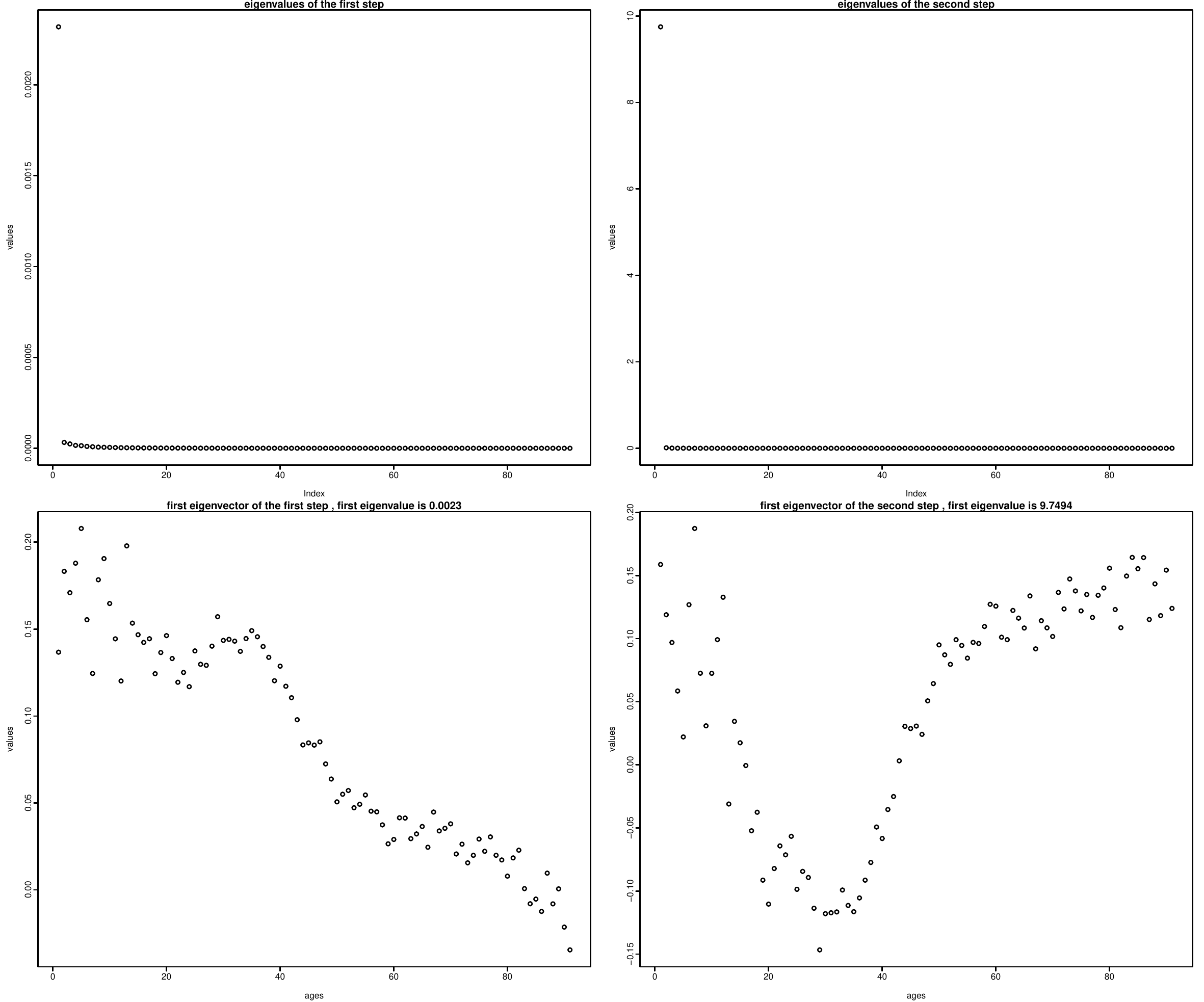}
\caption{$1st$ Principal Component in Two Steps}
\label{dpca:fg:new3}
\end{figure}

A natural question may arise: would it produce similar results if we included two factors in the first step instead of having the second-step factor modelling? Figure \ref{dpca:fg:new4} shows that the second eigenvector of $\widehat{\bbL}_1$ is different from the first eigenvector extracted by the second step, which ensures the necessity of the second-step factor modelling. Roughly speaking, the second principal component (PC) in the first step represents weaker common temporal trend than the first PC, but stronger than the remaining PCs. However, the aim of our second step is to pursue features possessing most common variations of the residuals out of the first step. Although the extracted factor in the second step also has weaker common temporal trend than that is extracted in the first step, it is different from the second PC of the first step in view of Figure \ref{dpca:fg:new4}.   
\begin{figure} 
\centering
\includegraphics[width=0.8\linewidth]{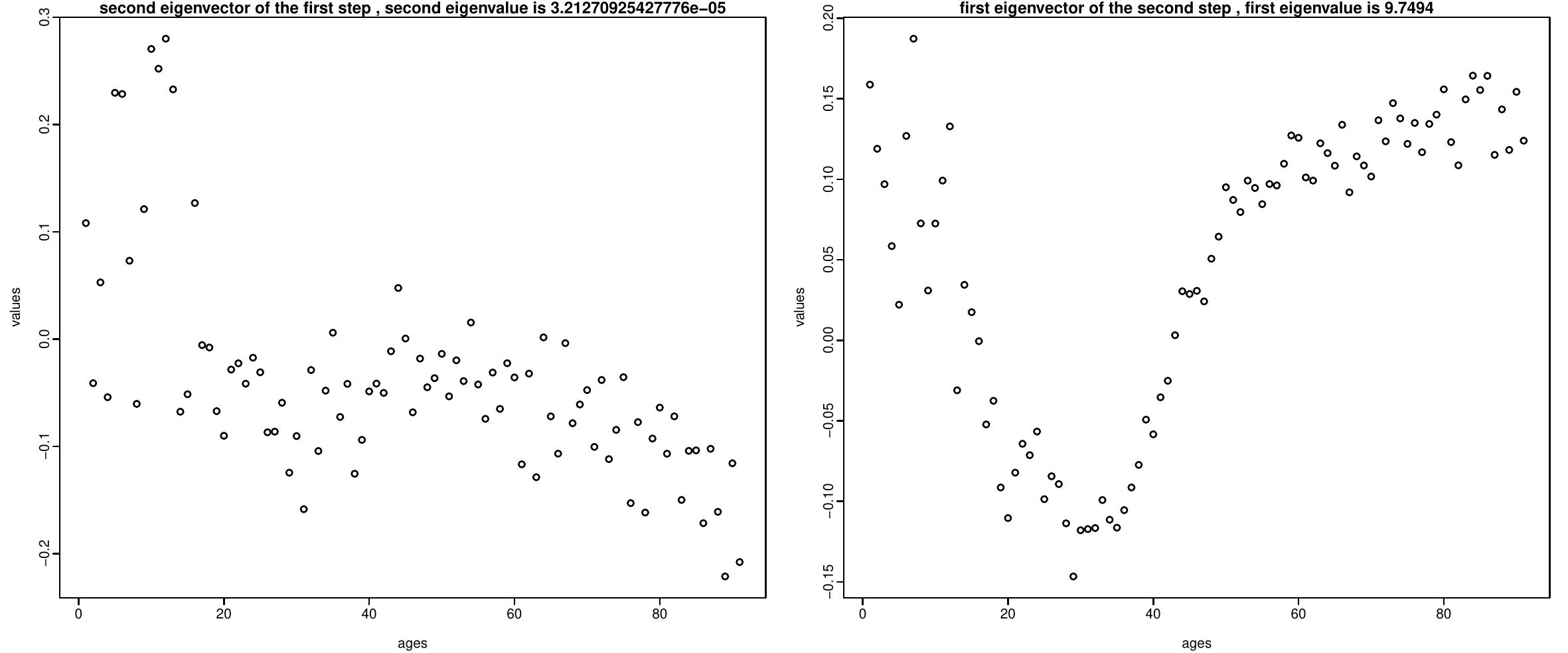}
\caption{$2nd$ PC in Step 1 and $1st$ PC in Step 2}
\label{dpca:fg:new4}
\end{figure}
Furthermore, the second eigenvalue of matrix $\widehat{\bbL}_1$, as shown in Figure \ref{dpca:fg:new5}, is not separable from the remaining ones (the third PC and above), and hence is not favored for forecasting purposes.

\begin{figure} 
\centering
\includegraphics[width=0.8\linewidth]{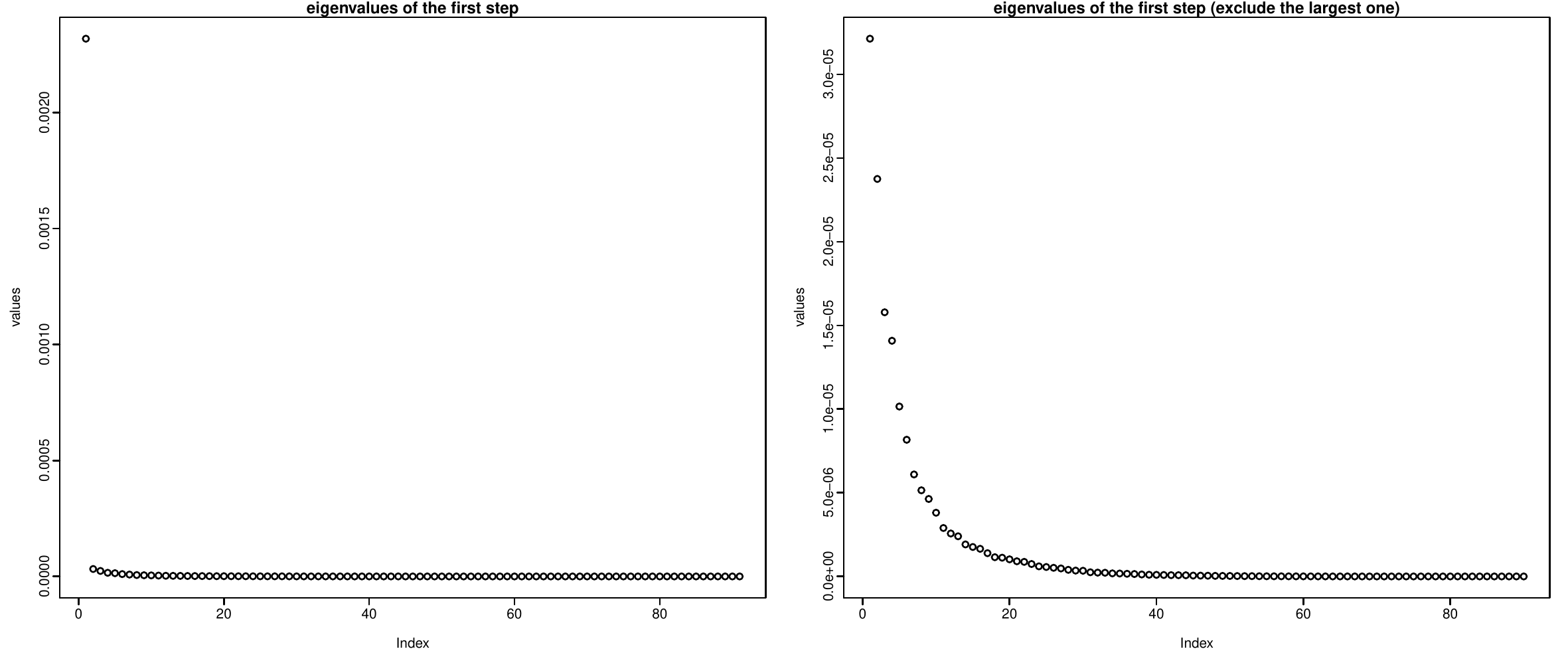}
\caption{Eigenvalues of Step 1}
\label{dpca:fg:new5}
\end{figure}

%%%%%%%%%%%%%%%%%%%%%%%%%%%%%%%%%%%%

%%%%%%%%%%%%%%%%%%%%%%%%%%%%%%%%%%%%%

\subsection{Application of the Mortality Forecasting}
In this section, we use the forecasts of mortality to perform two applications: predicting the life expectancies and pricing the life annuities. The life expectancy describes the expected average remaining number of years prior to death for a person reached a specific age. Usually it can be reported in two different forms based on the mortality rates (period and cohort). The period life expectancy for a given year of each age is calculated based on the mortality rates for that single year, while the cohort life expectancy is estimated based on the mortality rates for the series of years in which the person will actually reach each succeeding age if the individual survives (\citet{usreport}).
For example, according to Table V.A4 and V.A5 in the 2019 report of \citet{usreport}, a male in the US aged $65$ in year $2018$ is expected to live another $18.1$ years before death on a period basis while $18.9$ years on a cohort basis. We will compare the estimated cohort and period life expectancy from our proposed method (FHFM) with those from the Lee-Carter model. In addition, related to the cohort life expectancy, another interesting and crucial problem is, how much would an individual pay for an insurance which provides annual payments after the retirement until the death? We will compare the present values (price, per $\$1$) of the life annuities based on the estimated cohort life expectancies from different methods.

In the following part, we compute the actuarial life expectancy for an individual aged $x$ at year $T$ ($e_{x, T}$) as follows,
\begin{align*}
e_{x, T} = 
\sum_{t = 1}^{w - x - 1} {}_t p_{x,T},
%= \sum_{t = 1}^{w - x - 1} \left(\prod_{i = 0}^{t-1} {}_1 p_{x+i}\right) = 
%\sum_{t = 1}^{w - x - 1} \left(\prod_{i = 0}^{t-1} \left(1-{}_1 q_{x+i,T}\right)\right);\\
\end{align*}
where $w$ is the assumed maximum age, and ${}_t p_{x,T} = \prod_{j = 0}^{t-1} \left(1- {}_1 q_{x+j,T}\right)$ is the probability that a person aged $x$ at year $T$ will survive to age $x + t$. For the period life expectancy, ${}_1 q_{x+j,T} = m_{x+j, T}$, and for the cohort life expectancy, ${}_1q_{x+j,T} = m_{x+j, T+j}$, where $m_{x, t}$ is the death rate of a person aged $x$ at year $t$ from the mortality table. In addition, for simplicity, we assume $1 - m_{90+, T}$ represents the probability that a person age $90$ will survive to the maximum age $w$. Further, we calculate the present value of the life annuity ($PV_{x,T}$) for an individual purchased at age $x$ in year $T$ and beginning to make payments $\$ 1$ annually after age 66 until death or aged 90 (which one happens first) as below:
\begin{align*}
PV_{x, T} = \begin{cases}
    \sum_{t = 1}^{90 - x} {}_t p_{x,T}/(1+i)^t       & \quad \text{if } x \ge 66\\
    PV_{66, T + (66-x)}/(1+i)^{66-x}  & \quad \text{if } x < 66
  \end{cases},
\end{align*}
where $i = 2\%$ is the interest rate, ${}_t p_{x,T} = \prod_{j = 0}^{t-1} \left(1- {}_1 q_{x+j,T}\right)$ and ${}_1q_{x+j,T}$ $= m_{x+j, T+j}$, which is on a cohort basis and the same with the calculation for the cohort life expectancy. We let the life annuities end at age 90 for simplicity as the mortality rates for extreme older ages need more detailed analysis, which is beyond the scope of this paper. The age 66 is the retirement age for most individuals in the US. Hence, for an individual younger than 66, $PV_{x, T}$ is the price for a deferred life annuity. Similar calculation can be find in \citet{warshawsky1988private}, \cite{mccarthy2001assessing}, and \citet{cunningham2012models}.

In order to compare the out-of-sample performance of our method (FHFM) and the Lee-Carter model, we define the data for years 1933 to 1988 as the training set and the data for the last 30 years (1989 to 2018) as the test set. We first forecast the mortality rates of the test set with the training set using the FHFM and the Lee-Carter method, respectively. Then we calculate $e_{x, T}$ (cohort and period) and $PV_{x,T}$ using the actual mortality rates as well as the forecasting mortality rates from the two methods, respectively. 

With more accurate mortality forecasts, how much can the FHFM method improve the prediction of the life expectancies and the pricing of the life annuities? Table \ref{dpca:tb:ta1} shows the forecast mean squared error (FMSE) and the forecast mean absolute error (FMAE) for the FHFM method and the Lee-Carter model, which are computed as
\begin{align*}
\text{FMSE} &= \frac{1}{N}\sum_{x}\sum_{t} (\hat{y}_{x, t} - y_{x,t})^2,\\
\text{FMAE} &= \frac{1}{N}\sum_{x}\sum_{t} |\hat{y}_{x, t} - y_{x,t}|,
\end{align*}
where $\hat{y}_{x,t}$ is the estimated value (computed with forecast death rates from the FHFM or Lee-Carter), $y_{x,t}$ is the true value (computed with actual death rates), $N$ is the number of estimates (it is different for the period and cohort life expectancies). It can be seen from the table that for all the three applications, the estimations from the FHFM have smaller FMSEs and FMAEs comparing with those from Lee-Carter method. Particularly, from the FMAEs of the present values of life annuities, we can see that, on average, the pricing error is $\$0.154$ for Lee-Carter and only $\$0.041$ for FHFM with annual payment $\$1$. The better performance of FHFM is lead by the more accurately mortality forecasting.
\begin{table}[!htbp] \centering 
  \footnotesize
  \caption{FMSE and FMAE of life expectancies (cohort and period) and present values of annuities (annual payment \$1 and interest rate 2\%)} 
  \label{dpca:tb:ta1} 
\begin{tabularx}{\textwidth}{c *{6}{Y}}
\toprule
 & \multicolumn{3}{c}{FMSE} 
 & \multicolumn{3}{c}{FMAE}\\
\cmidrule(lr){2-4} \cmidrule(l){5-7}
  & period life expectancy & cohort life expectancy & pv of life annuity & period life expectancy & cohort life expectancy & pv of life annuity \\
\midrule
Lee-Carter & $0.768$ & $0.111$ & $0.040$ & $0.790$ & $0.251$ & $0.154$ \\ 
FHFM & $\bf{0.109}$ & $\bf{0.009}$ & $\bf{0.004}$ & $\bf{0.263}$ & $\bf{0.072}$ & $\bf{0.041}$ \\  
\bottomrule
\end{tabularx}
\end{table}

Figure \ref{dpca:fg:fa1} and Figure \ref{dpca:fg:fa2} show the cohort and period life expectancies for an individual aged $66$ at different years. The red line is the value computed from historical death rates, the green one is the value computed with the forecast from the FHFM and the blue line is that from the Lee-Carter method. From Figure \ref{dpca:fg:fa1}, we see that the three lines are close to each other before 1970, which is due to the less forecasts involved in the calculation for those years. After 1970, when involving more forecasts, the Lee-Carter method tends to estimate the life expectancies lower while the FHFM is close to the true value with slightly higher estimations for some years. From Figure \ref{dpca:fg:fa2}, we see that the output of FHFM is always more close to the the true values while both the FHFM and the Lee-Carter tend to underestimate the true values for the second half of the time horizon.
\begin{figure}
\begin{minipage}[t]{0.48\textwidth}
\includegraphics[width=\linewidth]{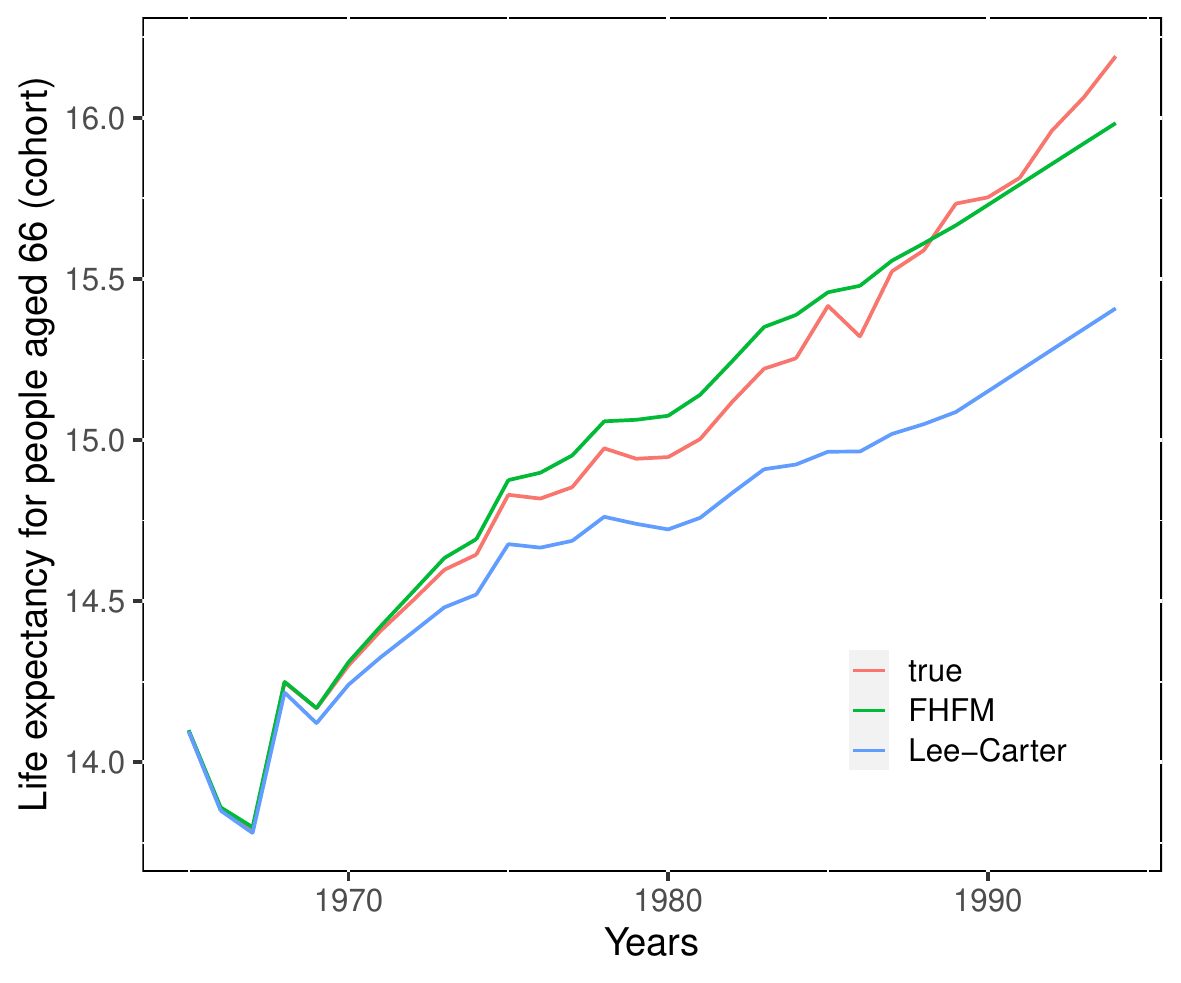}
\caption{Comparison of the predicted life expectancies from the FHFM and Lee-Carter with the true values (cohort)}
\label{dpca:fg:fa1}
\end{minipage}
\hspace*{0.01cm}
\begin{minipage}[t]{0.48\textwidth}
\includegraphics[width=\linewidth]{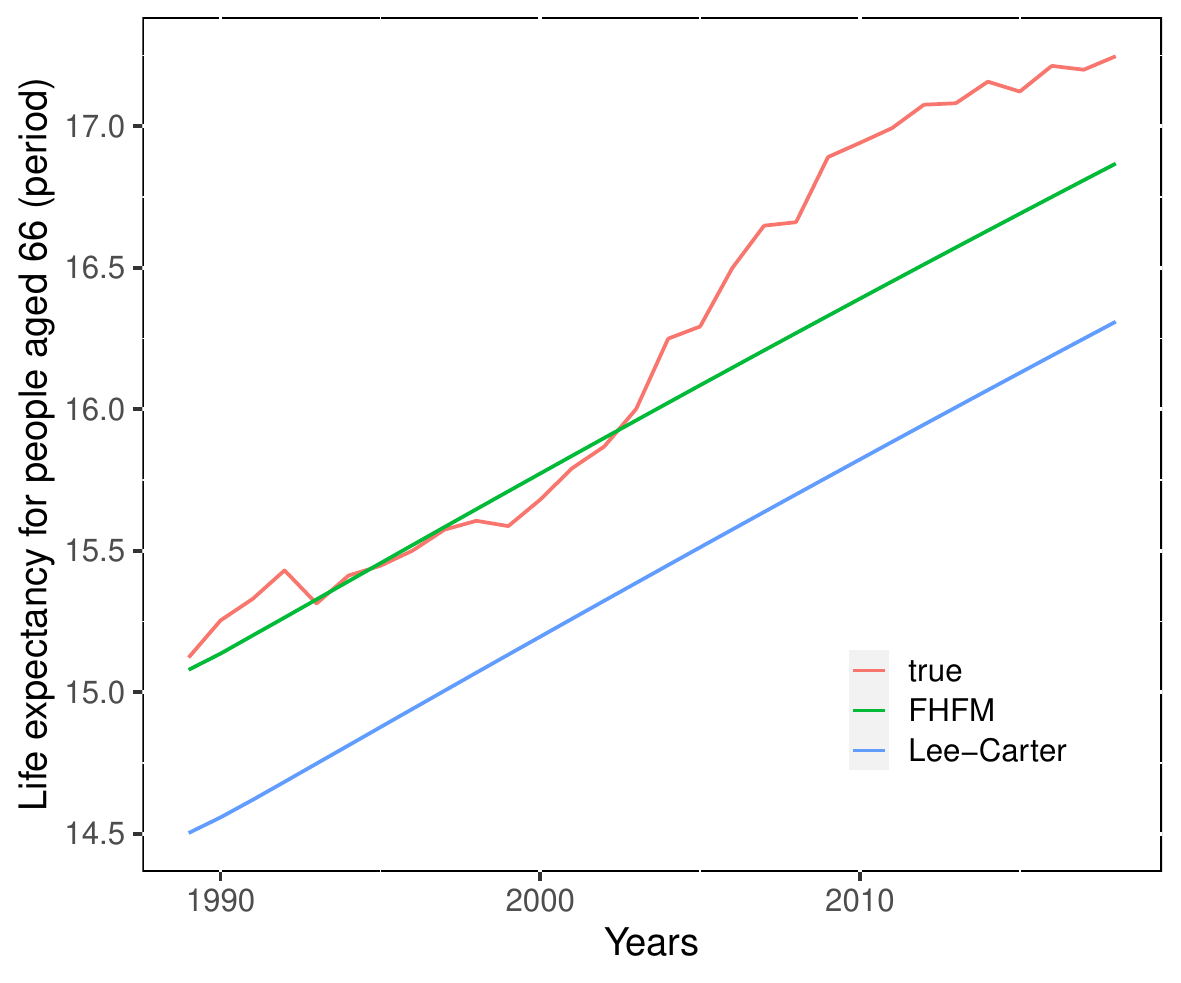}
\caption{Comparison of the predicted life expectancies from the FHFM and Lee-Carter with the true values (period)}
\label{dpca:fg:fa2}	
\end{minipage}
\end{figure}

Table \ref{dpca:tb:ta2} exhibits the life expectancies (cohort and period) and the present values of annuities with annual payment $\$1$ and interest rate $2\%$ for some selected ages and years (for some years and ages there are no forecast involves, hence we use $*$ to mark them). We can see that for the life expectancies, almost all the values from the FHFM are closer to the true values than those from the Lee-Carter method. The Lee-Carter method tends to price much lower than the empirical true values. The absolute pricing errors are around $\$0.20$ to $\$0.40$ per $\$1$ of the life annuity. On the other hand, FHFM provides very accurately pricing with a maximum $\$0.02$ error per $\$1$ annual payment. Although the difference looks very small, it is indeed a big risk for life insurers or social security. To illustrate the financial impact on the industry, we can consider the pricing for individuals aged $45$ in year $1970$. The price from the Lee-Carter method is $\$0.27$ lower and from the FHFM is $\$0.01$ lower per $\$1$ compared with the empirical true price. Suppose the annual payment for an individual is $\$20,000$ and the number of people purchased this insurance is $5,000$. Then according to the Lee-Carter method, the insurance company will have a $\$27$ million shortfall ($27 \text{million} = 0.27 \times 20000 \times 5000$), which is a huge risk. On the other hand, FHFM will only have a $\$1$ million shortfall. Although it also mis-priced the insurance, this amount of shortfall is much less risky for the company or the social security. In summary, our method improves the estimating of life expectancies and prices the life annuities more accurately by forecasting the mortality rates better. 

\begin{table}[!htbp] \centering 
  \footnotesize
  \caption{Selected life expectancies (cohort and period) and the present values of annuities (annual payment $\$1$ and interest rate $2\%$)} 
  \label{dpca:tb:ta2} 
\begin{tabularx}{\textwidth}{c *{9}{Y}}
\toprule
 & \multicolumn{3}{c}{period life expectancy} 
 & \multicolumn{3}{c}{cohort life expectancy}
 & \multicolumn{3}{c}{pv of life annuity}\\
\cmidrule(lr){2-4} \cmidrule(lr){5-7} \cmidrule(l){8-10}
 (year, age) & true & Lee-Carter & FHFM & true & Lee-Carter & FHFM & true & Lee-Carter & FHFM \\
\midrule
(1950, 25) & $45.81$ & * & * & $50.12$ & $49.60$ & $50.09$ & $5.72$ & $5.53$ & $5.71$ \\ 
(1960, 35) & $37.59$ & * & * & $40.84$ & $40.32$ & $40.81$ & $6.97$ & $6.74$ & $6.96$ \\ 
(1970, 45) & $29.19$ & * & * & $31.97$ & $31.43$ & $31.94$ & $8.49$ & $8.22$ & $8.48$ \\ 
(1980, 55) & $22.59$ & * & * & $23.85$ & $23.27$ & $23.81$ & $10.36$ & $10.02$ & $10.34$ \\ 
(1990, 65) & $15.95$ & $15.23$ & $15.83$ & $16.50$ & $15.87$ & $16.47$ & $12.62$ & $12.21$ & $12.60$ \\ 
(2000, 75) & $9.59$ & $9.40$ & $9.87$ & $10.05$ & $9.63$ & $10.09$ & $8.61$ & $8.29$ & $8.63$ \\ 
\bottomrule
\end{tabularx}
\end{table}

\section{Conclusion}
\label{dpca:conclu}
  Motivated by the unique feature of human mortality data, we propose a new forecast-driven hierarchical factor model  for mortality forecasting. 
 
To the best of our knowledge, this is the first  forecast-driven factor model for mortality forecasting. The proposed FHFM combines both the dynamic and static features together via a two-step procedure, with the first step extracting factors with the most predictability and the second step extracting factors capturing the largest variations. Theoretically, we show that these two types of factors are both indispensable in producing optimal forecasting for mortality data.

 Empirical analysis using the US mortality data shows that the proposed new method can improve the out-of-sample forecasting performance  compared to the benchmark methods, especially in the long-term. Moreover, we show that the pricing error of a life annuity project using the Lee-Carter method is around \$$0.154$ per $\$1$ annual payment; however, the pricing error using the proposed FHFM  is only about $\$0.041$ per $\$1$, which is a remarkable improvement. 

Like other existing mortality models, the FHFM proposed in this paper can only extract linear features from the original data.  For future research, we are interested in developing a general non-parametric factor model, which can capture non-linear features in mortality data. Modern statistical techniques, such as neural networks and non-parametric estimation methods, can be used to fulfil this target. In addition, we are  interested in extending the age range considered in this research to include more advanced ages  and exploring time-varying factor models for multi-population mortality forecasting \citep{HUANG202095,Dong2020}.

\bibliography{thesis}

\appendix

\section{Appendix A: Additional Simulations}
\label{append:cha:stepPCA.1}
In this appendix, we provide more simulation studies as a supplementary to Section \ref{dpca:simu}. Specially, \textit{Example} 5 and 6 are special cases to the examples in Section \ref{dpca:simu}, in which the two different types of features are completely separated.

Firstly, in \textit{Example} 4 and 5, we show that the first step of our method extracts features with strong time-serial dependence, which indicates a powerful forecasting ability. In addition, the low-dimensional representation has relative small reconstruction errors, which is necessary for recovering the data.

Secondly, with \textit{Example} 5 and 6, we show that our method performs better on forecasting compared to static PCA and dynamic PCA.

For descriptive convenience, we use ``FHFM'' to represent our method, ``CPCA'' to represent the static PCA which was described in Section \ref{dpca:relation}, and  ``DPCA'' to represent the dynamic PCA described in Section \ref{dpca:relation} with $\ell_0 = 1$. We also consider comparing with the method given in \citet{LYB2011}, and we use ``DPCA$(\ell)$'' to represent it, with $\ell = 1, 5, 10$. ``DPCA$(1)$'' is the same with the first step in our method, while ``DPCA$(5)$'' and ``DPCA$(10)$'' aggregate more auto-covariances but exclude the variance matrix. 

\subsection*{Data generating}

\begin{itemize}
\item \textit{Example 4}\\
$\{\bby_t\}_{t = 1, 2, \dots, T}: P \times 1$ is generated by 
\begin{align*}
 \bby_t = \bba + \bbb k_{t} + \bvar_t,
\end{align*}
where $\bba$ is a $P \times 1$ mean vector with elements independently generated from standard normal distribution, $\bbb$ is a $P \times 1$ vector obtained by the first column of a QR decomposition of a random generated matrix, $\{k_t\}_{t =1,2, \dots, T}$ is generated from $AR(1)$ model with coefficient $0.7$ and mean $0$, and $\bvar_t$ is a $P \times 1$ error term with elements independently generated from standard normal distribution. 

\item \textit{Example 5}\\
Construct $\bby_t =  (\bby_t^{(1)\top}, \bby_t^{(2)\top})^{\top}$.
$\{\bby_t^{(1)}\}_{t = 1, 2, \dots, T}: (dP) \times 1$ is generated by
\begin{align*}
\bby_t^{(1)} = \bbb k_{t} + \bvar_t^{(1)}.
\end{align*}
$\bbb$ is a $(dP) \times 1$ vector with elements generated from $U(0, 1)$, $\{k_t\}_{t =1,2, \dots, T}$ is generated from $AR(1)$ model with coefficient $0.8$, and $\bvar_t^{(1)}$ is a $(dP) \times 1$ error term with elements independently generated from $N(0, 0.2)$.
$\{\bby_t^{(2)}\}_{t = 1, 2, \dots, T}: ((1-d)P) \times 1$ is generated by 
\begin{align*}
\bby_t^{(2)} =  \bba w_{t} + \bvar_t^{(2)},
\end{align*}
where $\bba$ is a $((1-d)P) \times 1$ vector with elements generated from $U(0, 1)$, $\{w_t\}_{t =1,2, \dots, T}$ is generated from $N(0, 1.5)$, and $\bvar_t^{(2)}$ is a $((1-d)P) \times 1$ error term with elements independently generated from $N(0, 0.2)$. 
 
We call $\{\bby_t^{(1)}\}_{t = 1, 2, \dots, T}$ the dependent part as it has autocorrelations within observations (time dimension), and $\{\bby_t^{(2)}\}_{t = 1, 2, \dots, T}$ the independent part as it has independent generated observations, while the variance of it is larger than that of $\{\bby_t^{(1)}\}_{t = 1, 2, \dots, T}$. The parameter $d$ is the proportion of the dependent part among the whole dataset, with possible values among $(0,1)$.
As a result of this design, the whole data consists of two part. The dependent part has strong serial dependence with relatively small variance and the independent part has very weak dependence with relatively large variance. This is a special case for the FHFM in example 1 to 3 in which we can set $0$s in the coefficient vector $\bba$ and $\bbb$ to get the \textit{Example 5}.
\item \textit{Example 6}\\
The data structure is the same with \textit{Example} 5 with $d = 0.4$, except that $\{k_t\}_{t =1,2, \dots, T}$ is generated from $AR(1)$ model with coefficient $0.7$, $\bvar_t^{(1)}$ and $\bvar_t^{(2)}$ are $(dP) \times 1$ error terms with elements independently generated from $N(0, 0.5)$, and $\{w_t\}_{t =1,2, \dots, T}$ is generated from $N(0, 3)$. The main difference is we enlarge the variations in \textit{Example} 5, comparing with \textit{Example} 6. The purpose is to show that keeping sufficient information of the variation is necessary. 
\end{itemize}

Because in \textit{Example} 5 and 6, $\bby_t$ consists of the dependent part and independent part, we also report the FRMSE for the two parts separately, in addition to the overall FRMSE defined in Section \ref{dpca:simu}. Rewrite $\widehat{\bby}_{T-i}$ as $(\widehat{\bby}_{T-i}^{(1)\top}, \widehat{\bby}_{T-i}^{(2)\top})^{\top}$ and $\bby_{T-i}$ as $(\bby_{T-i}^{(1)\top}, \bby_{T-i}^{(2)\top})^{\top}$, then:
\begin{align*}
\text{Dependent FRMSE}(h) = \left(\frac{\sum_{i = 0}^{h-1}\|\widehat{\bby}^{(1)}_{T-i} - \bby^{(1)}_{T-i}\|_2^2}{hPd}\right)^{1/2},\\
\text{Independent FRMSE}(h) = \left(\frac{\sum_{i = 0}^{h-1}\|\widehat{\bby}^{(2)}_{T-i} - \bby^{(2)}_{T-i}\|_2^2}{hP(1-d)}\right)^{1/2},
\end{align*}  
where $d$ is the proportion of $\bby^{(1)}_t$ among $\bby_t$. 

\subsection*{Results}
We try different sets of $(P, T)$: $(50, 50),\ (50, 100),\ (100, 100),\ (100, 200),\ (200,200)$, as we would like to evaluate the performance under situations $P$ and $T$ are comparable. The results are showed in Table \ref{dpca:append:ts.kt} to Table \ref{dpca:append:ts.forecast.e6}.
\begin{table}[!htbp] \centering 
  \footnotesize
  \caption{Variance and Dependence of $\widehat{k}_t$} 
  \label{dpca:append:ts.kt} 
\begin{tabularx}{\textwidth}{c *{9}{Y}}
\toprule
 & \multicolumn{3}{c}{Time variance ($\widehat{k_t}$)} 
 & \multicolumn{3}{c}{Time dependence ($\widehat{k_t}$)}
 & \multicolumn{3}{c}{Mix ($\widehat{k_t}$)}\\
\cmidrule(lr){2-4} \cmidrule(lr){5-7} \cmidrule(l){8-10}
 $(P, T)$ & CPCA & DPCA & FHFM & CPCA & DPCA & FHFM & CPCA & DPCA & FHFM \\
\midrule
& \multicolumn{9}{c}{Example 4}\\
\midrule
$(50, 50)$ & $\bf{7.882}$ & $7.723$ & $6.073$ & $1.201$ & $1.607$ & $\bf{1.833}$ & $9.083$ & $\bf{9.331}$ & $7.905$ \\ 
$(50, 100)$ & $\bf{6.001}$ & $5.917$ & $4.595$ & $1.061$ & $1.328$ & $\bf{1.455}$ & $7.062$ & $\bf{7.245}$ & $6.050$ \\ 
$(100, 100)$ & $\bf{7.968}$ & $7.811$ & $6.094$ & $0.957$ & $1.392$ & $\bf{1.671}$ & $8.925$ & $\bf{9.204}$ & $7.765$ \\ 
$(100, 200)$ & $\bf{5.996}$ & $5.911$ & $4.567$ & $0.944$ & $1.241$ & $\bf{1.409}$ & $6.940$ & $\bf{7.151}$ & $5.976$ \\ 
$(200, 200)$ & $\bf{7.974}$ & $7.814$ & $6.088$ & $0.757$ & $1.130$ & $\bf{1.399}$ & $8.731$ & $\bf{8.943}$ & $7.487$ \\
\midrule
& \multicolumn{9}{c}{Example 5 ($d = 0.5$)}\\
\midrule 
$(50, 50)$ & $\bf{24.189}$ & $23.935$ & $21.005$ & $11.916$ & $13.405$ & $\bf{15.291}$ & $36.105$ & $\bf{37.340}$ & $36.296$ \\ 
$(50, 100)$ & $\bf{23.858}$ & $23.494$ & $21.389$ & $12.132$ & $14.311$ & $\bf{16.480}$ & $35.990$ & $37.805$ & $3\bf{7.869}$ \\ 
$(100, 100)$ & $\bf{47.927}$ & $47.217$ & $43.802$ & $25.852$ & $30.221$ & $\bf{33.975}$ & $73.780$ & $77.439$ & $\bf{77.777}$ \\ 
$(100, 200)$ & $\bf{47.730}$ & $46.898$ & $45.228$ & $27.808$ & $33.306$ & $\bf{35.603}$ & $75.539$ & $80.204$ & $\bf{80.831}$ \\ 
$(200, 200)$ & $\bf{94.918}$ & $93.398$ & $89.976$ & $55.774$ & $66.264$ & $\bf{70.893}$ & $150.692$ & $159.662$ & $\bf{160.869}$ \\
\bottomrule
\end{tabularx}
\end{table}

From Table \ref{dpca:append:ts.kt}, we can see that the CPCA provides feature $\widehat{k}_t$ with the largest variance, while the first step of our method (FHFM) captures $\widehat{k}_t$ with the largest lag $1$ auto-covariance.
In addition, in \textit{Example} $5$, our method has slightly larger Mix$(\widehat{k}_t)$ than DPCA, which shows that under certain data structure the dimension reduction of our first step is enough to represent sufficient information.

From Table \ref{dpca:append:ts.error} and \ref{dpca:append:ts.error.depend}, we can see that our method always provides the error terms with the smallest time and cross-sectional variance and dependence.
\begin{table}[!htbp] \centering 
  \footnotesize
  \caption{Variance across Time and Ages of error terms} 
  \label{dpca:append:ts.error} 
\begin{tabularx}{\textwidth}{c *{6}{Y}}
\toprule
 & \multicolumn{3}{c}{Time Variance ($\widehat{\bvar}_{\cdot t}$)} 
 & \multicolumn{3}{c}{Cross-sectional Variance ($\widehat{\bvar}_{p\cdot}$)}\\
\cmidrule(lr){2-4} \cmidrule(l){5-7}
 $(P, T)$ & CPCA & DPCA & FHFM & CPCA & DPCA & FHFM \\
\midrule
 & \multicolumn{6}{c}{Example 4}\\
\midrule
$(50, 50)$ & $0.778$ & $0.680$ & $\bf{0.337}$ & $0.794$ & $0.694$ & $\bf{0.343}$ \\ 
$(50, 100)$ & $0.775$ & $0.706$ & $\bf{0.373}$ & $0.782$ & $0.713$ & $\bf{0.377}$ \\ 
$(100, 100)$ & $0.796$ & $0.694$ & $\bf{0.351}$ & $0.804$ & $0.701$ & $\bf{0.355}$ \\ 
$(100, 200)$ & $0.778$ & $0.708$ & $\bf{0.381}$ & $0.782$ & $0.712$ & $\bf{0.383}$ \\ 
$(200, 200)$ & $0.803$ & $0.701$ & $\bf{0.357}$ & $0.807$ & $0.704$ & $\bf{0.359}$ \\
\midrule
 & \multicolumn{6}{c}{Example 5 ($d = 0.5$)}\\
\midrule
$(50, 50)$ & $0.071$ & $0.091$ & $\bf{0.037}$ & $0.094$ & $0.123$ & $\bf{0.037}$ \\ 
$(50, 100)$ & $0.056$ & $0.080$ & $\bf{0.038}$ & $0.067$ & $0.104$ & $\bf{0.038}$ \\ 
$(100, 100)$ & $0.055$ & $0.077$ & $\bf{0.039}$ & $0.064$ & $0.100$ & $\bf{0.039}$ \\ 
$(100, 200)$ & $0.046$ & $0.074$ & $\bf{0.039}$ & $0.050$ & $0.094$ & $\bf{0.039}$ \\ 
$(200, 200)$ & $0.046$ & $0.072$ & $\bf{0.039}$ & $0.051$ & $0.091$ & $\bf{0.039}$ \\
\bottomrule
\end{tabularx}
\end{table}

Table \ref{dpca:append:ts.forecast.1} and \ref{dpca:append:ts.forecast.5}, show the $1$ step and $5$ steps ahead forecasting root mean square errors for \textit{Example} $5$ with $d = 0.5, 0.4, 0.3$, respectively. Overall, FHFM performs better than the other two, as it has the smallest overall FRMSE for all the cases. Checking the Dependent FRMSE and Independent FRMSE separately, we can find that FHFM performs even better for the dependent part. As $d$ decreasing, the Dependent FRMSE of FHFM increases the least, although all the Dependent FRMSEs increase. And for the independent part, FHFM is better when $d = 0.5, 0.4$, but performs almost the same with others when $d = 0.3$. This result shows that FHFM extracts features with more forecasting power from the dependent part and uses it to help improve the forecasting of the independent part.
However, when the proportion of the dependent part is small, such as $d = 0.3$, forecasting the independent part cannot be blessed that much from the dependent part. Therefore, there will be very little difference among the three methods when comparing the performance for the independent part when $d= 0.3$. But FHFM will always provide better forecasting for the dependent part, which leads to better overall forecasting for all cases.
\begin{table}[!htbp] \centering 
  \footnotesize
  \caption{Covariance across Time and Ages of error terms} 
  \label{dpca:append:ts.error.depend} 
\begin{tabularx}{\textwidth}{c *{6}{Y}}
\toprule
 & \multicolumn{3}{c}{Time dependence ($\widehat{\bvar}_{\cdot t}$)} 
 & \multicolumn{3}{c}{Cross-sectional dependence ($\widehat{\bvar}_{p\cdot}$)}\\
\cmidrule(lr){2-4} \cmidrule(l){5-7}
 $(P, T)$ & CPCA & DPCA & FHFM & CPCA & DPCA & FHFM \\
\midrule
 & \multicolumn{6}{c}{Example 4}\\
\midrule
$(50, 50)$ & $0.107$ & $0.097$ & $\bf{0.059}$ & $0.108$ & $0.099$ & $\bf{0.060}$ \\ 
$(50, 100)$ & $0.106$ & $0.099$ & $\bf{0.064}$ & $0.086$ & $0.082$ & $\bf{0.058}$ \\ 
$(100, 100)$ & $0.077$ & $0.070$ & $\bf{0.043}$ & $0.077$ & $0.070$ & $\bf{0.043}$ \\ 
$(100, 200)$ & $0.075$ & $0.070$ & $\bf{0.046}$ & $0.061$ & $0.058$ & $\bf{0.041}$ \\ 
$(200, 200)$ & $0.054$ & $0.050$ & $\bf{0.031}$ & $0.055$ & $0.050$ & $\bf{0.031}$ \\ 
\midrule
 & \multicolumn{6}{c}{Example 5 ($d = 0.5$)}\\
\midrule
$(50, 50)$ & $0.025$ & $0.037$ & $\bf{0.004}$ & $0.032$ & $0.045$ & $\bf{0.004}$ \\ 
$(50, 100)$ & $0.016$ & $0.030$ & $\bf{0.004}$ & $0.015$ & $0.032$ & $\bf{0.003}$ \\ 
$(100, 100)$ & $0.013$ & $0.027$ & $\bf{0.003}$ & $0.014$ & $0.030$ & $\bf{0.003}$ \\ 
$(100, 200)$ & $0.007$ & $0.025$ & $\bf{0.003}$ & $0.007$ & $0.025$ & $\bf{0.002}$ \\ 
$(200, 200)$ & $0.007$ & $0.023$ & $\bf{0.002}$ & $0.007$ & $0.024$ & $\bf{0.002}$ \\ 
\bottomrule
\end{tabularx}
\end{table}

Table \ref{dpca:append:ts.forecast.e6} shows the $1$ step and $5$ steps ahead forecasting root mean square error of FHFM compared to DPCA$(\ell)$, $\ell = 1,5,10$, for \textit{Example} $6$. The reason for comparing DPCA$(\ell)$ separately is that it contains different information. The DPCA we compared with in \textit{Example} $5$ involves the same information (variance and lag $1$ auto-covariance of $\bby_t$) with FHFM, while DPCA$(\ell)$ aggregates more dependent information (lag $1$ to lag $\ell$ auto-covariances) but discards the variance $var\left({\bby_t}\right)$. In addition, DPCA$(1)$ is equivalent to only conduct the first step of FHFM. 
\begin{table}[!htbp] \centering 
  \footnotesize
  \caption{1 Step Ahead Forecasting RMSE} 
  \label{dpca:append:ts.forecast.1} 
\begin{tabularx}{\textwidth}{c *{9}{Y}}
\toprule
 & \multicolumn{3}{c}{Dependent FRMSE$(1)$} 
 & \multicolumn{3}{c}{Independent FRMSE$(1)$}
 & \multicolumn{3}{c}{Overall FRMSE$(1)$}\\
\cmidrule(lr){2-4} \cmidrule(lr){5-7} \cmidrule(l){8-10}
 $(P, T)$ & CPCA & DPCA & FHFM & CPCA & DPCA & FHFM & CPCA & DPCA & FHFM \\
\midrule
 & \multicolumn{9}{c}{Example 5 ($d = 0.5$)}\\
\midrule
$(50, 50)$ & $0.623$ & $0.608$ & $\bf{0.568}$ & $0.771$ & $0.768$ & $\bf{0.749}$ & $0.757$ & $0.748$ & $\bf{0.716}$ \\ 
$(50, 100)$ & $0.573$ & $0.567$ & $\bf{0.550}$ & $0.746$ & $0.747$ & $\bf{0.743}$ & $0.715$ & $0.711$ & $\bf{0.700}$ \\ 
$(100, 100)$ & $0.568$ & $0.558$ & $\bf{0.536}$ & $0.760$ & $0.746$ & $\bf{0.734}$ & $0.719$ & $0.705$ & $\bf{0.688}$ \\ 
$(100, 200)$ & $0.572$ & $0.568$ & $\bf{0.553}$ & $0.782$ & $0.772$ & $\bf{0.764}$ & $0.732$ & $0.725$ & $\bf{0.711}$ \\ 
$(200, 200)$ & $0.537$ & $0.531$ & $\bf{0.521}$ & $0.759$ & $0.752$ & $\bf{0.746}$ & $0.705$ & $0.696$ & $\bf{0.686}$ \\
\midrule
 & \multicolumn{9}{c}{Example 5 ($d = 0.4$)}\\
\midrule
$(50, 50)$ & $0.654$ & $0.642$ & $\bf{0.584}$ & $\bf{0.738}$ & $0.743$ & $0.741$ & $0.759$ & $0.756$ & $\bf{0.731}$ \\ 
$(50, 100)$ & $0.603$ & $0.595$ & $\bf{0.553}$ & $0.727$ & $0.731$ & $\bf{0.712}$ & $0.732$ & $0.731$ & $\bf{0.699}$ \\ 
$(100, 100)$ & $0.631$ & $0.607$ & $\bf{0.562}$ & $0.767$ & $0.768$ & $\bf{0.754}$ & $0.766$ & $0.755$ & $\bf{0.727}$ \\ 
$(100, 200)$ & $0.575$ & $0.580$ & $\bf{0.542}$ & $0.778$ & $0.774$ & $\bf{0.767}$ & $0.752$ & $0.752$ & $\bf{0.729}$ \\ 
$(200, 200)$ & $0.570$ & $0.568$ & $\bf{0.537}$ & $0.739$ & $0.742$ & $\bf{0.733}$ & $0.719$ & $0.720$ & $\bf{0.701}$ \\ 
\midrule
 & \multicolumn{9}{c}{Example 5 ($d = 0.3$)}\\
\midrule
$(50, 50)$ & $0.708$ & $0.686$ & $\bf{0.598}$ & $0.761$ & $0.764$ & $\bf{0.760}$ & $0.798$ & $0.793$ & $\bf{0.760}$ \\ 
$(50, 100)$ & $0.687$ & $0.653$ & $\bf{0.572}$ & $\bf{0.749}$ & $0.752$ & $\bf{0.749}$ & $0.782$ & $0.772$ & $\bf{0.742}$ \\ 
$(100, 100)$ & $0.695$ & $0.640$ & $\bf{0.556}$ & $\bf{0.773}$ & $0.774$ & $0.775$ & $0.805$ & $0.788$ & $\bf{0.757}$ \\ 
$(100, 200)$ & $0.682$ & $0.606$ & $\bf{0.533}$ & $\bf{0.742}$ & $0.747$ & $0.743$ & $0.772$ & $0.749$ & $\bf{0.719}$ \\ 
$(200, 200)$ & $0.680$ & $0.620$ & $\bf{0.549}$ & $\bf{0.719}$ & $0.720$ & $0.720$ & $0.759$ & $0.737$ & $\bf{0.711}$ \\  
\bottomrule
\end{tabularx}
\end{table}

In Table \ref{dpca:append:ts.forecast.e6}, we can see that FHFM and DPCA$(1)$ perform better than DPCA$(5)$ and DPCA$(10)$ for most $(P, T)$ cases. This shows that involving more lagged auto-covariances does not always provide more useful information for forecasting under certain situations. The performance of DPCA$(1)$ is worse than FHFM with $(P, T) = (50, 50), (50, 100), (100, 100)$ for $1$ step ahead forecasting and $(P, T) = (100, 100)$ for $5$ step ahead forecasting, and similar for other cases. These results show that when variation is large, it is necessary to conduct the second step in the FHFM in order to achieve more accurate forecasting.

\begin{table}[!htbp] \centering 
  \footnotesize
  \caption{5 Steps Ahead Forecasting RMSE} 
  \label{dpca:append:ts.forecast.5} 
\begin{tabularx}{\textwidth}{c *{9}{Y}}
\toprule
 & \multicolumn{3}{c}{Dependent FRMSE$(5)$} 
 & \multicolumn{3}{c}{Independent FRMSE$(5)$}
 & \multicolumn{3}{c}{Overall FRMSE$(5)$}\\
\cmidrule(lr){2-4} \cmidrule(lr){5-7} \cmidrule(l){8-10}
 $(P, T)$ & CPCA & DPCA & FHFM & CPCA & DPCA & FHFM & CPCA & DPCA & FHFM \\
\midrule
 & \multicolumn{9}{c}{Example 5 ($d = 0.5$)}\\
\midrule
$(50, 50)$ & $0.846$ & $\bf{0.839}$ & $\bf{0.839}$ & $0.891$ & $0.893$ & $\bf{0.879}$ & $0.899$ & $0.896$ & $\bf{0.888}$ \\ 
$(50, 100)$ & $0.833$ & $0.834$ & $\bf{0.827}$ & $0.876$ & $0.870$ & $\bf{0.864}$ & $0.883$ & $0.881$ & $\bf{0.874}$ \\ 
$(100, 100)$ & $0.823$ & $0.817$ & $\bf{0.811}$ & $0.885$ & $0.878$ & $\bf{0.867}$ & $0.883$ & $0.876$ & $\bf{0.866}$ \\ 
$(100, 200)$ & $0.802$ & $0.798$ & $\bf{0.794}$ & $0.872$ & $0.866$ & $\bf{0.862}$ & $0.866$ & $0.861$ & $\bf{0.857}$ \\ 
$(200, 200)$ & $0.790$ & $0.795$ & $\bf{0.788}$ & $0.862$ & $0.857$ & $\bf{0.854}$ & $0.852$ & $0.852$ & $\bf{0.847}$ \\
\midrule
 & \multicolumn{9}{c}{Example 5 ($d = 0.4$)}\\
\midrule
$(50, 50)$ & $0.864$ & $0.854$ & $\bf{0.838}$ & $0.876$ & $0.877$ & $\bf{0.867}$ & $0.901$ & $0.899$ & $\bf{0.886}$ \\ 
$(50, 100)$ & $0.826$ & $0.815$ & $\bf{0.803}$ & $0.875$ & $0.877$ & $\bf{0.868}$ & $0.885$ & $0.882$ & $\bf{0.872}$ \\ 
$(100, 100)$ & $0.815$ & $0.817$ & $\bf{0.807}$ & $0.870$ & $0.872$ & $\bf{0.860}$ & $0.874$ & $0.876$ & $\bf{0.865}$ \\ 
$(100, 200)$ & $0.787$ & $0.781$ & $\bf{0.772}$ & $0.866$ & $0.865$ & $\bf{0.857}$ & $0.859$ & $0.856$ & $\bf{0.849}$ \\ 
$(200, 200)$ & $0.792$ & $0.797$ & $\bf{0.786}$ & $0.865$ & $0.868$ & $\bf{0.852}$ & $0.861$ & $0.864$ & $\bf{0.851}$ \\ 
\midrule
 & \multicolumn{9}{c}{Example 5 ($d = 0.3$)}\\
\midrule
$(50, 50)$ & $0.877$ & $0.863$ & $\bf{0.835}$ & $\bf{0.885}$ & $0.889$ & $\bf{0.885}$ & $0.910$ & $0.909$ & $\bf{0.896}$ \\ 
$(50, 100)$ & $0.860$ & $0.841$ & $\bf{0.821}$ & $\bf{0.855}$ & $0.856$ & $\bf{0.855}$ & $0.883$ & $0.878$ & $\bf{0.870}$ \\ 
$(100, 100)$ & $0.890$ & $0.865$ & $\bf{0.833}$ & $\bf{0.847}$ & $0.848$ & $\bf{0.847}$ & $0.887$ & $0.880$ & $\bf{0.868}$ \\ 
$(100, 200)$ & $0.831$ & $0.804$ & $\bf{0.771}$ & $0.854$ & $0.856$ & $\bf{0.852}$ & $0.871$ & $0.864$ & $\bf{0.851}$ \\ 
$(200, 200)$ & $0.851$ & $0.818$ & $\bf{0.801}$ & $0.867$ & $0.868$ & $\bf{0.866}$ & $0.888$ & $0.877$ & $\bf{0.869}$ \\ 
\bottomrule
\end{tabularx}
\end{table}

\begin{table}[!htbp] \centering 
  \footnotesize
  \caption{1 Step and 5 Steps Ahead Forecasting RMSE, Example 6} 
  \label{dpca:append:ts.forecast.e6} 
\begin{tabularx}{\textwidth}{c *{8}{Y}}
\toprule
 & \multicolumn{4}{c}{Overall FRMSE$(1)$} 
 & \multicolumn{4}{c}{Overall FRMSE$(5)$}\\
\cmidrule(lr){2-5} \cmidrule(l){6-9}
 $(P, T)$ & DPCA$(1)$ & DPCA$(5)$ & DPCA$(10)$ & FHFM & DPCA$(1)$ & DPCA$(5)$ & DPCA$(10)$ & FHFM \\
\midrule
$(50, 50)$ & $1.371$ & $1.379$ & $1.376$ & $\bf{1.367}$ & $1.518$ & $\bf{1.518}$ & $\bf{1.518}$ & $1.527$ \\ 
$(50, 100)$ & $1.312$ & $1.332$ & $1.332$ & $\bf{1.309}$ & $\bf{1.495}$ & $\bf{1.496}$ & $\bf{1.495}$ & $1.497$ \\ 
$(100, 100)$ & $1.389$ & $1.403$ & $1.402$ & $\bf{1.384}$ & $1.490$ & $1.490$ & $1.490$ & $\bf{1.488}$ \\ 
$(100, 200)$ & $\bf{1.349}$ & $1.371$ & $1.370$ & $\bf{1.349}$ & $\bf{1.456}$ & $1.462$ & $1.462$ & $1.456$ \\ 
$(200, 200)$ & $\bf{1.329}$ & $1.355$ & $1.355$ & $\bf{1.329}$ & $\bf{1.481}$ & $1.489$ & $1.489$ & $\bf{1.481}$ \\ 
\bottomrule
\end{tabularx}
\end{table}

%%%%%%%%%%%%%%%%%%%%%%%%%%%%%%

\section{Appendix B: Proof of Theorems}
\label{append:cha:stepPCA.2}
This section contains proof of Theorem \ref{thm1}, as well as some lemmas that are used in these proofs. Before introducing the proofs, we provide some notations. For a $k\times k$ matrix $\bbF$, $\lambda_i(\bbF)$ indicates the $i$-th largest eigenvalue of the matrix $\bbF$. For a non-symmetric matrix $\bbS$, we use $\sigma_j\left(\bbS\right)$ to denote the singular value of the matrix $\bbS$, which corresponds to the $j$-th largest eigenvalue of the matrix $\bbS\bbS^{\top}$. 
Let $\left|\left|\bbF\right|\right|$ be the square root of the maximum eigenvalue of $\bbF\bbF^{\top}$ and $\left|\left|\bbF\right|\right|_{\min}$ be the square root of the smallest nonzero eigenvalue of the matrix $\bbF\bbF^{\top}$. The notation $a\asymp b$ means that $a=O(b)$ and $b=O(a)$. 

\subsection*{Useful Lemmas}
We will introduce four lemmas that will be used in the proofs of Theorem \ref{thm1}. Lemma \ref{lem0}, Lemma \ref{lem1} and Lemma \ref{lem2} are available results on eigenvalues of matrices under various decomposition. Lemma \ref{lem3} provides the orders of eigenvalues of the matrix $\bbL_1$ and $\bbL_2$, and the proof follows up the statement of Lemma \ref{lem3}.

\begin{lem}[Weyl's Theorem]\label{lem0}
Let $\{\lambda_i\left(\bbS\right): i=1, \ldots, P\}$ be eigenvalues of the matrix $\bbS$ in descending order and $\{\lambda_i\left(\bbJ\right): i=1, \ldots, P\}$ be eigenvalues of the matrix $\bbJ$ in descending order. Then 
\begin{eqnarray}
\left|\lambda_i\left(\bbS\right)-\lambda_i\left(\bbJ\right)\right|\leq\left|\left|\bbS-\bbJ\right|\right|. 
\end{eqnarray}

\end{lem}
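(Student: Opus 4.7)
The plan is to prove this classical Weyl inequality via the Courant--Fischer min--max characterization of eigenvalues of a symmetric matrix. Since the statement speaks of eigenvalues in descending order and uses $\|\cdot\|$ defined in the paper as $\sigma_1(\cdot)$, I will treat $\bbS$ and $\bbJ$ as (real) symmetric matrices of size $P\times P$, so that the $\lambda_i$'s are real and $\|\bbS-\bbJ\|$ equals the largest absolute eigenvalue of the symmetric perturbation $\bbE:=\bbS-\bbJ$.

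First I would recall the Courant--Fischer characterization: for any symmetric $P\times P$ matrix $\bbM$ with eigenvalues $\lambda_1(\bbM)\geq\cdots\geq\lambda_P(\bbM)$ and for every $i\in\{1,\ldots,P\}$,
\begin{eqnarray*}
\lambda_i(\bbM)=\max_{\substack{V\subset\mathbb{R}^P\\ \dim V=i}}\ \min_{\substack{x\in V\\ \|x\|_2=1}} x^{\top}\bbM x.
\end{eqnarray*}
Next, the key elementary bound I would invoke is that, for every unit vector $x\in\mathbb{R}^P$, $|x^{\top}\bbE x|\leq\|\bbE\|$, which follows directly from the variational characterization of the spectral norm of a symmetric matrix.

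Then I would carry out the main perturbation step. Writing $\bbS=\bbJ+\bbE$, pick an $i$-dimensional subspace $V^{\ast}\subset\mathbb{R}^P$ that attains the outer maximum in the Courant--Fischer formula for $\bbJ$, so that $\min_{x\in V^{\ast},\|x\|_2=1}x^{\top}\bbJ x=\lambda_i(\bbJ)$. For any unit vector $x\in V^{\ast}$,
\begin{eqnarray*}
x^{\top}\bbS x \;=\; x^{\top}\bbJ x + x^{\top}\bbE x \;\geq\; \lambda_i(\bbJ)-\|\bbE\|.
\end{eqnarray*}
Taking the minimum over unit $x\in V^{\ast}$ and then the maximum over $i$-dimensional subspaces gives $\lambda_i(\bbS)\geq\lambda_i(\bbJ)-\|\bbS-\bbJ\|$. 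Swapping the roles of $\bbS$ and $\bbJ$ and using $\|\bbJ-\bbS\|=\|\bbS-\bbJ\|$ yields $\lambda_i(\bbJ)\geq\lambda_i(\bbS)-\|\bbS-\bbJ\|$. Combining the two inequalities establishes $|\lambda_i(\bbS)-\lambda_i(\bbJ)|\leq\|\bbS-\bbJ\|$.

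There is no real obstacle here; the proof is standard and short. The only subtlety worth flagging is making explicit the symmetry assumption on $\bbS$ and $\bbJ$ (needed for both the Courant--Fischer identity and for the spectral-norm bound $|x^{\top}\bbE x|\leq\|\bbE\|$), since the lemma as stated speaks of ordered real eigenvalues and will be applied in the paper to symmetric covariance/auto-covariance matrices such as $\widehat{\bbL}_1$ and $\widehat{\bbL}_2$.
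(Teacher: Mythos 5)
Your proof is correct. The paper itself offers no argument for this lemma: it is stated as ``Weyl's Theorem'' and cited as an available classical result, so there is nothing internal to compare against. Your Courant--Fischer perturbation argument is the standard textbook proof and is complete: the variational characterization of $\lambda_i$, the bound $|x^{\top}\bbE x|\leq\|\bbE\|$ for unit $x$, and the symmetrization in the roles of $\bbS$ and $\bbJ$ together give exactly the stated inequality. You are also right to flag symmetry as the implicit hypothesis; the lemma is applied to symmetric matrices such as $\bbL_1$, $\widehat{\bbL}_1$, $\bbL_2$, $\widehat{\bbL}_2$ and $\bbM$, where your argument applies verbatim. One small observation worth adding if you want your proof to cover every invocation in the paper: in the proof of the eigenvalue-order lemma the authors also use the inequality in the form $\bigl|\sigma_j(\bbS)-\sigma_j(\bbJ)\bigr|\leq\|\bbS-\bbJ\|$ for \emph{singular values} of non-symmetric matrices (e.g.\ when peeling $\bbA\bbM_2+\boldsymbol{\Sigma}_{\varepsilon}(1)$ off $\boldsymbol{\Sigma}_y(1)$). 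That version follows from yours by the usual dilation trick, applying the symmetric result to $\begin{pmatrix}\mathbf{0} & \bbS\\ \bbS^{\top} & \mathbf{0}\end{pmatrix}$, or directly from the min--max characterization of singular values, but it is not literally the statement you proved.
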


\begin{lem}[Lemma S.1 of \cite{LY2012}]\label{lem1}
Let $\bbF$ be a $k\times k$ symmetric matrix such that 
\begin{eqnarray}
\bbF=\left(\begin{matrix}
\bbG & \bbH\\
\bbH^{\top} & \bbD\\
\end{matrix}
\right)
\end{eqnarray}
with $\bbG: k_1\times k_1$, $\bbD: k_2\times k_2$ and $\lambda_{k_1}\left(\bbG\right)>\lambda_1\left(\bbD\right)$. Note that $k_1+k_2=k$. Then for $1\leq j\leq k_2$ , 
\begin{eqnarray}
0\leq \lambda_j\left(\bbD\right)-\lambda_{k_1+j}\left(\bbF\right)\leq\frac{\lambda_1\left(\bbH\bbH^{\top}\right)}{\lambda_{k_1}\left(\bbG\right)-\lambda_j\left(\bbD\right)}. 
\end{eqnarray}
\end{lem}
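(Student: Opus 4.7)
The result is a standard block-matrix eigenvalue perturbation inequality, and I would split it into the two one-sided bounds. The lower bound $0\le\lambda_j(\bbD)-\lambda_{k_1+j}(\bbF)$ is immediate from the Cauchy interlacing theorem: since $\bbD$ is a principal $k_2\times k_2$ submatrix of the symmetric matrix $\bbF$, interlacing gives $\lambda_{k_1+j}(\bbF)\le\lambda_j(\bbD)$ for every $1\le j\le k_2$. The real content is the quadratic upper bound, and the cleanest route is through the Schur complement combined with the Haynsworth inertia additivity formula and Weyl's inequality.

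For the upper bound I would fix a test level $\lambda<\lambda_j(\bbD)$ and count how many eigenvalues of $\bbF$ exceed $\lambda$. Since $\lambda_j(\bbD)\le\lambda_1(\bbD)<\lambda_{k_1}(\bbG)$, the block $\bbG-\lambda\bbI$ is positive definite, so the Schur complement of $\bbG-\lambda\bbI$ in $\bbF-\lambda\bbI$ is well defined and equals $M(\lambda)-\lambda\bbI$, where
\begin{equation*}
M(\lambda):=\bbD-\bbH^{\top}(\bbG-\lambda\bbI)^{-1}\bbH.
\end{equation*}
Haynsworth's formula yields
\begin{equation*}
\#\{i:\lambda_i(\bbF)>\lambda\}=k_1+\#\{l:\lambda_l(M(\lambda))>\lambda\}.
\end{equation*}
Hence $\lambda_{k_1+j}(\bbF)>\lambda$ is equivalent to $\lambda_j(M(\lambda))>\lambda$. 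Since $(\bbG-\lambda\bbI)^{-1}$ is positive semidefinite with operator norm $(\lambda_{k_1}(\bbG)-\lambda)^{-1}$, Weyl's inequality applied to $M(\lambda)$ and $\bbD$ gives
\begin{equation*}
\lambda_j(M(\lambda))\ge\lambda_j(\bbD)-\bigl\|\bbH^{\top}(\bbG-\lambda\bbI)^{-1}\bbH\bigr\|\ge\lambda_j(\bbD)-\frac{\lambda_1(\bbH\bbH^{\top})}{\lambda_{k_1}(\bbG)-\lambda}.
\end{equation*}

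It then suffices to find the range of $\lambda$ making the right-hand side strictly greater than $\lambda$. Writing $\delta:=\lambda_{k_1}(\bbG)-\lambda_j(\bbD)>0$ and $\epsilon:=\lambda_j(\bbD)-\lambda$, the condition $\lambda_j(M(\lambda))>\lambda$ becomes $\epsilon(\delta+\epsilon)>\lambda_1(\bbH\bbH^{\top})$, a quadratic in $\epsilon$ whose positive root $\epsilon^{*}=\tfrac{1}{2}\bigl(-\delta+\sqrt{\delta^2+4\lambda_1(\bbH\bbH^{\top})}\bigr)$ satisfies $\epsilon^{*}\le\lambda_1(\bbH\bbH^{\top})/\delta$ (by squaring both sides of the elementary inequality one would verify). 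Thus for every $\epsilon>\lambda_1(\bbH\bbH^{\top})/\delta$ we conclude $\lambda_{k_1+j}(\bbF)>\lambda_j(\bbD)-\epsilon$, and letting $\epsilon$ decrease to that threshold delivers the desired bound $\lambda_j(\bbD)-\lambda_{k_1+j}(\bbF)\le\lambda_1(\bbH\bbH^{\top})/\bigl(\lambda_{k_1}(\bbG)-\lambda_j(\bbD)\bigr)$.

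The main obstacle I anticipate is the bookkeeping in the counting step: one has to verify that the Haynsworth inertia identity is applied with the correct sign convention and that ``at least $j$ eigenvalues of $M(\lambda)$ exceed $\lambda$'' really translates to $\lambda_j(M(\lambda))>\lambda$ under the decreasing ordering. Beyond that, the only non-trivial algebraic step is checking that $\epsilon^{*}\le\lambda_1(\bbH\bbH^{\top})/\delta$ so that the quadratic Schur-complement estimate reduces to the linear-in-$\lambda_1(\bbH\bbH^{\top})$ bound stated in the lemma; this would be a short squaring argument. An alternative route via the Courant--Fischer min--max characterization with a cleverly chosen $(k_1+j)$-dimensional test subspace is possible but appears messier than the Schur-complement-plus-inertia approach outlined above.
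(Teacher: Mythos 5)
The paper does not actually prove this lemma: it is imported verbatim as ``Lemma S.1 of Lam and Yao (2012)'' and listed among the ``available results'' used without proof, so there is no in-paper argument to compare yours against. Judged on its own, your proof is correct and complete. The interlacing step gives the lower bound exactly as you say, since $\bbD$ is a principal $k_2\times k_2$ submatrix of $\bbF$ and Cauchy interlacing yields $\lambda_{k_1+j}(\bbF)\le\lambda_j(\bbD)$. For the upper bound, the chain of equivalences is sound: for $\lambda<\lambda_j(\bbD)\le\lambda_1(\bbD)<\lambda_{k_1}(\bbG)$ the block $\bbG-\lambda\bbI$ is positive definite, Haynsworth's inertia additivity gives $n_+(\bbF-\lambda\bbI)=k_1+n_+\bigl(M(\lambda)-\lambda\bbI\bigr)$, and with the decreasing ordering this is precisely the equivalence $\lambda_{k_1+j}(\bbF)>\lambda\iff\lambda_j(M(\lambda))>\lambda$. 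The Weyl step is also right, since $\bbH^{\top}(\bbG-\lambda\bbI)^{-1}\bbH\succeq 0$ has norm at most $\lambda_1(\bbH\bbH^{\top})/(\lambda_{k_1}(\bbG)-\lambda)$, and your final quadratic bookkeeping checks out: with $h:=\lambda_1(\bbH\bbH^{\top})$, the inequality $\epsilon^{*}\le h/\delta$ reduces after squaring to $\delta^2+4h\le\delta^2+4h+4h^2/\delta^2$, which is trivially true. In fact your argument delivers the slightly sharper bound $\lambda_j(\bbD)-\lambda_{k_1+j}(\bbF)\le\tfrac{1}{2}\bigl(-\delta+\sqrt{\delta^2+4h}\bigr)$, of which the stated right-hand side is a clean relaxation. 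The one thing worth making explicit if you write this up in full is the ordering convention in the counting step (that $n_+(M(\lambda)-\lambda\bbI)\ge j$ is the same as $\lambda_j(M(\lambda))>\lambda$), which you already flag as the main bookkeeping concern; it holds exactly because eigenvalues are indexed in decreasing order throughout.
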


\begin{lem}[Lemma 3 of \cite{LYB2011}]\label{lem2}
Suppose $\bbF$ and $\bbF+\bbE$ are $P\times P$ symmetric matrices and that $\bbQ=\left(\bbQ_1, \bbQ_2\right)$, where $\bbQ_1$ has size $P\times k$ and $\bbQ_2$ has size $P\times (P-k)$, is an orthogonal matrix such that span$\left(\bbQ_1\right)$ is an invariant subspace for the matrix $\bbF$, that is, $\bbF\times span\left(\bbQ_1\right)\subset span\left(\bbF\right)$. Partition the matrices $\bbQ^{\top}\bbF\bbQ$ and $\bbQ^{T}\bbE\bbQ$ as follows. 
\begin{eqnarray}
\bbQ^{\top}\bbF\bbQ=\left(\begin{matrix}
\bbD_1 & \textbf{0}\\
\textbf{0} & \bbD_2\\
\end{matrix}
\right), \ \ 
\bbQ^{\top}\bbE\bbQ=\left(\begin{matrix}
\bbE_{11} & \bbE_{21}^{\top}\\
\bbE_{21} & \bbE_{22}\\
\end{matrix}
\right). 
\end{eqnarray}
If $sep\left(\bbD_1, \bbD_2\right):=\min_{\lambda\in\Lambda\left(\bbD_1\right), \mu\in\Lambda\left(\bbD_2\right)}\left|\lambda-\mu\right|>0$, where $\Lambda\left(\bbD_1\right)$ denotes the set of eigenvalues of the matrix $\bbD_1$ and $\left|\left|\bbE\right|\right|\leq sep\left(\bbD_1, \bbD_2\right)/5$, then there exists a matrix $\bbP: (P-k)\times k$ with 
\begin{eqnarray}
\left|\left|\bbP\right|\right|\leq \frac{4\left|\left|\bbE_{21}\right|\right|}{sep\left(\bbD_1, \bbD_2\right)}
\end{eqnarray}
such that the columns of the matrix $\widehat{\bbQ}_1=\left(\bbQ_1+\bbQ_2\bbP\right)\left(\bbI+\bbP^{\top}\bbP\right)^{-1/2}$ define an orthogonal basis for a subspace that is invariant for the matrix $\bbF+\bbE$. 
\end{lem}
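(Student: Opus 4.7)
The plan is to follow Stewart's classical strategy for perturbation of invariant subspaces: parametrize the perturbed invariant subspace by a matrix $\bbP$, rewrite the invariance condition as a Sylvester-type (Riccati) equation in $\bbP$, and solve that equation by a contraction argument in the ball $\{\bbP:\|\bbP\|\le 4\|\bbE_{21}\|/sep(\bbD_1,\bbD_2)\}$.

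First, I would translate everything into the coordinate system determined by the orthogonal matrix $\bbQ$. Since $\bbQ_1,\bbQ_2$ are orthonormal, any candidate matrix whose columns form an orthonormal basis for a $k$-dimensional subspace close to $\mathrm{span}(\bbQ_1)$ can be written as $\widehat{\bbQ}_1=(\bbQ_1+\bbQ_2\bbP)(\bbI+\bbP^{\top}\bbP)^{-1/2}$ for some $(P-k)\times k$ matrix $\bbP$, and this is the natural parametrization because $\bbQ_2^{\top}\bbQ_1=\mathbf{0}$ forces $\widehat{\bbQ}_1^{\top}\widehat{\bbQ}_1=\bbI_k$ automatically. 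The subspace $\mathrm{span}(\widehat{\bbQ}_1)$ is invariant for $\bbF+\bbE$ if and only if $\bbQ_2^{\top}(\bbF+\bbE)(\bbQ_1+\bbQ_2\bbP)=\bbP\,\bbQ_1^{\top}(\bbF+\bbE)(\bbQ_1+\bbQ_2\bbP)$ (after cancelling the invertible factor $(\bbI+\bbP^{\top}\bbP)^{-1/2}$ and a corresponding similarity). Using $\bbQ^{\top}\bbF\bbQ=\mathrm{diag}(\bbD_1,\bbD_2)$ and the partition of $\bbQ^{\top}\bbE\bbQ$, this reduces to the Riccati equation
\begin{equation*}
\bbD_2\bbP-\bbP\bbD_1=-\bbE_{21}-\bbE_{22}\bbP+\bbP\bbE_{11}+\bbP\bbE_{21}^{\top}\bbP.
\end{equation*}

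Second, I would solve this equation by a fixed-point argument. Let $\mathcal T(\bbX):=\bbD_2\bbX-\bbX\bbD_1$ be the Sylvester operator; because $\Lambda(\bbD_1)\cap\Lambda(\bbD_2)=\emptyset$ with separation at least $sep(\bbD_1,\bbD_2)$, $\mathcal T$ is invertible on the space of $(P-k)\times k$ matrices with $\|\mathcal T^{-1}\|\le 1/sep(\bbD_1,\bbD_2)$. Rewriting the Riccati equation as $\bbP=\Phi(\bbP)$ with
\begin{equation*}
\Phi(\bbP):=\mathcal T^{-1}\!\bigl(-\bbE_{21}-\bbE_{22}\bbP+\bbP\bbE_{11}+\bbP\bbE_{21}^{\top}\bbP\bigr),
\end{equation*}
I would show that on the closed ball $\mathcal B:=\{\bbP:\|\bbP\|\le \rho\}$ with $\rho=4\|\bbE_{21}\|/sep(\bbD_1,\bbD_2)$, the map $\Phi$ sends $\mathcal B$ into itself and is a strict contraction. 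The self-mapping estimate gives $\|\Phi(\bbP)\|\le sep^{-1}\bigl(\|\bbE_{21}\|+2\|\bbE\|\rho+\|\bbE\|\rho^2\bigr)$, which is bounded by $\rho$ precisely because the assumption $\|\bbE\|\le sep/5$ makes the coefficients on $\rho$ and $\rho^2$ absorb into the leading term. The contraction estimate for $\|\Phi(\bbP_1)-\Phi(\bbP_2)\|$ produces a Lipschitz constant of the form $sep^{-1}(2\|\bbE\|+2\|\bbE\|\rho)$, which is strictly less than $1$ under the same hypothesis. Banach's fixed-point theorem then yields a unique $\bbP\in\mathcal B$, giving both the existence of the solution and the bound $\|\bbP\|\le 4\|\bbE_{21}\|/sep(\bbD_1,\bbD_2)$.

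Finally, I would verify that the resulting $\widehat{\bbQ}_1=(\bbQ_1+\bbQ_2\bbP)(\bbI+\bbP^{\top}\bbP)^{-1/2}$ has orthonormal columns (immediate from $\bbQ_i^{\top}\bbQ_j=\delta_{ij}\bbI$) and that the Riccati equation is exactly the algebraic statement that $\mathrm{span}(\widehat{\bbQ}_1)$ is invariant under $\bbF+\bbE$. The main obstacle I expect is the bookkeeping in the self-mapping step: one needs the numerical constant $5$ in $\|\bbE\|\le sep/5$ to be sharp enough that both the quadratic term $\|\bbE\|\rho^2$ and the linear term $2\|\bbE\|\rho$ can be simultaneously absorbed while preserving the radius $\rho=4\|\bbE_{21}\|/sep$; checking these inequalities in the right order — and not losing a factor by careless use of $\|\bbE_{21}\|\le\|\bbE\|$ — is where the proof is most delicate. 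The contraction step is then routine.
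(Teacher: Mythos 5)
The paper does not actually prove this lemma: it is imported verbatim as Lemma 3 of \cite{LYB2011}, which is itself a specialization of Stewart's invariant-subspace perturbation theorem (Theorem V.2.1 of Stewart and Sun), so there is no in-paper argument to compare against. What you have written is a reconstruction of that standard proof, and it is essentially correct. Your reduction of the invariance condition to the Riccati equation $\bbD_2\bbP-\bbP\bbD_1=-\bbE_{21}-\bbE_{22}\bbP+\bbP\bbE_{11}+\bbP\bbE_{21}^{\top}\bbP$ is exactly right (using $\bbE_{12}=\bbE_{21}^{\top}$ from symmetry), and the constants do close: with $\rho=4\left|\left|\bbE_{21}\right|\right|/sep(\bbD_1,\bbD_2)$ and $\left|\left|\bbE_{11}\right|\right|,\left|\left|\bbE_{22}\right|\right|,\left|\left|\bbE_{21}\right|\right|\le\left|\left|\bbE\right|\right|\le sep/5$, hence $\rho\le 4/5$, the self-mapping bound is $sep^{-1}\left(\left|\left|\bbE_{21}\right|\right|+2\left|\left|\bbE\right|\right|\rho+\left|\left|\bbE\right|\right|\rho^{2}\right)\le\left(\tfrac14+\tfrac25+\tfrac{4}{25}\right)\rho=\tfrac{81}{100}\rho$, and your Lipschitz constant is at most $\tfrac25(1+\rho)\le\tfrac{18}{25}<1$, so Banach's theorem applies on the ball of radius $\rho$. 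The one step you assert without justification, and which is genuinely the delicate point, is $\left|\left|\mathcal T^{-1}\right|\right|\le 1/sep(\bbD_1,\bbD_2)$ for the Sylvester operator measured in the \emph{spectral} norm: for Hermitian $\bbD_1,\bbD_2$ this holds automatically in the Frobenius norm with $sep$ equal to the minimal eigenvalue gap, but in the operator norm it can fail by a factor up to $\pi/2$ when the two spectra interlace (Bhatia--Davis--McIntosh), and a factor of $\pi/2$ would destroy your $81/100$ margin. You should therefore either run the contraction in the Frobenius norm (which only changes the norm in which $\bbP$ is controlled and suffices for every use of the lemma in this paper), or observe that in all applications here $\Lambda(\bbD_1)$ and $\Lambda(\bbD_2)$ lie in disjoint intervals, in which case $\left|\left|\bbD_2\bbX-\bbX\bbD_1\right|\right|\ge sep\cdot\left|\left|\bbX\right|\right|$ holds in any unitarily invariant norm and your estimate stands as written.
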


\begin{lem}\label{lem3}
Under Assumptions \ref{assu1}-\ref{assu8}, we have 
\begin{eqnarray}
&&\lambda_j\left(\bbL_1\right)\asymp P^{2-2\delta_1}, \ \ j=1, \ldots, r_1.\label{sy31}\\
&&\lambda_{r_1+j}\left(\bbL_1\right)\asymp P^{2-2\delta_2}, \ \ j=1 , \ldots, r_2; \label{sy32}\\ 
&&\lambda_{r_1+r_2+i}\left(\bbL_1\right)=o_p\left(P^{1-\delta_1}\right), \ \ i=1, \ldots, P-(r_1+r_2). \label{sy33}\\ 
&&\lambda_i\left(\bbL_2\right)\asymp P^{2}, \ \ i=1, \ldots, r_2. \label{sy34}
\end{eqnarray}
\end{lem}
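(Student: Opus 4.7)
The plan is to translate each eigenvalue bound into a statement about a structured rank-$r_1$ or rank-$r_2$ matrix plus a controlled perturbation, and then to peel off eigenvalue groups via Weyl's inequality (Lemma \ref{lem0}) and the block-perturbation Lemma \ref{lem1}. The starting computation is to expand
\begin{align*}
\Sig_{\bby}(\ell) = \bbB\,\Sig_k^{(1)}(\ell)\,\bbB^{\top} + \bbA\,\Sig_k^{(2)}(\ell)\,\bbA^{\top} + \bbB\,\Sig_k^{(12)}(\ell)\,\bbA^{\top} + \bbA\,\Sig_k^{(21)}(\ell)\,\bbB^{\top} + \Sig_{\varepsilon}(\ell),
\end{align*}
where the factor/error cross terms vanish by Assumption \ref{assu5}. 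Using $\bbB^{\top}\bbB=\bbI_{r_1}$, $\bbA^{\top}\bbA=\bbI_{r_2}$, Assumptions \ref{assu2} and \ref{assu6} give operator-norm orders $P^{1-\delta_1}$, $P^{1-\delta_2}$ and $P^{1-\delta_2/2}$ for the pure and cross covariance blocks respectively, while Assumption \ref{assu4} delivers $\|\Sig_{\varepsilon}(\ell)\|=O(1)$.

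For (\ref{sy31}), set $\bbM:=\bbB\,\Sig_k^{(1)}(1)\,\Sig_k^{(1)}(1)^{\top}\bbB^{\top}$, whose nonzero eigenvalues coincide with those of $\Sig_k^{(1)}(1)\,\Sig_k^{(1)}(1)^{\top}$ and therefore all satisfy $\lambda_j(\bbM)\asymp P^{2-2\delta_1}$ for $j=1,\dots,r_1$ under Assumption \ref{assu2}. Expanding $\bbL_1=\Sig_{\bby}(1)\Sig_{\bby}(1)^{\top}$ yields $\bbM$ plus a sum of cross-products, each of which contains at least one factor of $\bbA\Sig_k^{(2)}$, of $\Sig_k^{(12)}$, or of $\Sig_{\varepsilon}$; bounding these by the spectral orders above, the total remainder has norm $o(P^{2-2\delta_1})$. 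Lemma \ref{lem0} then gives $\lambda_j(\bbL_1)=\lambda_j(\bbM)+o(P^{2-2\delta_1})\asymp P^{2-2\delta_1}$ for $j=1,\dots,r_1$.

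For (\ref{sy32}) and (\ref{sy33}), I would change basis to the orthogonal matrix $\bbO=(\bbB,\bbA_{\perp},\bbC)$, where $\bbA_{\perp}$ is the Gram--Schmidt residual of $\bbA$ against $\bbB$ and $\bbC$ completes an orthonormal basis of $\mathbb{R}^{P}$. In $\bbO^{\top}\bbL_1\bbO$, the top $r_1\times r_1$ block collects every term proportional to $\bbB\Sig_k^{(1)}$ or to $\bbB\Sig_k^{(12)}\bbA^{\top}$ and is of order $P^{2-2\delta_1}$; the middle $r_2\times r_2$ block is dominated by the $\bbA_{\perp}$-projection of $\bbA\Sig_k^{(2)}(1)\Sig_k^{(2)}(1)^{\top}\bbA^{\top}$ and is of order $P^{2-2\delta_2}$; the bottom block, which gathers only $\Sig_{\varepsilon}$-driven contributions, is $O(1)$. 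Because the gap between the top and middle blocks is $\asymp P^{2-2\delta_1}$ and dominates the off-diagonal couplings, Lemma \ref{lem1} separates the three eigenvalue groups, yielding $\lambda_{r_1+j}(\bbL_1)\asymp P^{2-2\delta_2}$ for $j=1,\dots,r_2$ and $\lambda_{r_1+r_2+i}(\bbL_1)=o(P^{1-\delta_1})$. Finally, for (\ref{sy34}), the identity $\bbu_t=\bbA\,\bbk_t^{(2)}+\bvar_t$ gives $\Sig_{\bbu}(0)=\bbA\,\Sig_k^{(2)}(0)\,\bbA^{\top}+\Sig_{\varepsilon}(0)$; with the strong-factor scaling $\|\Sig_k^{(2)}(0)\|\asymp P$ implicit in Assumption \ref{assu3} and $\|\Sig_{\varepsilon}(0)\|=O(1)$ from Assumption \ref{assu4}, the same Weyl-plus-Lemma \ref{lem1} recipe gives $\lambda_i(\bbL_2)\asymp P^{2}$ for $i=1,\dots,r_2$.

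The main obstacle is the cross-covariance term $\bbB\Sig_k^{(12)}(1)\bbA^{\top}$, which has norm $P^{1-\delta_2/2}$ at the level of $\Sig_{\bby}(1)$ and therefore, once squared into $\bbL_1$, formally contributes at order $P^{2-\delta_2}\gg P^{2-2\delta_2}$. The delicate point is that this contribution lives in the column space of $\bbB$, so in the rotated basis $(\bbB,\bbA_{\perp},\bbC)$ it is absorbed into the top $r_1\times r_1$ block rather than leaking into the middle block that controls the $r_2$ eigenvalues of order $P^{2-2\delta_2}$; verifying this localization --- and checking that the corresponding off-diagonal couplings remain small enough for Lemma \ref{lem1} to apply --- is the technical heart of the argument.
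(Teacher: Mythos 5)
Your proposal is correct and follows essentially the same route as the paper's proof: a Weyl-type perturbation bound (Lemma \ref{lem0}) applied to the low-rank-plus-error decomposition of $\Sig_{\bby}(1)$ for the top $r_1$ eigenvalues, the block-partition Lemma \ref{lem1} to isolate the next $r_2$ eigenvalues and show the remaining ones are $o_p(P^{1-\delta_1})$, and the standard static-factor result (the paper simply cites Proposition 2.1 of \cite{FLM2013}) for $\lambda_i(\bbL_2)\asymp P^2$. The ``localization'' obstacle you flag is resolved in the paper exactly as you anticipate: the cross-covariance is folded into $\bbM_1=\Sig_k^{(1)}(1)\bbB^{\top}+\Sig_k^{(12)}(1)\bbA^{\top}$ so that its $P^{1-\delta_2/2}$-order contribution stays in the column space of $\bbB$ (your top block), the paper works with the exact $(r_1+r_2)\times(r_1+r_2)$ matrix $\bbM$ after stripping $\Sig_{\varepsilon}(1)$ by Weyl rather than with your full three-block rotation, and the top--middle coupling enters only through the correction term $\sigma_1^2\left(\bbM_1\bbM_2^{\top}\right)/\left(\sigma_{r_1}^2\left(\bbM_1\right)-\sigma_j^2\left(\bbM_2\right)\right)$ of Lemma \ref{lem1}, which is shown to be of smaller order than $P^{2-2\delta_2}$.
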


\begin{proof}[proof of Lemma \ref{lem3}]
Recall the FHFM
\begin{eqnarray}\label{sy1}
\bby_t=\bbB\bbk_t^{(1)}+\bbA\bbk_t^{(2)}+\boldsymbol{\varepsilon}_t. 
\end{eqnarray}
From the expression (\ref{sy1}), the population covariance matrix of $\bby_t$ has the following decomposition 
\begin{eqnarray}\label{sy30}
\boldsymbol{\Sigma}_{y}(1)=\bbB\bbM_1+\bbA\bbM_2+\boldsymbol{\Sigma}_{\varepsilon}(1), 
\end{eqnarray}
where 
\begin{eqnarray*}
\bbM_1=\boldsymbol{\Sigma}_k^{(1)}(1)\bbB^{\top}+\boldsymbol{\Sigma}_k^{(12)}(1)\bbA^{\top},\ \ \ \ \bbM_2=\boldsymbol{\Sigma}_k^{(2)}(1)\bbA^{\top}+\boldsymbol{\Sigma}_k^{(21)}(1)\bbB^{\top}. 
\end{eqnarray*}
Based on Lemma \ref{lem0}, we can evaluate the $j$-th eigenvalue of $\bbL_1$ below, $j=1, \ldots, r_1$, 
\begin{align*}
\lambda_j\left(\bbL_1\right) &=\sigma^2_j\left(\boldsymbol{\Sigma}_y(1)\right)
\ge\left[\sigma_j\left(\bbB\bbM_1\right)-\sigma_1\left(\bbA\bbM_2+\boldsymbol{\Sigma}_{\varepsilon}(1)\right)\right]^2\\
&\ge \left[\sigma_j\left(\bbB\bbM_1\right)-\sigma_1\left(\bbA\bbM_2\right)-\sigma_1\left(\boldsymbol{\Sigma}_{\varepsilon}(1)\right)\right]^2\\
&= \left[\sigma_j\left(\bbM_1\right)-\sigma_1\left(\bbM_2\right)-\sigma_1\left(\boldsymbol{\Sigma}_{\varepsilon}(1)\right)\right]^2\\
&\asymp \sigma^2\left(\bbM_1\right)
\ge\left[\sigma_j\left(\boldsymbol{\Sigma}_k^{(1)}(1)\bbB^{\top}\right)-\sigma_1\left(\boldsymbol{\Sigma}_k^{(12)}(1)\bbA^{\top}\right)\right]^2\\
&\asymp \sigma^2_j\left(\boldsymbol{\Sigma}_k^{(1)}(1)\bbB^{\top}\right)
=\sigma^2_j\left(\boldsymbol{\Sigma}_k^{(1)}(1)\right)
\ge\left|\left|\boldsymbol{\Sigma}_k^{(1)}(1)\right|\right|^2_{\min}=P^{2-2\delta_1}, 
\end{align*}
where the first and second inequalities use Lemma \ref{lem0}; the second equality uses the matrices $\bbB$ and $\bbA$ being orthonormal assumed in Assumption \ref{assu1}; and the last inequality and equality both utilize Assumption \ref{assu2}. 

Hence, the first $r_1$ largest eigenvalues of the matrix $\bbL_1$ have the order of $P^{2-2\delta_1}$.

Now we consider the order of the left $p-r_1$ eigenvalues of the matrix $\bbL_1$. In terms of Weyl's inequality in Lemma \ref{lem0}, we use the eigenvalues of the matrix $\widetilde{\bbL}_1=\widetilde{\boldsymbol{\Sigma}}_y(1)\widetilde{\boldsymbol{\Sigma}}_y(1)$ to approximate the eigenvalues of $\bbL_1$, where $\widetilde{\boldsymbol{\Sigma}}_y(1)=\bbB\bbM_1+\bbA\bbM_2$. In fact, 
\begin{align}\label{sy12}
&\quad \left|\lambda_{r_1+j}\left(\bbL_1\right)-\lambda_{r_1+j}\left(\widetilde{\bbL}_1\right)\right|
\le \left|\left|\bbL_1-\widetilde{\bbL}_1\right|\right| \nonumber \\
&\leq \left|\left|\bbB\bbM_1+\bbA\bbM_2+\boldsymbol{\Sigma}_{\varepsilon}(1)\right|\right|\cdot\left|\left|\boldsymbol{\Sigma}_{\varepsilon}(1)\right|\right|
+\left|\left|\bbB\bbM_1+\bbA\bbM_2\right|\right|\cdot\left|\left|\boldsymbol{\Sigma}_{\varepsilon}(1)\right|\right| \nonumber \\
&= o\left(P^{1-\delta_1}\right), 
\end{align} 
where the last equality uses Assumption \ref{assu2}. 

Now we evaluate the order of $\lambda_{r_1+j}\left(\widetilde{\bbL}_1\right)$. Note that the rank of $\widetilde{\bbL}_1$ is no larger than $r_1+r_2$. So, when $j>r_1+r_2$, $\lambda_{r_1+j}\left(\widetilde{\bbL}_1\right)=0$. Hence, next we investigate the case of $j=1, \ldots, r_2$. 

Decompose $\widetilde{\bbL}_1$ in the following way. 
\begin{align}
\widetilde{\bbL}_1 &= \left(\begin{matrix}
\bbB & \bbA\\
\end{matrix}\right)\left(\begin{matrix}
\bbM_1\\
\bbM_2\\
\end{matrix}
\right)\left(\begin{matrix}
\bbM_1^{\top} & \bbM_2^{\top}
\end{matrix}\right)\left(\begin{matrix}
\bbB^{\top}\\
\bbA^{\top}\\
\end{matrix}
\right)\nonumber \\
&= \left(\begin{matrix}
\bbB & \bbA
\end{matrix}
\right)\left(\begin{matrix}
\bbM_1\bbM_1^{\top} & \bbM_1\bbM_2^{\top}\\
\bbM_2\bbM_1^{\top} & \bbM_2\bbM_2^{\top}\\
\end{matrix}
\right)\left(\begin{matrix}
\bbB^{\top}\\
\bbA^{\top}\\
\end{matrix}
\right). 
\end{align}
Because 
\begin{eqnarray}
\left(\begin{matrix}
\bbB^{\top}\\
\bbC^{\top}\\
\end{matrix}
\right)\left(\begin{matrix}
\bbB & \bbC
\end{matrix}
\right)=\bbI, 
\end{eqnarray}
we have $\lambda_j\left(\bbL_1\right)=\lambda_j\left(\bbM\right)$, where 
\begin{eqnarray}
\bbM=\left(\begin{matrix}
\bbM_1\bbM_1^{\top} & \bbM_1\bbM_2^{\top}\\
\bbM_2\bbM_1^{\top} & \bbM_2\bbM_2^{\top}\\
\end{matrix}
\right). 
\end{eqnarray}
It follows from Lemma \ref{lem2} and Assumption \ref{assu2} that
\begin{eqnarray}\label{sy10}
\lambda_{r_1+j}\left(\bbM\right)\leq\lambda_j\left(\bbM_2\bbM_2^{\top}\right)
\asymp P^{2-2\delta_2},
\end{eqnarray}
and
\begin{eqnarray}\label{sy11}
\lambda_{r_1+j}\left(\bbM\right)\geq\sigma_j^2\left(\bbM_2\right)-\frac{\sigma^2_1\left(\bbM_1\bbM_2^{\top}\right)}{\sigma^2_{r_1}\left(\bbM_1\right)-\sigma^2_j\left(\bbM_2\right)}\asymp P^{2-2\delta_2}, 
\end{eqnarray}
where the last $\asymp$ above uses the fact that $\sigma_{r_1}^2\left(\bbM_1\right)\asymp P^{2-2\delta_1}$, $\sigma_{j}^2\left(\bbM_2\right)\asymp P^{2-2\delta_2}$ and $\sigma_1^2\left(\bbM_1\bbM_2^{\top}\right)=O\left(P^{4-2(\delta_1+\delta_2)}\right)$. 

Combining (\ref{sy10}) and (\ref{sy11}), we can get 
\begin{eqnarray}\label{sy13}
\lambda_{r_1+j}\left(\bbM\right)\asymp P^{2-2\delta_2}, \ \ j=1, \ldots, r_2. 
\end{eqnarray}

Then it follows from (\ref{sy12}), (\ref{sy13}) and Assumption \ref{assu2} that
\begin{eqnarray}
&&\lambda_{r_1+j}\left(\bbL_1\right)\asymp P^{2-2\delta_2}, \ \ j=1 , \ldots, r_2;\\ 
&&\lambda_{r_1+r_2+i}\left(\bbL_1\right)=o_p\left(P^{1-\delta_1}\right), \ \ i=1, \ldots, P-(r_1+r_2).  
\end{eqnarray}

Finally, the order of $\lambda_i\left(\bbL_2\right)$ can be derived from Proposition 2.1 of \cite{FLM2013} directly. 
\end{proof}

\subsection*{Proof of Theorem \ref{thm1}}

\begin{proof}
Let $\bbE_{L}^{(1)}=\widehat{\bbL}_1-\bbL_1$ with $\widehat{\bbL}_1=\widehat{\boldsymbol{\Sigma}}_y(1)\widehat{\boldsymbol{\Sigma}}_y(1)^{\top}$ and $\bbL_1=\boldsymbol{\Sigma}_y(1)\boldsymbol{\Sigma}_y(1)^{\top}$. First we evaluate the order of $\left|\left|\bbE_L^{(1)}\right|\right|$. In terms of simple calculations, we have 
\begin{eqnarray}\label{sy51}
\left|\left|\bbE_{L}^{(1)}\right|\right|\leq \left|\left|\widehat{\boldsymbol{\Sigma}}_y(1)-\boldsymbol{\Sigma}_y(1)\right|\right|^2+2\left|\left|\widehat{\boldsymbol{\Sigma}}_y(1)\right|\right|\cdot\left|\left|\widehat{\boldsymbol{\Sigma}}_y(1)-\boldsymbol{\Sigma}_y(1)\right|\right|. 
\end{eqnarray}
In terms of (\ref{sy31}) in Lemma \ref{lem3}, we have $\left|\left|\boldsymbol{\Sigma}_y(1)\right|\right|\asymp P^{1-\delta_1}$. From (\ref{sy30}), we can get 
\begin{align}\label{sy40}
\left|\left|\widehat{\boldsymbol{\Sigma}}_y(1)-\boldsymbol{\Sigma}_y(1)\right|\right|
&\leq \left|\left|\widehat{\bbM}_1-\bbM_1\right|\right|+\left|\left|\widehat{\bbM}_2-\bbM_2\right|\right| \nonumber \\
&\quad +\left|\left|\widehat{\boldsymbol{\Sigma}}_{\varepsilon}(1)-\boldsymbol{\Sigma}_{\varepsilon}(1)\right|\right|, 
\end{align}
where 
\begin{eqnarray*}
\widehat{\bbM}_1=\widehat{\boldsymbol{\Sigma}}_k^{(1)}(1)\bbB^{\top}+\widehat{\boldsymbol{\Sigma}}_k^{(12)}(1)\bbA^{\top},\ \ \ \ \widehat{\bbM}_2=\widehat{\boldsymbol{\Sigma}}_k^{(2)}(1)\bbA^{\top}+\widehat{\boldsymbol{\Sigma}}_k^{(21)}(1)\bbB^{\top},  
\end{eqnarray*}
with $\widehat{\boldsymbol{\Sigma}}_k^{(1)}(1)$, $\widehat{\boldsymbol{\Sigma}}_k^{(12)}(1)$, $\widehat{\boldsymbol{\Sigma}}_k^{(2)}(1)$ and $\widehat{\boldsymbol{\Sigma}}_k^{(21)}(1)$ are the sample covariances corresponding to the population covariances $\boldsymbol{\Sigma}_k^{(1)}(1)$, $\boldsymbol{\Sigma}_k^{(12)}(1)$, $\boldsymbol{\Sigma}_k^{(2)}(1)$ and $\boldsymbol{\Sigma}_k^{(21)}(1)$, respectively. 

Hence, we evaluate (\ref{sy40}) further
\begin{align}\label{sy50}
\left|\left|\widehat{\boldsymbol{\Sigma}}_y(1)-\boldsymbol{\Sigma}_y(1)\right|\right|
&\le \left|\left|\widehat{\boldsymbol{\Sigma}}_k^{(1)}(1)-\boldsymbol{\Sigma}_k^{(1)}(1)\right|\right|+\left|\left|\widehat{\boldsymbol{\Sigma}}_k^{(12)}(1)-\boldsymbol{\Sigma}_k^{(12)}(1)\right|\right|\nonumber \\
&\quad +\left|\left|\widehat{\boldsymbol{\Sigma}}_k^{(2)}(1)-\boldsymbol{\Sigma}_k^{(2)}(1)\right|\right|+\left|\left|\widehat{\boldsymbol{\Sigma}}_k^{(21)}(1)-\boldsymbol{\Sigma}_k^{(21)}(1)\right|\right|\nonumber \\
&\quad +\left|\left|\widehat{\boldsymbol{\Sigma}}_{\varepsilon}(1)-\boldsymbol{\Sigma}_{\varepsilon}(1)\right|\right|\nonumber \\
&= O_p\left(\frac{P^{1-\delta_1}}{T^{1/2}}\right)+O_p\left(\frac{P^{1-\delta_2}}{T^{1/2}}\right)
+O_p\left(\frac{P}{T}\right)\nonumber \\
&= O_p\left(\max\left(\frac{P}{T}, \frac{P^{1-\delta_1}}{T^{1/2}}\right)\right), 
\end{align}
where the last second equality uses (A8) of \cite{LYB2011} which demonstrates $\left|\left|\widehat{\boldsymbol{\Sigma}}_{\varepsilon}(1)-\boldsymbol{\Sigma}_{\varepsilon}(1)\right|\right|=O_p\left(\frac{P}{T}\right)$. 

Then it follows from (\ref{sy51}) and (\ref{sy50}) that
\begin{eqnarray}
\left|\left|\bbE_{L}^{(1)}\right|\right|&=&O_p\left(\max\left(\frac{P^2}{T^2}, \frac{P^{2-2\delta_1}}{T}, \frac{P^{2-\delta_1}}{T}, \frac{P^{2-2\delta_1}}{T^{1/2}}\right)\right)\\
&=&O_p\left(\frac{P^{2-2\delta_1}}{T^{1/2}}\right), 
\end{eqnarray}
where the last equality uses the assumption that $P^{\delta_1}=o\left(T^{1/2}\right)$. 

Now we use Lemma \ref{lem2} to get the order of estimated factor loadings. In Lemma \ref{lem2}, let $\bbF$ and $\bbE$ be $\bbL_1$ and $\widehat{\bbL}_1-\bbL_1$, respectively. Let $k$ in Lemma 3 equal to $r_1$. Then we have, from (\ref{sy31}), 
\begin{eqnarray}
sep(\bbD_1, \bbD_2)\asymp P^{2-2\delta_1}, 
\end{eqnarray} 
where the defition of $sep(\cdot, \cdot)$ is provided in Lemma \ref{lem2}. Then $\bbE_{L}^{(1)}$ and \\$sep(\bbD_1, \bbD_2)$ satisfies
\begin{eqnarray}
\left|\left|\bbE_L^{(1)}\right|\right|=o_p\left(sep(\bbD_1, \bbD_2)\right)\leq \frac{sep(\bbD_1, \bbD_2)}{5}. 
\end{eqnarray} 
Hence Lemma \ref{lem2} tells us that, there exists a matrix $\bbP: (P-r_1)\times r_1$ such that   
\begin{eqnarray}\label{sy61}
\left|\left|\bbP\right|\right|
\leq\frac{4}{sep\left(\bbD_1, \bbD_2\right)}\cdot\left|\left|\left(\bbE_{L}^{(1)}\right)_{21}\right|\right|\leq\frac{4\left|\left|\bbE_L^{(1)}\right|\right|}{sep(\bbD_1, \bbD_2)}
\end{eqnarray}
and then $\widehat{\bbB}=\left(\bbB+\bbB^{c}\bbP\right)\left(\bbI+\bbP^{\top}\bbP\right)^{-1/2}$ is an estimator of $\bbB$ with $\bbB^{c}$ being $\bbQ_2$ in Lemma \ref{lem2}. In view of this, the rate of convergence for $\widehat{\bbB}$ can be calculated as
\begin{align}\label{sy60}
\left|\left|\widehat{\bbB}-\bbB\right|\right|
&= \left|\left|\left(\bbB+\bbB^{c}\bbP\right)\left(\bbI+\bbP^{T}\bbP\right)^{-1/2}-\bbB\right|\right|\nonumber \\
&= \left|\left|\left[\left(\bbB+\bbB\bbP\right)-\bbB\left(\bbI+\bbP^{\top}\bbP\right)^{1/2}\right]\left(\bbI+\bbP^{\top}\bbP\right)^{-1/2}\right|\right|\nonumber \\
&= \left|\left|\left(\bbB\left[\bbI-\left(\bbI+\bbP^{\top}\bbP\right)^{1/2}\right]+\bbB^{c}\bbP\right)\left(\bbI+\bbP^{\top}\bbP\right)^{-1/2}\right|\right|\nonumber \\
&\le \left|\left|\bbI-\left(\bbI+\bbP^{\top}\bbP\right)^{1/2}\right|\right|+\left|\left|\bbP\right|\right|\le 2\left|\left|\bbP\right|\right|,  
\end{align}
where the last second equality uses the fact that $\bbB$ and $\bbB^{c}$ are orthonormal; and the last equality uses the fact that 
\begin{eqnarray}
\left|\left|\bbI-\left(\bbI+\bbP^{\top}\bbP\right)^{1/2}\right|\right|&=&1-\left(1+\lambda_{\min}\left(\bbP^{\top}\bbP\right)\right)^{1/2}\\
&\leq& \lambda^{1/2}_{\max}\left(\bbP^{\top}\bbP\right). 
\end{eqnarray}
Therefore, by (\ref{sy60}) and (\ref{sy61}), we obtain
\begin{eqnarray}\label{sy80}
\left|\left|\widehat{\bbB}-\bbB\right|\right|=O_P\left(\frac{\left|\left|\bbE_L^{(1)}\right|\right|}{sep(\bbD_1, \bbD_2)}\right)=O_p\left(\frac{1}{\sqrt{T}}\right). 
\end{eqnarray}

For the second factor model part, the estimation is to conduct principal component analysis on the residual of the first step, i.e. estimating the factor model 
\begin{eqnarray}\label{sy70}
\widehat{\bbu}_t=\bbA\bbk_t^{(2)}+\boldsymbol{\eta}_t, \ \ t=1, 2, \ldots, T, 
\end{eqnarray}
where $\widehat{\bbu}_t=\bby_t-\widehat{\bbB}\widehat{\bbk}_t^{(1)}$, $\boldsymbol{\eta}_t$ is the new error component in the estimation at the second step.   

In order to derive the rate of convergence for $\widehat{\bbA}$, we also utilize Lemma \ref{lem2}. Now let $\bbF$ and $\bbE$ in Lemma \ref{lem2} are $\bbL_2$ and $\bbE_{L}^{(2)}:=\widehat{\bbL}_2-\bbL_2$, respectively. Let $k$ in Lemma \ref{lem2} equal to $r_2$. 

First, we evaluate $\left|\left|\bbE_{L}^{(2)}\right|\right|$. Based on (\ref{sy70}), we have 
\begin{eqnarray}\label{sy96}
\left|\left|\bbE_L^{(2)}\right|\right|
\leq\left|\left|\widehat{\boldsymbol{\Sigma}}_{\widehat{u}}(0)-\boldsymbol{\Sigma}_u(0)\right|\right|^2+2\left|\left|\boldsymbol{\Sigma}_u(0)\right|\right|\cdot\left|\left|\widehat{\boldsymbol{\Sigma}}_{\widehat{u}}(0)-\boldsymbol{\Sigma}_u(0)\right|\right|, 
\end{eqnarray}
where $\boldsymbol{\Sigma}_u(0)$ is the population covariance matrix of $\bbu_t$ and $\widehat{\boldsymbol{\Sigma}}_{\widehat{u}}(0)$ is the sample covariance matrix of $\widehat{\bbu}_t$. 
Based on Assumption \ref{assu3} and Proposition 2.1 of \cite{FLM2013}, we know that $\left|\left|\boldsymbol{\Sigma}_u(0)\right|\right|\asymp p$. For the term $\left|\left|\widehat{\boldsymbol{\Sigma}}_{\widehat{u}}(0)-\boldsymbol{\Sigma}_u(0)\right|\right|$, we evaluate its order as follows. 
\begin{align}\label{sy92}
\left|\left|\widehat{\boldsymbol{\Sigma}}_{\widehat{u}}(0)-\boldsymbol{\Sigma}_u(0)\right|\right|
&\le \left|\left|\widehat{\boldsymbol{\Sigma}}_{\widehat{u}}(0)-\widehat{\boldsymbol{\Sigma}}_u(0)\right|\right|+\left|\left|\widehat{\boldsymbol{\Sigma}}_u(0)-\boldsymbol{\Sigma}_u(0)\right|\right|\nonumber \\
&\le \frac{1}{T}\left|\left|\widehat{\bbB}\widehat{\bbK}^{(1)}-\bbB\bbK^{(1)}\right|\right|^2
+\frac{2}{T}\left|\left|\bbB\bbK^{(1)}\right|\right|\cdot\left|\left|\widehat{\bbB}\widehat{\bbK}^{(1)}-\bbB\bbK^{(1)}\right|\right|\nonumber \\
&\quad +\left|\left|\widehat{\boldsymbol{\Sigma}}_u(0)-\boldsymbol{\Sigma}_u(0)\right|\right|, 
\end{align}
where $\widehat{\bbK}^{(1)}=\left(\widehat{\bbk}_1^{(1)}, \widehat{\bbk}_2^{(1)}, \ldots, \widehat{\bbk}_T^{(1)}\right)$ and $\bbK^{(1)}=\left(\bbk_1^{(1)}, \bbk_2^{(1)}, \ldots, \bbk_T^{(1)}\right)$.

From Assumption \ref{assu2} and (\ref{sy80}), it can be derived that
\begin{eqnarray}\label{sy90}
\frac{1}{\sqrt{T}}\left|\left|\widehat{\bbB}\widehat{\bbK}^{(1)}-\bbB\bbK^{(1)}\right|\right|&=&O_p\left(\frac{P^{1/2-\delta_1/2}}{\sqrt{T}}\right),\nonumber \\ 
\frac{1}{\sqrt{T}}\left|\left|\bbB\bbK^{(1)}\right|\right|&=&O_p\left(P^{1/2-\delta_1/2}\right). 
\end{eqnarray}
Similar to (\ref{sy50}), we can also get 
\begin{align}\label{sy91}
\left|\left|\widehat{\boldsymbol{\Sigma}}_u(0)-\boldsymbol{\Sigma}_u(0)\right|\right|
&\le \left|\left|\widehat{\boldsymbol{\Sigma}}_k^{(2)}(0)-\boldsymbol{\Sigma}_k^{(2)}(0)\right|\right|+\left|\left|\widehat{\boldsymbol{\Sigma}}_{\varepsilon}(0)-\boldsymbol{\Sigma}_{\varepsilon}(0)\right|\right|\nonumber \\
&= O_p\left(\frac{P}{T^{1/2}}\right)+O_p\left(\frac{P}{T}\right). 
\end{align}
In view of (\ref{sy90}), (\ref{sy91}) and (\ref{sy92}), we can get 
\begin{eqnarray}\label{sy97}
\left|\left|\widehat{\boldsymbol{\Sigma}}_{\widehat{u}}(0)-\boldsymbol{\Sigma}_u(0)\right|\right|
&&=O_P\left(\frac{p^{1-\delta_1}}{\sqrt{T}}\right)+O_p\left(\frac{P}{\sqrt{T}}\right)
+O_p\left(\frac{P}{T}\right)\nonumber \\
&&=O_p\left(\frac{P}{\sqrt{T}}\right). 
\end{eqnarray}
The order of $\left|\left|\bbE_L^{(2)}\right|\right|$ is obtained from (\ref{sy96}) and (\ref{sy97}), i.e. 
\begin{eqnarray}
\left|\left|\bbE_L^{(2)}\right|\right|=O_p\left(\frac{P^2}{\sqrt{T}}\right). 
\end{eqnarray}
Moreover, it follows from Proposition 2.1 of \cite{FLM2013} that 
\begin{eqnarray}
sep(\bbD_1, \bbD_2)\asymp P^2. 
\end{eqnarray}
Here $\bbD_1$ in Lemma \ref{lem2} is the diagonal matrix corresponding to the orthogonal matrix $\bbA$. Then we can get from Lemma \ref{lem2} that 
\begin{eqnarray}
\left|\left|\widehat{\bbA}-\bbA\right|\right|
=O_p\left(\frac{\left|\left|\bbE_L^{(2)}\right|\right|}{sep\left(\bbD_2, \bbD_1\right)}\right)
=O_p\left(\frac{1}{\sqrt{T}}\right). 
\end{eqnarray}
\end{proof}

\subsection*{Proof of Theorem \ref{thm2}}
\begin{proof}
It follows from the model (\ref{dpca:eq:y1}) and (\ref{dpca:eq:y2}) that 
\begin{eqnarray*}
&&\frac{1}{P}\left|\left|\widehat{\bby}_{t-1}(1)-\bby_t\right|\right|^2
=\frac{1}{P}\sum^{P}_{i=1}\left(\widehat{y}_{i,t-1}(1)-y_{it}\right)^2\\
&=&\frac{1}{P}\sum^{P}_{i=1}\left(\widehat{\bbb}^{\top}_i\widehat{\bbk}_{t-1}^{(1)}(1)+\widehat{\bba}_i^{\top}\widehat{\bbk}^{(2)}_{t-1}(1)-\bbb_i\bbk_t^{(1)}-\bba^{\top}_i\bbk_t^{(2)}-\varepsilon_{it}\right)^2\\
&\leq&\frac{1}{P}\sum^{P}_{i=1}\left(\left(\widehat{\bbb}_i-\bbb_i\right)^{\top}\widehat{\bbk}^{(1)}_{t-1}(1)\right)^2
+\frac{1}{P}\sum^{P}_{i=1}\left(\left(\widehat{\bba}_i-\bba_i\right)^{\top}\widehat{\bbk}^{(2)}_{t-1}(1)\right)^2\\
&&+\frac{1}{P}\sum^{P}_{i=1}\left[\bbb_i^{\top}\left(\widehat{\bbk}_{t-1}^{(1)}(1)-\bbk^{(1)}_t\right)\right]^2
+\frac{1}{P}\sum^{P}_{i=1}\left[\bba_i^{\top}\left(\widehat{\bbk}_{t-1}^{(2)}(1)-\bbk^{(2)}_t\right)\right]^2
+\frac{1}{P}\sum^{P}_{i=1}\varepsilon_{it}^2\\
&=:&\sum^{5}_{d=1}C_d. 
\end{eqnarray*}
From Theorem \ref{thm1}, we have the orders of $C_1$ and $C_2$ are $\frac{1}{T}$. The orders of $C_3$ and $C_4$ reply on the forecasting error of two kinds of factors $\widehat{\bbk}^{(j)}_{t-1}(1)$ with $j=1, 2$. The last term $\varepsilon_{it}$ comes from the second factor model error in the second step. 
\end{proof}

\end{document}